\date{}
\begin{document}

\newtheorem{definition}{Definition}[section]
\newtheorem{prop}{Proposition}[section]
\newtheorem{thm}[prop]{Theorem}
\newtheorem{lem}[prop]{Lemma}
\newtheorem{cor}[prop]{Corollary}
\newtheorem{rem}{Remark}[section]
\newtheorem{ex}{Example}[section]
\newtheorem{proof}[prop]{Proof}

\title{On the existence of the stabilizing solution of generalized Riccati equations arising in zero sum stochastic difference games:\\ The time-varying case\footnote{The final version of this paper will be published in Journal of Difference Equations and Applications.}}

\author{Samir Aberkane$^{1,2}$ and Vasile Dragan$^{3,4}$
\\ [1.mm]
\small\it $^1$ Université de Lorraine, CRAN, UMR 7039, Campus Sciences, BP 70239\\[-1mm]
\small\it VandÏuvre-les-Nancy Cedex, 54506, France\\[-1mm]
\small\it $^2$ CNRS, CRAN, UMR 7039, France\\[-1mm]
\small {\tt samir.aberkane@univ-lorraine.fr}\\
\small\it $^3$ Institute of Mathematics "Simion Stoilow"\\[-1mm]
\small\it of the Romanian Academy \\[-1.mm]
\small\it P.O.Box 1-764, RO-014700, Bucharest, Romania   \\ [-1.mm]
\small {\tt Vasile.Dragan@imar.ro}\\
\small\it $^4$ The Academy of the Romanian Scientists, Str. Ilfov, 3, Bucharest, Romania}

\date{}
\maketitle

\begin{abstract}

In this paper, a large class of time-varying Riccati equations arising in stochastic dynamic games is considered. The problem of the existence and uniqueness of some globally defined solution, namely the bounded and stabilizing solution, is investigated. As an application of the obtained existence results, we address in a second step the problem of infinite-horizon zero-sum two players linear quadratic (LQ) dynamic game for a stochastic discrete-time dynamical system subject to both random switching of its coefficients and multiplicative noise. We show that in the solution of such an optimal control problem, a crucial role is played by the unique bounded and stabilizing solution of the considered class of generalized Riccati equations. 

\end{abstract}

\noindent\textbf{Keywords:} Stochastic Riccati equations - stabilizing solution - stochastic control - zero-sum dynamic games.

\section{Introduction}
This paper is devoted to the study of a globally defined solution of a large class of stochastic game-theoretic Riccati difference equations, namely the bounded and stabilizing solution. It is well known that the study of stochastic Riccati equations when compared to their deterministic counterparts offers some very specific challenges that could not be tackled  \textit{mutatis mutandis}. Several important results have been already reported in the literature in this context. We refer here for example to \cite{carte2010,abu,Damm:02} (and the references therein). The results reported in the afore mentioned references are related to Riccati equations with sign definite quadratic terms. The study of global solutions of such \textit{sign definite} nonlinear matrix equations has nowadays reached an advanced maturity state. In this study we are interested by stochastic Riccati equations with \textit{sign indefinite} quadratic terms. Such matrix equations are encountered when dealing with stochastic dynamic games and will be referred to as stochastic game theoretic Riccati equations in the rest of the paper. Unlike their sign definite counterpart, stochastic game theoretic Riccati equations have received less attention and there still exist substantial open problems in the characterization of globally defined solutions of such nonlinear matrix equations (see \cite{mou,Yu:2015} and the reference therein).  In this study, we are interested by a particular globally defined solution, namely the bounded and stabilizing solution. We establish some fundamental properties regarding the solutions of the considered class of matrix difference equations such as uniqueness, monotonicity, comparison as well as existence theorems. Our main contributions along this direction are summarized as follows: 
\begin{itemize}
\item \textbf{Uniqueness:}
Usually the uniqueness of the bounded and stabilizing solution of a Riccati type equation is proved employing some additional properties of this solution such as its maximality or its minimality in a class of admissible solutions. Such arguments cannot be used in the case of Riccati equations without definite sign. For the sign indefinite generalized Riccati equations addressed in this work, we propose an original proof for the uniqueness of the bounded and stabilizing solution. This proof does not rely on other properties of the stabilizing solutions than its boundedness. The main ingredient used here is the exponential stability of adequately defined positive operators on ordered Hilbert spaces. As a consequence of the uniqueness property it is obtained that the bounded and stabilizing solution is a periodic sequence if the coefficients of the Riccati equation under consideration are periodic sequences.
\item \textbf{Existence:} 
We believe that existence conditions of the stabilizing solution for this type of nonlinear matrix equations has not been reported yet in the literature and represents one of our major contributions. The lack of results regarding such a class of Riccati equations is due to the absence of the sign information of the quadratic part of the equation. Our proposed existence conditions are characterized by the non emptiness of an adequately defined matrix set ($\mathcal{A}^\Sigma$) and the stochastic detectability of an adequately defined auxiliary dynamic stochastic system, respectively. We will show that the non emptiness of $\mathcal{A}^\Sigma$ is also a necessary condition (and hence not conservative) while the stochastic detectability condition is a stochastic counterpart of a standard existence condition in deterministic Riccati equations theory.
\end{itemize}
As we already mentioned above, the class of Riccati equations we are dealing with in this work are closely related to stochastic dynamic games. Hence, in the second part of this paper, we will address the problem of infinite-horizon zero-sum two players LQ dynamic game for a stochastic discrete-time dynamical system subject to both random switching of its coefficients and multiplicative noise. We state and solve both the so called \textit{full state-feedback} and \textit{full information feedback} cases. Following a Riccati equation approach, we show that in the solution of such control problems, a crucial role is played by the unique bounded and stabilizing solution of the considered class of generalized Riccati difference equations. We construct optimal strategies in a feedback-loop form by virtue of the bounded and stabilizing solution to the considered Riccati equations verifying specific sign conditions for each admissible strategy. As a byproduct, our proposed solution to this control problem inherits the numerical tractability of the proposed existence conditions for the bounded and stabilizing solution of the game theoretic Riccati equations. We believe that no former results establishing saddle point relations for such games exist in the literature.\\

\noindent This paper is organized as follows: In Section 2 we give the problem setting. Section 3 addresses the uniqueness property of the stabilizing solution of the considered class of Riccati equations. In Section 4 the existence conditions are given. In Section 5 we state and solve the problem of infinite-horizon zero-sum LQ stochastic dynamic game.\\

\noindent\textbf{Notations}. ${\mathfrak N}=\{1,2,...,N\}$ where $N\geq 1$ is a fixed natural number.
$A^T$ stands for the transpose of the matrix $A$ and $Tr[A]$ denotes the trace of a matrix $A$.
The notation $X\geq Y$ ($X>Y$, respectively), where $X$ and $Y$ are symmetric matrices, means that $X-Y$ is positive semi-definite (positive
definite, respectively).
In block matrices, $\star$ indicates symmetric terms: $\left(%
\begin{array}{cc}
 A & B \\
  B^T & C \\
\end{array}%
\right)=\left(%
\begin{array}{cc}
  A & \star \\
  B^T & C \\
\end{array}%
\right)=\left(%
\begin{array}{cc}
  A & B \\
  \star & C \\
\end{array}%
\right)$.
The expression $MN\star$ is equivalent to $MNM^T$ while $M\star$ is equivalent to $MM^T$.
Consider the following space of matrices $\mathcal{M}_{n,m}^N=\mathbb{R}^{n\times m}\times\cdots\times \mathbb{R}^{n\times m}$.
In the case $n=m$ we shall write $\mathcal{M}_{n}^N$ instead of $\mathcal{M}_{n,n}^N$.

We introduce the following convention of notations:
\begin{itemize}
\item If $\mathbb{B}=\left(\begin{array}{ccc}B(1), & \cdots, & B(N)\end{array}\right)\in \mathcal{M}_{n,m}^N$ and $\mathbb{D}=\left(\begin{array}{ccc}D(1), & \cdots, & D(N)\end{array}\right)\in \mathcal{M}_{m,p}^N$ then $\mathbb{C}=\mathbb{B}\mathbb{D}\in\mathcal{M}_{n,p}^N$ where $\mathbb{C}=\left(\begin{array}{ccc}C(1), & \cdots, & C(N)\end{array}\right)$, $C(i)=B(i)D(i)$, $1\leq i\leq N$.
\item $\mathbb{B}^T=\left(\begin{array}{ccc}B^T(1), & \cdots, & B^T(N)\end{array}\right)\in \mathcal{M}_{m,n}^N$.
\item If $\mathbb{A}=\left(\begin{array}{ccc}A(1), & \cdots, & A(N)\end{array}\right)\in \mathcal{M}_{n}^N$ with $\det(A(i))\neq0$, $1\leq i\leq N$, then\\ $\mathbb{A}^{-1}=\left(\begin{array}{ccc}A^{-1}(1), & \cdots, & A^{-1}(N)\end{array}\right)$.
\end{itemize}
As usual, $\mathcal{S}_n\in \mathbb{R}^{n\times n}$ demotes the subspace of symmetric matrices of size $n\times n$ and $\mathcal{S}_n^N=\mathcal{S}_n\times\cdots\times \mathcal{S}_n$.
$\mathcal{S}_n^N$ is a finite dimenional real Hilbert space with respect to the inner product:
\begin{eqnarray}\label{e1a}
\langle {\mathbb X}, {\mathbb Y}\rangle =\sum_{i=1}^N Tr [X(i)Y(i)]
\end{eqnarray}
for all ${\mathbb X}=(X(1), X(2),...,X(N)), {\mathbb Y}=(Y(1),Y(2),...,Y(N))\in \mathcal{S}_n^N$.
Throughout the paper ${\mathbb E}\left[\cdot\right]$ stands for the mathematical expectation and ${\mathbb E}\left[ \cdot | \theta_t=i\right]$ denotes the conditional expectation with respect to the event $\{\theta_t=i\}$.

\section{Problem setting}

\subsection{Model description}

\noindent Consider the following nonlinear difference equation on the space $\mathcal{S}_n^N$:
\begin{equation}\label{GRDE}
 \mathbb{X}(t)=\Pi_{1}(t)[\mathbb{X}(t+1)]+\mathbb{M}(t)-\Big[\Pi_{2}(t)[\mathbb{X}(t+1))]+\mathbb{L}(t)\Big]\Big[\mathbb{R}(t)+
 \Pi_3(t)[\mathbb{X}(t+1)]\Big]^{-1}\star
 \end{equation}
$t\geq 0$, with unknown function $\mathbb{X}(t)=\left(\begin{array}{ccc}X(t,1), & \cdots, & X(t,N)\end{array}\right)$.
Here,\\
$\Pi_k(t)[\mathbb{X}]=\left(\begin{array}{ccc}\Pi_k(t)[\mathbb{X}](1), & \cdots, & \Pi_k(t)[\mathbb{X}](N)\end{array}\right)$, $1\leq k \leq 3$, are defined by:
\begin{equation}\label{PI}
\begin{cases}
\Pi_1(t)[\mathbb{X}](i)=\sum_{j=0}^rA_j^T(t,i)\Xi(t)[\mathbb{X}](i)A_j(t,i)\\
\Pi_2(t)[\mathbb{X}](i)=\sum_{j=0}^rA_j^T(t,i)\Xi(t)[\mathbb{X}](i)B_j(t,i)\\
\Pi_3(t)[\mathbb{X}](i)=\sum_{j=0}^rB_j^T(t,i)\Xi(t)[\mathbb{X}](i)B_j(t,i)\\
\Xi(t)[\mathbb{X}](i)=\overset{N}{\underset{j=1}{\sum}}p_t(i,j)X(j)
\end{cases}
\end{equation}
$1\leq i \leq N$, for all $\mathbb{X}=\left(\begin{array}{ccc}X(1), & \cdots, & X(N)\end{array}\right)\in \mathcal{S}_{n}^N$.
In (\ref{GRDE}) ${\mathbb M}(t)=(M(t,1),...., M(t,N))\in{\cal S}_n^N$, ${\mathbb R}(t)=(R(t,1),...., R(t,N))\in{\cal S}_m^N$, ${\mathbb L}(t)=(L(t,1),..., L(t,N))\in{\cal M}_{n,m}^N$.
Regarding the coefficients of the equation (\ref{GRDE}) we make the assumption:
\begin{itemize}
\item[\textbf{H1)}]
\begin{itemize}
\item[a)] $\{A_j(t,i)\}_{t\geq 0}\subset {\mathbb R}^{n\times n}$ and $\{B_{j}(t,i)\}_{t\geq 0}\subset {\mathbb R}^{n\times m}$, $0\leq j\leq r$,
$\{M(t,i)\}_{t\geq 0} \subset {\cal S}_n$, $\{L(t,i)\}_{t\geq 0}\subset {\mathbb R}^{n\times m}$, $\{R(t,i)\}_{t\geq 0}\subset {\cal S}_m$ for all $ i\in \mathfrak{N}$ are bounded matrix valued sequences.

\item[b)] For each $t\geq 0$  $P_t := (p_t(i,j))_{(i,j)\in{\mathfrak N}\times {\mathfrak N}}$ is a nondegenerate stochastic matrix, i.e.
$p_t(i,j)\geq 0$, $\sum_{k=1}^N p_t(i,k)=1$, $\sum_{k=1}^N p_t(k,j)>0$ for all $i,j \in{\mathfrak N}$.
\end{itemize}
\end{itemize}

\noindent Since the discrete-time backward nonlinear equation (\ref{GRDE}) contains as special cases a large number of discrete time Riccati type equations arising in diverse linear quadratic control problems in both the deterministic and stochastic frameworks, we shall call this type of equations {\bf generalized discrete time Riccati equations (GDTRE)}.

\begin{rem}\label{R2.1}
The GDTRE (\ref{GRDE}) is associated to a quadruple
$$\Sigma=\left(\{{\bf A}(t)\}_{t\geq 0}, \{{\bf B}(t)\}_{t\geq 0}, \{{P_t}\}_{t\geq 0}, \{{\mathbb Q}(t)\}_{t\geq 0} \right)$$
 where ${\bf A}(t)=\left( {\mathbb A}_0(t), {\mathbb A}_1(t),..., {\mathbb A}_r(t)\right)$ with
${\mathbb A}_j(t)=(A_j(t,1),...., A_j(t,N))\in{\cal M}_n^N$,\break
${\bf B}(t)=\left( {\mathbb B}_0(t), {\mathbb B}_1(t),..., {\mathbb B}_r(t)\right)$ with
${\mathbb B}_j(t)=(B_j(t,1),...., B_j(t,N))\in{\cal M}_{n,m}^N$,\\
 ${\mathbb Q}(t)=(Q(t,1),...., Q(t,N))\in{\cal S}_{n+m}^N$ with
\begin{eqnarray}\label{e3a}
Q(t,i)=\left(\begin{array}{cc} M(t,i)&L(t,i)\\L^T(t,i)&R(t,i)\end{array}\right)
\end{eqnarray}
and $P_t\in{\mathbb R}^{N\times N}$ is the stochastic matrix satisfying the assumption {\bf H1)}.
\end{rem}
The GDTRE (\ref{GRDE}) occurs in connection with a LQ control problem described by the controlled system
\begin{equation}\label{equa1}
       x(t+1)=A_0(t, \theta_t)x(t)+B_{0}(t, \theta_t)u(t)+\sum_{k=1}^rw_k(t)\left(A_k(t, \theta_t)x(t)+B_{k}(t, \theta_t)u(t)\right)
\end{equation}
and the quadratic performance criterion
\begin{eqnarray}\label{cost}
\mathcal{J}(t_0,x_0,u)={\mathbb E}\left[\sum_{t=t_0}^\infty\left(\begin{array}{c}x_u(t) \\u(t)\end{array}\right)^T\left(\begin{array}{cc}M(t,\theta_t) & L(t,\theta_t) \\\star & R(t,\theta_t)\end{array}\right)\star\right]
\end{eqnarray}
where $x_u(t)$ is the solution of the initial value problem (IVP) described by the controlled system (\ref{equa1}), $t\geq t_0\geq 0$, and $x(t_0)=x_0$.
In (\ref{equa1}) $\{w_t\}_{t\geq 0}$, $\left(w_t=\left(w_1(t),\cdots,w_r(t)\right)^T\right)$ is a sequence of independent random vectors and the triple $(\{\theta_t\}_{t\geq 0},\{P_t\}_{t\geq 0},\mathfrak{N})$ is a time non-homogeneous Markov chain defined on a given probability space $(\Omega,\mathcal{F},\mathrm{P})$ with the finite states set $\mathfrak{N}=\{1,\cdots, N\}$ and the sequence of transition probability matrices $\{P_t\}_{t\geq 0}$.

\noindent Regarding the processes $\{\theta_t\}_{t\geq 0}$ and $\{w_t\}_{t\geq 0}$, the following assumptions are made:
\begin{itemize}
\item[\textbf{H2)}]  $\{w_t\}_{t\geq 0}$ is a sequence of independent random vectors with the following properties:
\begin{equation*}
{\mathbb E}\left[w(t)\right]=0,\quad {\mathbb E}\left[w(t)w^T(t)\right]=I_r,\quad t\geq 0
\end{equation*}
$I_r$ being the identity matrix of size $r$.

\item[\textbf{H3)}]
\begin{itemize}
\item[a)] For each $t\geq 0$ the $\sigma$-algebra $\mathcal{F}_t$ is independent of the $\sigma$-algebra $\mathcal{G}_t$, where $\mathcal{F}_t=\sigma(w(s);0\leq s\leq t)$ and $\mathcal{G}_t=\sigma(\theta_s;0\leq s\leq t)$.

\item[b)] $\pi_0(i):={\cal P}\{\theta_0=i\}>0$ for all $i\in{\mathfrak N}$.
\end{itemize}
\end{itemize}
The initial probability distributions $\pi_0=\left( \pi_0(1),..., \pi_0(N)\right)$ satisfying the assumption {\textbf H3)} b) will be called {\bf admissible initial probability distributions} of the Markov chain.

\noindent For each $t \geq 0$, we denote ${\mathcal{H}}_t=\mathcal{F}_{t}\vee\mathcal{G}_t$, and for each $t\geq 1$ we denote $\tilde{\mathcal{H}}_t=\mathcal{F}_{t-1}\vee\mathcal{G}_t$ and $\tilde{\mathcal{H}}_0=\sigma(\theta_0)$. Here, ${\mathcal{H}}_t$ is the smallest $\sigma$-algebra containing the $\sigma$-algebras $\mathcal{F}_{t}$ and $\mathcal{G}_{t}$ and $\tilde{\mathcal{H}}_t$ is the smallest $\sigma$-algebra containing the $\sigma$-algebras $\mathcal{F}_{t-1}$ and $\mathcal{G}_{t}$, respectively.\\
\noindent In the following, $\ell_{\tilde{\mathcal{H}}}^2\{t_0,\tau;\mathbb{R}^{m}\}$ stands for the space of all finite sequences $\{y_t\}_{t_0\leq t\leq \tau}$ of $m$-dimensional random vectors with the properties that $\forall$ $t_0\leq t\leq \tau$, $y_t$ is $\tilde{\mathcal{H}}_t$-measurable and $\sum_{t=t_0}^\tau {\mathbb E}\left[|y_t|^2\right]<\infty$.
Also $\ell_{\tilde{\mathcal{H}}}^2\{0,\infty;\mathbb{R}^{m}\}$ is the space of all sequences $\{y_t\}_{t\geq0}$ of $m$-dimensional random vectors with the properties that $\forall t \geq 0$, $y_t$ is $\tilde{\mathcal{H}}_t$-measurable and $\sum_{t=0}^\infty {\mathbb E}\left[|y_t|^2\right]<\infty$.
Both $\ell_{\tilde{\mathcal{H}}}^2\{t_0,\tau;\mathbb{R}^{m}\}$ and  $\ell_{\tilde{\mathcal{H}}}^2\{0,\infty;\mathbb{R}^{m}\}$ are real Hilbert spaces.\\

\noindent The norm induced by the usual inner product on each of these Hilbert spaces are:
\begin{equation*}
\|y\|_{\ell_{\tilde{\mathcal{H}}}^2\{t_0,\tau;\mathbb{R}^{m}\}}=\left(\sum_{t=t_0}^\tau {\mathbb E}\left[|y_t|^2\right]\right)^{\frac{1}{2}},\;\text{for all $y \in \ell_{\tilde{\mathcal{H}}}^2\{t_0,\tau;\mathbb{R}^{m}\}$}
\end{equation*}
and
\begin{equation*}
\|\bar{y}\|_{\ell_{\tilde{\mathcal{H}}}^2\{0,\infty;\mathbb{R}^{m}\}}=\left(\sum_{t=0}^\infty {\mathbb E} \left[|\bar{y}_t|^2\right]\right)^{\frac{1}{2}},\;\text{for all $\bar{y} \in \ell_{\tilde{\mathcal{H}}}^2\{0,\infty;\mathbb{R}^{m}\}$}
\end{equation*}

\noindent We define the class ${\cal{U}}_{ad}(t_0, x_0)$ of admissible controls which consists in all stochastic processes,
$u=\{u(t)\}_{t\geq 0}\in \ell_{\tilde{\mathcal{H}}}^2\{0,\infty;\mathbb{R}^{m}\}$ with the property that $x_u(\cdot)\in \ell_{\tilde{\mathcal{H}}}^2\{0,\infty;\mathbb{R}^{n}\}$.

\begin{rem}\label{R2.2}
If $u\in{\cal U}_{ad}(t_0,x_0)$ then  $\lim\limits_{t\rightarrow\infty}{\mathbb E}[|x_u(t)|^2 | \theta_{t_0}=i]=0$ for all $i\in{\mathfrak N}$.
Indeed  if $u\in{\cal U}_{ad}(t_0,x_0)$ then $\sum_{t=t_0}^\infty {\mathbb E}[|x_u(t)|^2]<\infty$.
Hence, $\lim\limits_{t\rightarrow\infty}{\mathbb E}[|x_u(t)|^2]=0$. On the other hand, according with the assumptions {\bf H1)} b) and  {\bf H3)} b) one may prove inductively that $\pi_{t_0}(i):= {\cal P}\{\theta_{t_0}=i\}>0$ for all $(t_0,i)\in{\mathbb Z}_+\times {\mathfrak N}$.
The conclusion follows now from $0\leq {\mathbb E}[|x_u(t)|^2|\theta_{t_0}=i]\leq \pi_{t_0}^{-1}(i){\mathbb E}[|x_u(t)|^2]$.
\end{rem}

\subsection{The stabilizing solution of GDTRE}

If $\mathbb{F}(t)=\left(\begin{array}{ccc}F(t,1), & \cdots, & F(t,N)\end{array}\right)$ with $F(t,i)\in \mathbb{R}^{m\times n}$, we introduce the following Lyapunov type operator $\mathcal{L}_\mathbb{F}(t):\mathcal{S}_n^N\rightarrow \mathcal{S}_n^N$, defined by:
\begin{align}\label{Lyap}
\mathcal{L}_\mathbb{F}(t)[\mathbb{X}](i)&=\sum_{k=0}^r\sum_{j=1}^N p_t(j,i)\big(A_k(t,j)+B_k(t,j)F(t,j)\big)X(j)\big(A_k(t,j)+B_k(t,j)F(t,j)\big)^T,\; 1\leq i\leq N
\end{align}

\begin{definition}\label{D2.1}
A globally defined solution ${\mathbb X}_s:{\mathbb{Z}}_+\rightarrow{\cal{S}}_n^N$ of (\ref{GRDE}) is named {\em stabilizing solution} if:
\begin{itemize}
\item[a)] \begin{eqnarray}\label{11a}
\underset{t\geq 0}{\inf} |(R(t,i)+\Pi_3(t)[{\mathbb X}_s(t+1)](i))|>0, \forall i\in{\mathfrak N}
 \end{eqnarray}

 \item[b)] the linear difference equation on ${\cal{S}}_n^N$
\begin{equation}\label{e2.7}
{\mathbb{Y}}(t+1)=\mathcal{L}_{\mathbb{F}_s}(t)[\mathbb{Y}(t)]
\end{equation}
is exponentially stable (ES), where $\mathcal{L}_{\mathbb{F}_s}(t)$ is defined as in (\ref{Lyap}) taking $F_s(t,i)$ instead of $F(t,i)$ with:
\begin{equation}\label{e2.8}
F_s(t,i)=-\left[R(t,i)+\Pi_3(t)[\mathbb{X}_s(t+1)](i)\right]^{-1}\left[\Pi_2^T(t)[\mathbb{X}_s(t+1)](i)+L^T(t,i)\right]
\end{equation}
\end{itemize}
\end{definition}

\begin{rem}\label{R2.3}
According with Definition 3.1 and Theorem 3.4 from \cite{carte2010} we obtain that under assumptions {\textbf H1)} b), {\textbf H2)} and {\textbf H3)} a), ${\mathbb X}_s(\cdot)$ is a stabilizing solution of the GDTRE (\ref{GRDE}) if and only if it satisfies (\ref{11a}) and additionally the system of discrete time linear stochastic equations
\begin{eqnarray}\label{11b}
x(t+1)=[A_0(t,\theta_t)+B_0(t,\theta_t)F_s(t,\theta_t)+\sum_{k=1}^r w_k(t)(A_k(t,\theta_t)+B_k(t,\theta_t)F_s(t,\theta_t))]x(t)
\end{eqnarray}
is exponentially stable in mean square with conditioning of first kind (ESMS-CI) i.e. there exist $\beta\geq 1, \alpha\in (0,1)$ such that the solutions of this system satisfy
$${\mathbb E}[|x(t)|^2 |\theta_{t_0}=i]\leq \beta \alpha^{t-t_0}{\mathbb E}[|x(t_0)|^2|\theta_{t_0}=i]$$
for all $t\geq t_0$, $i\in{\mathfrak N}$.
\end{rem}

It is worth pointing out that we do not know a priori neither an initial value nor a boundary value of a stabilizing solution of  the GDTRE (\ref{GRDE}).
That is why, it becomes an important problem to derive necessary and sufficient conditions or at least sufficient conditions that guarantee the existence of a stabilizing solution of this kind of Riccati equations.
In the derivation of some conditions for the existence of a bounded and  stabilizing solution of some GDTRE of type (\ref{GRDE}) an important role is played by the sign of the matrices
\begin{eqnarray}\label{12a}
{\cal R}(t, {\mathbb X}_s(t+1), i):= R(t,i)+\Pi_3(t)[{\mathbb X}_s(t+1)](i).
\end{eqnarray}
Thus the problem of the existence and uniqueness of the bounded and stabilizing solution of a GDTRE of type (\ref{GRDE}) satisfying a sign condition of the form
\begin{eqnarray}\label{12b}
{\cal R}(t,{\mathbb X}_s(t+1), i)\geq \nu I_m>0
\end{eqnarray}
for all $(t,i)\in{\mathbb Z}_+\times {\mathfrak N}$ was studied in Chapter 5 from \cite{carte2010} in the finite dimensional case and in \cite{ungdrmo} in the infinite dimensional case.
The problem of the existence and uniqueness of the bounded and stabilizing solution of a GDTRE of type (\ref{GRDE}) satisfying a sign condition of the form
\begin{eqnarray}\label{12c}
{\cal R}(t,{\mathbb X}_s(t+1), i)\leq -\nu I_m < 0
\end{eqnarray}
for all $(t,i)\in{\mathbb Z}_+\times {\mathfrak N}$ was studied especially with the stochastic $H_{\infty}$ control problem in the so called Bounded Real Lemma (see e.g. \cite{morozan1998, aosr2010} for the case of homogeneous Markov chain and \cite{ma,samirieee,SV} for the case of nonhomogeneous Markov chain).

The goal of the present work is to provide a set of conditions that guarantee the existence and uniqueness of the bounded and stabilizing solution of a GDTRE of type (\ref{GRDE}) satisfying the sign condition:
\begin{eqnarray}\label{12d}
{\rm Inertia}[{\cal R}(t,{\mathbb X}_s(t+1), i)]={\rm Inertia}[{\mathfrak I}]
\end{eqnarray}
where ${\mathfrak I}=diag \{-I_{m_1}, I_{m_2}\}$  and ${\rm Inertia}[\cdot]$ denotes the inertia of a matrix.
A condition of type (\ref{12d}) means that the matrix ${\cal R}(t,{\mathbb X}_s(t+1), i)$ has $m_1$ negative eigenvalues and $m_2$ positive eigenvalues with $m_1+m_2=m$, $m_k\geq 1$ not depend upon $(t,i)\in{\mathbb Z}_+\times {\mathfrak N}$.

Often we shall call GDTRE of type (\ref{GRDE}) with indefinite sign of quadratic part, the Riccati equations satisfying a sign condition of type (\ref{12d}).
First, in the next section, we prove the uniqueness of the bounded and stabilizing solution of a GDTRE of type (\ref{GRDE}).
Further, we shall study the problem of the existence of the bounded and stabilizing solution of this kind of discrete-time Riccati equations satisfying the sign condition (\ref{12d}).
Finally, we shall show how one may use the bounded and stabilizing solution of a GDTRE of type (\ref{GRDE}) to construct an equilibrium strategy in a linear quadratic discrete-time dynamic game.

\section{Uniqueness conditions}

Before stating and proving the main result of this section, we discuss several issues related to the discrete-time perturbed Lyapunov equations.

\noindent First, let us remark that if  $\hat {\cal L}(t):{\cal D}_d^N\to {\cal S}_d^N$ is a discrete-time Lyapunov type operator defined by
\begin{eqnarray}\label{e3.1}
\hat{\cal L}(t)[{\mathbb X}](i)=\sum_{k=0}^r\sum_{j=1}^N p_t(j,i)\hat A_k(t,j)X(j)\hat A^T_k(t,j)
\end{eqnarray}
then its adjoint operator with respect to  the inner product (\ref{e1a}) is given by
\begin{eqnarray}\label{e3.2}
\hat{\cal L}^*(t)[{\mathbb X}](i)=\sum_{k=0}^r\hat A_k^T(t,i)\Xi(t)[{\mathbb X}](i)\hat A_k(t,i)
\end{eqnarray}
for all ${\mathbb X}=(X(1), ..., X(N))\in{\cal S}_d^N$.

\noindent Employing Theorem 2.12 from \cite{carte2010} in the case of the operator (\ref{e3.1}) we obtain:
\begin{cor}\label{C3.1}
If $\{\hat A_k(t,i)\}_{t\geq 0}\subset {\mathbb R}^{d\times d}$ for $0\leq k\leq r, i\in{\mathfrak N}$ are bounded sequences and $p_t(i,j)$ satisfy the assumption {\bf H1)} b), then the following are equivalent:
\begin{itemize}
\item[i)] The discrete-time linear equation on ${\cal S}_d^N$
$$ {\mathbb X}(t+1)=\hat {\cal L}(t)[{\mathbb X}(t)]$$
is exponentially stable.

\item[ii)] The discrete-time backward linear equation
$$ {\mathbb Y}(t)=\hat {\cal L}^*(t)[{\mathbb Y}(t+1)]+{\mathbb I}_d^N$$
has a uniform positive and bounded solution $\tilde {\mathbb Y}(t)=(\tilde Y(t,1),..., \tilde Y(t,N)), t\in{\mathbb Z}_+$ i.e.
$$0< c I_d\leq \tilde Y(t,i)\leq \tilde c I_d$$ for all $(t,i)\in{\mathbb Z}_+\times {\mathfrak N}$
where ${\mathbb I}_d^N=(I_d,..., I_d)\in{\cal S}_d^N$.
\end{itemize}
\end{cor}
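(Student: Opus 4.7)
The plan is to reduce Corollary 3.1 to a direct application of Theorem 2.12 in \cite{carte2010}. That theorem characterizes exponential stability of a forward discrete-time Lyapunov-type evolution driven by a uniformly bounded sequence of positive operators through the existence of a uniformly positive and bounded solution of the associated dual backward perturbed Lyapunov equation. The only preliminary step needed is to identify $\hat{\cal L}^*(t)$ in (\ref{e3.2}) with the adjoint of $\hat{\cal L}(t)$ in (\ref{e3.1}) with respect to the inner product (\ref{e1a}).

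First, I would verify that $\hat{\cal L}(t)$ is a positive operator on the ordered Hilbert space $({\cal S}_d^N,\langle\cdot,\cdot\rangle)$: each summand preserves the cone of componentwise positive semidefinite tuples because $p_t(j,i)\geq 0$ and conjugation $X\mapsto \hat A_k(t,j)X\hat A_k^T(t,j)$ preserves positive semidefiniteness. The sequence $\{\hat{\cal L}(t)\}_{t\geq 0}$ is uniformly bounded in operator norm by the hypothesized boundedness of the $\hat A_k(t,i)$ together with $0\leq p_t(j,i)\leq 1$. Hence the abstract framework of Theorem 2.12 applies in the present finite-dimensional Markov-switching setting.

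Second, I would confirm the adjoint formula. Expanding $\langle\hat{\cal L}(t)[{\mathbb X}],{\mathbb Y}\rangle=\sum_{i=1}^N Tr[\hat{\cal L}(t)[{\mathbb X}](i)Y(i)]$, using cyclicity of the trace, and swapping the order of the sums over $i$ and $j$, one isolates the factor $\sum_{i=1}^N p_t(j,i)Y(i)$, which is precisely $\Xi(t)[{\mathbb Y}](j)$ by (\ref{PI}). This rewrites the pairing as $\sum_{j=1}^N Tr[X(j)\,\hat{\cal L}^*(t)[{\mathbb Y}](j)]=\langle{\mathbb X},\hat{\cal L}^*(t)[{\mathbb Y}]\rangle$ with $\hat{\cal L}^*(t)$ given by (\ref{e3.2}).

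Once the adjoint is identified, the backward linear equation ${\mathbb Y}(t)=\hat{\cal L}^*(t)[{\mathbb Y}(t+1)]+{\mathbb I}_d^N$ is exactly the dual perturbed Lyapunov equation with uniformly positive and bounded forcing ${\mathbb I}_d^N$, and Theorem 2.12 of \cite{carte2010} yields the equivalence i) $\Leftrightarrow$ ii). There is no substantial obstacle in this plan: beyond a direct invocation of the cited theorem, the only computation is the algebraic identification of the adjoint, and the positivity of the cone action together with the uniform boundedness of the operator sequence are immediate from the hypotheses.
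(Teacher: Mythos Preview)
Your proposal is correct and follows essentially the same approach as the paper: the paper simply states that the corollary is obtained by employing Theorem 2.12 from \cite{carte2010} in the case of the operator (\ref{e3.1}), and your plan amounts to exactly this invocation, supplemented with the routine verifications (positivity and uniform boundedness of $\hat{\cal L}(t)$, and the adjoint computation (\ref{e3.2})) that the paper records immediately before the statement.
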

Let ${\cal L}_j(t):{\cal S}_n^N \to {\cal S}_n^N, j=1,2$ be the Lyapunov type operators described by
$$ {\cal L}_j(t)[{\mathbb X}](i)=\sum_{k=0}^r \sum_{\ell=1}^N p_t(\ell,i)A_k^j(t,\ell)X(\ell)(A_k^j(t,\ell))^T.$$
Setting $A_{ke}(t,i)=\left(\begin{array}{cc} A_k^1(t,i)&0\\0& A_k^2(t,i)\end{array}\right)$ we define the extended Lyapunov operator
${\cal L}_e(t):{\cal S}_{2n}^N\to {\cal S}_{2n}^N$ by
\begin{eqnarray*}
{\cal L}_e(t)[{\mathbb Z}](i)=\sum_{k=0}^r \sum_{j=1}^N p_t(j,i)A_{ke}(t,j)Z(j)A_{ke}^T(t,j)
\end{eqnarray*}
for all ${\mathbb Z}=(Z(1),..., Z(N))\in{\cal S}_{2n}^N$.
\begin{lem}\label{L3.2}
Assume that $\{A_k^{\ell}(t,i)\}_{t\geq 0}\subset {\mathbb R}^{n\times n}, 0\leq k\leq r, \ell =1,2, i\in{\mathfrak N}$ are bounded sequences and $p_t(i,j)$ satisfy the assumption {\bf H1)} b).
If the discrete-time linear equations on ${\cal S}_n^N$
\begin{eqnarray}\label{e3.3}
{\mathbb X}^{\ell}(t+1) ={\cal L}^{\ell}(t)[{\mathbb X}^{\ell}(t)], \ell =1,2
\end{eqnarray}
are exponentially stable then the discrete-time linear equation on ${\cal S}_{2n}^N$
\begin{eqnarray*}
{\mathbb Z}(t+1)={\cal L}_e(t)[{\mathbb Z}(t)]
\end{eqnarray*}
is exponentially stable too.
\end{lem}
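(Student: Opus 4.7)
The plan is to reduce exponential stability of ${\cal L}_e(t)$ to Corollary \ref{C3.1}, by exhibiting a uniformly positive and bounded solution of the adjoint backward equation
$$ {\mathbb Z}(t)={\cal L}_e^*(t)[{\mathbb Z}(t+1)]+{\mathbb I}_{2n}^N, $$
built block-diagonally from the solutions supplied by applying Corollary \ref{C3.1} to each ${\cal L}^\ell$. More precisely, the exponential stability assumption on (\ref{e3.3}) for $\ell=1,2$ yields, via Corollary \ref{C3.1}, bounded sequences $\tilde{\mathbb Y}^\ell(t)=(\tilde Y^\ell(t,1),\ldots,\tilde Y^\ell(t,N))$ with constants $0<c_\ell\leq \tilde c_\ell$ such that $c_\ell I_n\leq \tilde Y^\ell(t,i)\leq \tilde c_\ell I_n$ and $\tilde Y^\ell(t,i)=({\cal L}^\ell)^*(t)[\tilde{\mathbb Y}^\ell(t+1)](i)+I_n$ for every $(t,i)\in{\mathbb Z}_+\times{\mathfrak N}$.

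Next I would define, for each $(t,i)$, the block-diagonal candidate
$$ \tilde Z(t,i)=\begin{pmatrix}\tilde Y^1(t,i)&0\\ 0&\tilde Y^2(t,i)\end{pmatrix}\in{\cal S}_{2n}, $$
and set $\tilde{\mathbb Z}(t)=(\tilde Z(t,1),\ldots,\tilde Z(t,N))$. The bounds above immediately give the uniform two-sided inequality $\min(c_1,c_2)I_{2n}\leq \tilde Z(t,i)\leq \max(\tilde c_1,\tilde c_2)I_{2n}$. The key verification is that $\tilde{\mathbb Z}$ solves the adjoint backward equation for ${\cal L}_e(t)$. Writing out the formula (\ref{e3.2}) for ${\cal L}_e^*(t)$ and using the block-diagonal form of $A_{ke}(t,j)$, together with the fact that $\Xi(t)[\tilde{\mathbb Z}(t+1)](i)$ is itself block-diagonal with diagonal blocks $\Xi(t)[\tilde{\mathbb Y}^\ell(t+1)](i)$, I obtain
$$ {\cal L}_e^*(t)[\tilde{\mathbb Z}(t+1)](i)=\begin{pmatrix}({\cal L}^1)^*(t)[\tilde{\mathbb Y}^1(t+1)](i)&0\\ 0&({\cal L}^2)^*(t)[\tilde{\mathbb Y}^2(t+1)](i)\end{pmatrix}, $$
so that ${\cal L}_e^*(t)[\tilde{\mathbb Z}(t+1)](i)+I_{2n}=\tilde Z(t,i)$, i.e. $\tilde{\mathbb Z}(\cdot)$ is exactly a uniform positive and bounded solution of the backward equation with free term ${\mathbb I}_{2n}^N$.

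Finally, the equivalence in Corollary \ref{C3.1} applied to the operator ${\cal L}_e(t)$ yields directly the exponential stability of the discrete-time linear equation ${\mathbb Z}(t+1)={\cal L}_e(t)[{\mathbb Z}(t)]$ on ${\cal S}_{2n}^N$. There is no genuine obstacle here: the only point that requires a short verification is the block-diagonality of both $\Xi(t)[\tilde{\mathbb Z}(t+1)](i)$ and $A_{ke}^T(t,i)\,\Xi(t)[\tilde{\mathbb Z}(t+1)](i)\,A_{ke}(t,i)$, which then decouples the adjoint equation along the two diagonal blocks and reduces everything to the hypothesis on ${\cal L}^1$ and ${\cal L}^2$ separately.
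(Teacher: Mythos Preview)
Your proof is correct and follows essentially the same approach as the paper: both apply Corollary \ref{C3.1} to each ${\cal L}^\ell$ to obtain uniformly positive bounded solutions $\tilde{\mathbb Y}^\ell$ of the adjoint backward equations, assemble the block-diagonal candidate $\tilde Z(t,i)=\operatorname{diag}(\tilde Y^1(t,i),\tilde Y^2(t,i))$, verify it solves the adjoint backward equation for ${\cal L}_e$, and then invoke the implication $(ii)\Rightarrow(i)$ of Corollary \ref{C3.1}. Your write-up is in fact slightly more explicit about why the adjoint ${\cal L}_e^*$ decouples along the diagonal blocks.
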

\begin{proof} If the discrete-time linear equations (\ref{e3.3}) are exponentially stable we deduce via Corollary \ref{C3.1} applyied for each of this equations that the backward discrete-time equations
\begin{eqnarray*}
{\mathbb Y}^{\ell}(t)={\cal L}^*_{\ell}(t)[{\mathbb Y}^{\ell}(t+1)]+{\mathbb I}_n^N
\end{eqnarray*}
$\ell=1,2$ have bounded and uniform positive solutions $\tilde {\mathbb Y}^{\ell}(t)=(\tilde Y^{\ell}(t,1),...,\tilde Y^{\ell}(t,N)), t\in{\mathbb Z}_+$.
By direct calculations one obtains that the discrete-time backward equation on ${\cal S}_{2n}^N$
\begin{eqnarray*}
\tilde {\mathbb Z}(t)={\cal L}_e^*(t)[\tilde{\mathbb Z}(t+1)]+{\mathbb I}_{2n}^N
\end{eqnarray*}
has the uniform positive and bounded solution given by  $\tilde {\mathbb Z}(t)=(\tilde Z(t,1),..., \tilde Z(t,N))$ where
$$\tilde Z(t,i)=\left(\begin{array}{cc} \tilde Y^1(t,i)& 0\\ 0& \tilde Y^2(t,i)\end{array}\right).$$
Applying the implication $(ii)\to (i)$ from Corollary \ref{C3.1} in the case of the equation
$${\mathbb Z}(t+1)={\cal L}_e(t)[{\mathbb Z}(t)]$$
we obtain the desired conclusion. Thus the proof ends. \hfill$\blacksquare$
\end{proof}

We are now in position to prove the main result of this section.

\begin{thm}\label {T3.3}
Assume that the assumption {\bf H1)} is fulfilled. Under this condition the following hold:
\begin{itemize}
\item[i)] the (GDTRE) (\ref{GRDE}) has at most one bounded on ${\mathbb Z}_+$ and stabilizing solution;

\item[ii)] if there exists an integer ${\mathfrak p}\geq 1$ such that $A_k(t+{\mathfrak p}, i)=A_k(t,i)$, $B_k(t+{\mathfrak p}, i)=B_k(t,i)$ for $0\leq k\leq r$, $M(t+{\mathfrak p}, i)=M(t,i)$, $L(t+{\mathfrak p}, i)=L(t,i)$, $R(t+{\mathfrak p}, i)=R(t,i)$, $p_{t+{\mathfrak p}}(i,j)=p_t(i,j)$ for all $t\in{\mathbb Z}_+, i,j\in{\mathfrak N}$, then the unique bounded and stabilizing solution of (\ref{GRDE}) if any is a periodic sequence of period ${\mathfrak p}$.
\end{itemize}
\end{thm}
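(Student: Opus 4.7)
The plan for part (i) is to reduce uniqueness to the vanishing of a bounded backward solution driven by the exponentially stable extended positive operator from Lemma \ref{L3.2}. Let $\mathbb{X}_s$ and $\tilde{\mathbb{X}}_s$ be two bounded stabilizing solutions of (\ref{GRDE}) with associated feedbacks $F_s$ and $\tilde F_s$ given by (\ref{e2.8}), and set $\Delta(t):=\mathbb{X}_s(t)-\tilde{\mathbb{X}}_s(t)$. The key algebraic tool is the completion-of-squares identity
\begin{equation*}
\mathcal{L}_{\mathbb{F}}^*(t)[\mathbb{X}(t+1)]+\mathbb{Q}_{\mathbb{F}}(t)=\mathbb{X}(t)+(\mathbb{F}-\mathbb{F}_{\mathbb{X}})^{T}\mathcal{R}(t,\mathbb{X}(t+1))(\mathbb{F}-\mathbb{F}_{\mathbb{X}}),
\end{equation*}
valid for every feedback $\mathbb{F}$ and every solution $\mathbb{X}$ of (\ref{GRDE}), where $\mathbb{Q}_{\mathbb{F}}(t):=\mathbb{M}(t)+\mathbb{L}(t)\mathbb{F}+\mathbb{F}^T\mathbb{L}^T(t)+\mathbb{F}^T\mathbb{R}(t)\mathbb{F}$ and $\mathbb{F}_{\mathbb{X}}$ is the Riccati-minimizing feedback associated with $\mathbb{X}$. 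Specializing this identity once to $(\mathbb{F},\mathbb{X})=(\mathbb{F}_s,\tilde{\mathbb{X}}_s)$ and once to $(\tilde{\mathbb{F}}_s,\mathbb{X}_s)$, and subtracting the Riccati equations for $\mathbb{X}_s$ and $\tilde{\mathbb{X}}_s$ respectively, yields the pair
\begin{align*}
\Delta(t)&=\mathcal{L}_{\mathbb{F}_s}^*(t)[\Delta(t+1)]+(\mathbb{F}_s-\tilde{\mathbb{F}}_s)^T\mathcal{R}(t,\tilde{\mathbb{X}}_s(t+1))(\mathbb{F}_s-\tilde{\mathbb{F}}_s),\\
\Delta(t)&=\mathcal{L}_{\tilde{\mathbb{F}}_s}^*(t)[\Delta(t+1)]-(\mathbb{F}_s-\tilde{\mathbb{F}}_s)^T\mathcal{R}(t,\mathbb{X}_s(t+1))(\mathbb{F}_s-\tilde{\mathbb{F}}_s).
\end{align*}
Each is a perturbed backward Lyapunov equation driven by an exponentially stable operator (by the stabilizing property), but the inertia condition (\ref{12d}) makes the two perturbations sign-indefinite, so neither identity alone determines $\Delta$.

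The idea to kill the indefinite terms is to sum the two identities. Using $\mathcal{R}(t,\tilde{\mathbb{X}}_s(t+1))-\mathcal{R}(t,\mathbb{X}_s(t+1))=-\Pi_3(t)[\Delta(t+1)]$ together with the polarization $A^TYB+B^TYA=\tfrac12\bigl[(A+B)^TY(A+B)-(A-B)^TY(A-B)\bigr]$ applied with $A=A_k(t,i)+B_k(t,i)F_s(t,i)$ and $B=A_k(t,i)+B_k(t,i)\tilde F_s(t,i)$ produces an identity in which $2\Delta(t)$ coincides with the off-diagonal symmetrization of $\mathcal{L}_e^*(t)[\mathbb{U}(t+1)]$, where $\mathcal{L}_e$ is the extended Lyapunov operator of Lemma \ref{L3.2} built from the two closed-loop matrix sequences $\mathbb{A}_k+\mathbb{B}_k\mathbb{F}_s$ and $\mathbb{A}_k+\mathbb{B}_k\tilde{\mathbb{F}}_s$, and $\mathbb{U}(t,i):=\begin{pmatrix}0 & \Delta(t,i)\\ \Delta(t,i) & 0\end{pmatrix}\in\mathcal{S}_{2n}^N$. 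Since both closed-loop operators are exponentially stable, Lemma \ref{L3.2} guarantees that $\mathcal{L}_e$ is exponentially stable as well.

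The main obstacle is that the off-diagonal block of $\mathcal{L}_e^*[\mathbb{U}(t+1)]$ need not be symmetric, so $\mathbb{U}$ does not itself solve a clean homogeneous backward equation for $\mathcal{L}_e^*$ to which the standard uniqueness-of-bounded-solution argument for ES operators could be applied directly. I would overcome this by replacing $\mathbb{U}$ with the positive semidefinite element $\mathbb{Z}(t,i):=\begin{pmatrix}|\Delta(t,i)| & \Delta(t,i)\\ \Delta(t,i) & |\Delta(t,i)|\end{pmatrix}\in\mathcal{S}_{2n}^N$ (positivity is immediate from the Schur complement, since $\Delta|\Delta|^{\dagger}\Delta=|\Delta|$) and exploiting that $\mathcal{L}_e^*$ preserves positivity. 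From $\mathcal{L}_e^*(t)[\mathbb{Z}(t+1)]\geq 0$ and the elementary bound $\|B\|_{\mathrm{op}}\leq\sqrt{\|A\|_{\mathrm{op}}\,\|C\|_{\mathrm{op}}}$ for the off-diagonal block of any $2\times 2$ positive semidefinite block matrix with diagonal blocks $A,C$, together with the non-expansiveness of symmetrization, one extracts the compressive estimate $\|\Delta(t,i)\|\leq\sqrt{\|\mathcal{L}_{\mathbb{F}_s}^*(t)[|\Delta(t+1)|](i)\|\,\|\mathcal{L}_{\tilde{\mathbb{F}}_s}^*(t)[|\Delta(t+1)|](i)\|}$. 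Iterating this estimate and pairing with the uniformly positive and bounded Lyapunov function $\tilde{\mathbb{Z}}$ solving $\tilde{\mathbb{Z}}(t)=\mathcal{L}_e^*(t)[\tilde{\mathbb{Z}}(t+1)]+\mathbb{I}_{2n}^N$ supplied by Corollary \ref{C3.1} applied to $\mathcal{L}_e$ then forces $\mathbb{Z}(t)\equiv 0$ and hence $\Delta\equiv 0$, proving uniqueness.

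Part (ii) is an immediate corollary. Define $\tilde{\mathbb{X}}_s(t):=\mathbb{X}_s(t+\mathfrak{p})$. The $\mathfrak{p}$-periodicity of all coefficient sequences (including the transition matrices $P_t$) implies that $\tilde{\mathbb{X}}_s$ solves the same GDTRE on $\mathbb{Z}_+$. Boundedness is inherited from $\mathbb{X}_s$, and the associated feedback $\tilde F_s(t,i):=F_s(t+\mathfrak{p},i)$ is stabilizing for $\tilde{\mathbb{X}}_s$ because $\mathcal{L}_{\tilde{\mathbb{F}}_s}(t)=\mathcal{L}_{\mathbb{F}_s}(t+\mathfrak{p})$, so exponential stability of (\ref{e2.7}) and the uniform invertibility in (\ref{11a}) for $\tilde F_s$ follow from those for $F_s$ by a pure time shift. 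Part (i) then yields $\tilde{\mathbb{X}}_s\equiv\mathbb{X}_s$, that is, $\mathbb{X}_s(t+\mathfrak{p})=\mathbb{X}_s(t)$ for every $t\in\mathbb{Z}_+$, establishing the claimed periodicity.
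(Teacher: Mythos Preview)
Your overall strategy for (i)---reduce uniqueness to the vanishing of a bounded backward solution driven by the exponentially stable extended operator $\mathcal{L}_e$ of Lemma~\ref{L3.2}---is exactly the paper's, and your treatment of (ii) matches the paper's argument. The difficulty lies in how you reach the homogeneous equation in (i).

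By summing your two completion-of-squares identities you obtain only the \emph{symmetrized} relation $2\Delta(t,i)=M(t,i)+M(t,i)^{T}$ with $M(t,i):=\sum_{k}(A_k+B_kF_s)^{T}\Xi(t)[\Delta(t+1)](i)(A_k+B_k\tilde F_s)$, and you correctly note that this alone does not give $\mathbb{U}(t)=\mathcal{L}_e^{*}(t)[\mathbb{U}(t+1)]$. However, your proposed workaround---passing to the positive semidefinite $\mathbb{Z}$, extracting the norm bound $\|\Delta(t,i)\|\le\sqrt{\|\mathcal{L}_{F_s}^{*}[|\Delta(t+1)|](i)\|\,\|\mathcal{L}_{\tilde F_s}^{*}[|\Delta(t+1)|](i)\|}$, and then ``iterating and pairing'' with the Lyapunov solution $\tilde{\mathbb Z}$---does not close as written. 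No equation or matrix inequality links $\mathbb{Z}(t)$ to $\mathcal{L}_e^{*}[\mathbb{Z}(t+1)]$, and the scalar norm recursion carries no contraction factor (the one-step bounds $\|\mathcal{L}_{F_s}^{*}[I]\|$, $\|\mathcal{L}_{\tilde F_s}^{*}[I]\|$ are typically larger than~$1$), so boundedness of $\Delta$ cannot be converted into $\Delta\equiv0$ along this route. The role of $\tilde{\mathbb Z}$ is never made concrete.

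The obstacle you manufactured is in fact illusory: the \emph{unsymmetrized} identity $\Delta(t,i)=M(t,i)$ holds directly. The paper obtains it via a bilinear variant of completion of squares, valid for each of the two solutions,
\[
X(t,i)=\sum_{k=0}^{r}(A_k+B_kF_s)^{T}\Xi(t)[\mathbb{X}(t+1)](i)(A_k+B_k\tilde F_s)+\begin{pmatrix}I_n & F_s^{T}\end{pmatrix}Q(t,i)\begin{pmatrix}I_n \\ \tilde F_s\end{pmatrix},
\]
which one verifies by expanding the right-hand side and using, for $\mathbb{X}=\mathbb{X}_s$, the relation $\mathcal R(t,\mathbb{X}_s(t+1))F_s=-(\Pi_2(t)[\mathbb{X}_s(t+1)]+L)^{T}$, and for $\mathbb{X}=\tilde{\mathbb{X}}_s$ the analogous relation with $\tilde F_s$. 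Subtracting the two instances gives $\Delta(t,i)=M(t,i)$, i.e.\ precisely the $(1,2)$-block of $\mathcal{L}_e^{*}(t)[\mathbb{U}(t+1)](i)$. Hence $\mathbb{U}(t)=\mathcal{L}_e^{*}(t)[\mathbb{U}(t+1)]$ holds exactly; since $\mathcal{L}_e$ is exponentially stable by Lemma~\ref{L3.2}, the only bounded solution of this homogeneous backward equation is zero (Theorem~2.5 in \cite{carte2010}), and $\Delta\equiv0$. No positivity trick or norm iteration is needed.
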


\begin{proof}
\begin{itemize}
\item[i)]  Let us assume by contrary that the GDTRE (\ref{GRDE}) has two bounded and stabilizing solutions $\{{\mathbb X}_s^{\ell}(t)\}_{t\in{\mathbb Z}_+}, \ell=1,2$.
Let $F_s^{\ell}(t,i)$ be the corresponding stabilizing feedback gain associated to the solution ${\mathbb X}_s^{\ell}(\cdot)$ via (\ref{e2.8}).
By direct calculations one obtains that the equation (\ref{GRDE}) verified by ${\mathbb X}_s^{\ell}(\cdot)$ may be written in the form:
\begin{align}\label{e3.4}
X_s^{\ell}(t,i)&=\sum_{k=0}^r [A_k(t,i)+B_k(t,i)F_s^1(t,i)]^T\Xi(t)[{\mathbb X}_s^{\ell}(t+1)](i)[A_k(t,i)+B_k(t,i)F_s^2(t,i)]\notag\\
&+\left(I_n \quad (F_s^1(t,i))^T\right) Q(t,i)\left(I_n \quad (F_s^2(t,i))^T\right)^T, \ell=1,2
\end{align}
where $Q(t,i)$ were introduced by (\ref{e3a}).
Let ${\mathbb Y}(t)=(Y(t,1),...,Y(t,N))$ be defined by
$$Y(t,i)=X_s^1(t,i)-X_s^2(t,i), \;\; (t,i)\in{\mathbb Z}_+\times {\mathfrak N}.$$
From (\ref{e3.4}) we obtain that ${\mathbb Y}(\cdot)$ satisfies the discrete-time backward linear equation on ${\cal S}_n^N$
\begin{eqnarray}\label{e3.5bis}
Y(t,i)=\sum_{k=0}^r [A_k(t,i)+B_k(t,i)F_s^1(t,i)]^T\Xi(t)[{\mathbb Y}_s^{\ell}(t+1)](i)[A_k(t,i)+B_k(t,i)F_s^2(t,i)]
\end{eqnarray}
for all $(t,i)\in{\mathbb Z}_+\times {\mathfrak N}$.
We set
\begin{eqnarray}\label{e3.6}
\hat A_k(t,i)=\left(\begin{array}{cc} A_k(t,i)+B_k(t,i)F_s^1(t,i)&0\\0&  A_k(t,i)+B_k(t,i)F_s^2(t,i)\end{array}\right)\in{\mathbb R}^{2n\times 2n},\quad 0\leq k\leq r.
 \end{eqnarray}
Based on this choice of the matrices $\hat A_k(t,i)$ we construct  the Lyapunov type operator $\hat{\cal L}_e(t)$ and its adjoint operator $\hat {\cal L}^*_e(t)$ defined as in (\ref{e3.1}) and (\ref{e3.2}), respectively.
On the other hand, let ${\cal L}_{{\mathbb F}_s^{\ell}}(t):{\cal S}_n^N\to {\cal S}_n^N$ be the linear operators defined as in (\ref{Lyap}) for ${\mathbb F}(t)$ replaced by ${\mathbb F}_s^{\ell}(t)=(F_s^{\ell}(t,1),..., F_s^{\ell}(t,N))$.
Since ${\mathbb X}_s^{\ell}(\cdot)$ is a stabilizing solution of (\ref{GRDE}) it follows that the discrete-time linear equation
\begin{eqnarray*}
{\mathbb X}(t+1)={\cal L}_{{\mathbb F}_s^{\ell}}(t)[{\mathbb X}(t)]
\end{eqnarray*}
is exponentially stable.
Applying Lemma \ref{L3.2} we deduce that the discrete-time equation on ${\cal S}_{2n}^N$
$$ {\mathbb Z}(t+1)=\hat{\cal L}_e(t)[{\mathbb Z}(t)]$$
is exponentially stable.
Let $\tilde {\mathbb Z}(t)=(\tilde Z(t,1),..., \tilde Z(t,))$ be defined by
$$ \tilde Z(t,i)= \left(\begin{array}{cc} 0&X_s^1(t,i)-X_s^2(t,i)\\X_s^1(t,i)-X_s^2(t,i)&0\end{array}\right) \in{\cal S}_{2n}.$$
By direct calculation, involving (\ref{e3.2}), (\ref{e3.5bis}), (\ref{e3.6}) we obtain that $\{\tilde{\mathbb Z}(t)\}_{t\geq 0}$ is a bounded solution of a discrete-time backward equation
\begin{eqnarray}\label{e3.7}
 \tilde {\mathbb Z}(t)=\hat {\cal L}_e^*(t)[\tilde{\mathbb Z}(t+1)]
\end{eqnarray}
Applying Theorem 2.5  from \cite{carte2010} to the special case of the equation (\ref{e3.7}) we deduce that this equation has a unique bounded solution namely  $\tilde{\mathbb Z}(t)=0\in{\cal S}_{2n}^N$, $t\in{\mathbb Z}_+$.
This allows us to conclude that $X_s^1(t,i)=X_s^2(t,i)$ for all $(t,i)\in{\mathbb Z}_+\times {\mathfrak N}$. Thus we have shown that part $i)$ from the statement holds.

\item [ii)] Assume that the GDTRE (\ref{GRDE}) has a bounded and stabilizing solution ${\mathbb X}_s(\cdot)=(X_s(\cdot, 1),..., X_s(\cdot, N))$.
We set $\hat{\mathbb X}(t)=(\hat X(t,1),..., \hat X(t, N))$ defined by
$\hat X(t,i)=X_s(t+{\mathfrak p},i)$, $t\in{\mathbb Z}_+$.
By direct calculation involving the periodicity of the coefficients one shows that $\{\hat{\mathbb X}(t)\}_{t\geq 0}$ is a bounded solution of (\ref{GRDE}).
Let ${\cal L}_{\hat{\mathbb F}}(t)$ be the linear operator defined as in (\ref{Lyap}) for  ${\mathbb F}(t)$ replaced by $\hat{\mathbb F}(t)=(\hat F(t,1),..., \hat F(t,N)), \; \hat F(t,i)$ being computed as in (\ref{e2.8}) for $X_s(t,i)$ replaced by $\hat X(t,i)$.
Employing the periodicity property of the coefficients of (\ref{GRDE}) one obtains that
\begin{eqnarray}\label{e3.8}
{\cal L}_{\hat {\mathbb F}}(t)={\cal L}_{{\mathbb F}_s}(t+{\mathfrak p}), \quad t\in{\mathbb Z}_+.
\end{eqnarray}
Let ${\mathbb T}_s(t,t_0)={\cal L}_{{\mathbb F}_s}(t-1){\cal L}_{{\mathbb F}_s}(t-2)\cdots{\cal L}_{{\mathbb F}_s}(t_0)$  if $t>t_0\geq 0$, ${\mathbb T}_s(t_0,t_0)=\mathbb{I}_{{\cal S}_n^N}$ and
$\hat{\mathbb T}(t,t_0)= {\cal L}_{\hat {\mathbb F}}(t-1){\cal L}_{\hat {\mathbb F}}(t-2)... {\cal L}_{\hat {\mathbb F}}(t_0)$ if $t>t_0\geq 0, \hat {\mathbb T}(t_0,t_0)= \mathbb{I}_{{\cal S}_n^N}$ be the linear evolution operators  defined by ${\cal L}_{{\mathbb F}_s}(\cdot)$ and ${\cal L}_{\hat{\mathbb F}}(\cdot)$, respectively, where $\mathbb{I}_{{\cal S}_n^N}$is the identity operator on ${\cal S}_n^N$.

\noindent Based on (\ref{e3.8}) we get
\begin{eqnarray}\label{e3.9}
\hat {\mathbb T}(t,t_0)={\mathbb T}_s(t+{\mathfrak p}, t_0+{\mathfrak p}), \quad \forall\;\;  t\geq t_0\geq 0.
\end{eqnarray}
From the definition of the stabilizing solution ${\mathbb X}_s(\cdot)$ we deduce that $\|{\mathbb T}_s(t,t_0)\|\leq \beta \alpha^{t-t_0}$ for all $t\geq t_0\geq 0$, where $\beta\geq 1$, $\alpha\in(0,1)$ are constants.
Further (\ref{e3.9}) yields
$\|\hat {\mathbb T}(t,t_0)\|=\|{\mathbb T}_s(t+{\mathfrak p}, t_0+{\mathfrak p})\|\leq \beta \alpha^{t-t_0}$. hence the discrete-time linear equation on ${\cal S}_n^N$
$${\mathbb X}(t+1)={\cal L}_{\hat{\mathbb F}}(t)[{\mathbb X}(t)]$$
is exponentially stable, which means that $\hat {\mathbb X}(\cdot)$ is also a bounded and stabilizing solution of the GDTRE (\ref{GRDE}).
From the uniqueness of the bounded and stabilizing solution of (\ref{GRDE}) we conclude that $X_s(t,i)\hat X(t,i)=X_s(t+{\mathfrak p},i)$ for all $(t,i)\in{\mathbb Z}_+\times {\mathfrak N}$. Thus the proof ends.\hfill$\blacksquare$
\end{itemize}
\end{proof}
\begin{rem}\label{R3.1}
In the case when the bounded and stabilizing solution ${\mathbb X}_s(\cdot)$ of a GDTRE of type (\ref{GRDE}) satisfies a sign condition of type (\ref{12b}) or (\ref{12c})  one shows that in this case ${\mathbb X}_s(\cdot)$ is the bounded and maximal solution (bounded and minimal solution), respectively of the considered GDTRE, which guarantee the uniqueness of the bounded  and stabilizing solution.

In the case when the bounded and stabilizing solution of a GDTRE of type (\ref{GRDE}) satisfies a sign condition of type (\ref{12d}) we cannot say if this solution is maximal or minimal and, therefore, in this case we cannot obtain the uniqueness of the bounded and stabilizing solution based on the uniqueness of the maximal or minimal solution of this kind of GDTRE.
The proof of Theorem \ref{T3.3} from above is not based on the property of the sign of the matrices defined in (\ref{12a}).
\end{rem}

\section{Existence conditions}
In this section we deal with the problem of the existence of the bounded and stabilizing solution of a GDTRE (\ref{GRDE}) satisfying a sign condition of type (\ref{12d}).
We shall provide some conditions which guarantee the existence of the bounded and stabilizing solution of this kind of Riccati equations with the desired sign condition.
\subsection{Preliminary remarks}
Consider the following partitions of the coefficients of (\ref{GRDE}):
\begin{eqnarray}
\label{partition1}
B_j(t,i)=\left(
           \begin{array}{cc}
             B_{j1}(t,i) & B_{j2}(t,i) \\
           \end{array}
         \right),\;B_{jk}(t,i)\in{\mathbb{R}}^{n\times m_k}, \;0\leq j\leq r,\\
L(t,i)=\left( L_1(t,i)\quad L_2(t,i)\right), L_k(t,i)\in{\mathbb R}^{n\times m_k},k=1,2\nonumber
\end{eqnarray}
and:
\begin{equation}
\label{partition2}
R(t,i)=\left(\begin{array}{cc}R_{11}(t,i) & R_{12}(t,i) \\\star & R_{22}(t,i)\end{array}\right),\;R_{lj}(t,i)\in{\mathbb{R}}^{m_l\times m_j}, \; l,\;j=1,2
\end{equation}
\noindent Correspondingly we obtain the partitions:
\begin{equation}
\begin{cases}
\Pi_2(t)[\mathbb{X}](i)=\left(\begin{array}{cc}\Pi_{21}(t)[\mathbb{X}](i) & \Pi_{22}(t)[\mathbb{X}](i)\end{array}\right)\\
\Pi_3(t)[\mathbb{X}](i)=\left(\begin{array}{cc}\Pi_{311}(t)[\mathbb{X}](i) & \Pi_{312}(t)[\mathbb{X}](i) \\\star & \Pi_{322}(t)[\mathbb{X}](i)\end{array}\right)
\end{cases}
\end{equation}
with
\begin{equation*}
\begin{cases}
\Pi_{2k}(t)[\mathbb{X}](i)=\sum_{j=1}^rA_j^T(t,i)\Xi(t)[\mathbb{X}](i)B_{jk}(t,i)\\
\Pi_{3lk}(t)[\mathbb{X}](i)=\sum_{j=1}^rB_{jl}^T(t,i)\Xi(t)[\mathbb{X}](i)B_{jk}(t,i)
\end{cases}; k,l=1,2
\end{equation*}

\noindent In the rest of the paper, the following assumption regarding the weight matrices $M(t,i),\; R(t,i)$ and $L(t,i)$ is made:

\begin{itemize}
\item[\textbf{H4)}] For each $(t,i) \in \mathbb{Z}_+\times \mathfrak{N}$:
\begin{equation}\label{sign1}
R_{22}(t,i)\geq \rho_2 I_{m_2}
\end{equation}
\begin{equation}\label{sign2}
M(t,i)-L_2(t,i)R_{22}^{-1}(t,i)L_2^T(t,i)\geq 0
\end{equation}
\begin{equation}\label{sign3}
R_{11}(t,i)-R_{12}(t,i)R_{22}^{-1}(t,i)R_{12}^T(t,i)\leq -\rho_1I_{m_1}
\end{equation}
with $\rho_j>0$, $j=1,2$  given constant scalars.
\end{itemize}

\noindent In this work, we are interested by solutions $\mathbb{X}(\cdot):{\cal{I}}\subset{\mathbb{Z}_+} \rightarrow{\cal{S}}_n^N$ of (\ref{GRDE}) satisfying the following sign conditions:
\begin{align}\label{e2.3}
&{\cal R}_{22}^{\sharp}(t,{\mathbb X}(t+1),i)=R_{11}(t,i)+\Pi_{311}(t)[\mathbb{X}(t+1)](i)-\Big[R_{12}(t,i)+\Pi_{312}(t)[\mathbb{X}(t+1)](i)\Big]
\notag\\&\times\Big[R_{22}(t,i)+\Pi_{322}(t)[\mathbb{X}(t+1)](i)\Big]^{-1}\star\leq -\delta_1I_{m_1}
\end{align}
\begin{equation}\label{e2.4}
{\cal R}_{22}(t,{\mathbb X}(t+1),i)=R_{22}(t,i)+\Pi_{322}(t)[\mathbb{X}(t+1)](i)\geq \delta_2I_{m_2}
\end{equation}
for all $t\in{\cal{I}}$, $1\leq i\leq N$, $\delta_k>0$, $k=1,2$, being constants.

\begin{rem}\label{R4.1a}
Let $\left(\begin{array}{cc}{\cal R}_{11}(t,{\mathbb X}(t+1),i)& {\cal R}_{12}(t,{\mathbb X}(t+1),i)\\
{\cal R}_{12}^T(t,{\mathbb X}(t+1),i)& {\cal R}_{22}(t,{\mathbb X}(t+1),i)
\end{array}\right)$ be the partition of the matrix\break ${\cal R}(t,{\mathbb X}(t+1),i)$ compatible with the partition (\ref{partition1}), ${\cal R}(t,{\mathbb X}(t+1),i)$ being computed as in (\ref{12a}) with ${\mathbb X}_s(t+1)$ being replaced by ${\mathbb X}(t+1)$. So, the left part of (\ref{e2.4}) is the $(2,2)$-block of the matrix ${\cal R}(t,{\mathbb X}(t+1),i)$, while the left part of (\ref{e2.3}) is the Schur complement of the $(2,2)$-block of the same matrix.
\end{rem}
\noindent In the sequel, we shall denote ${\cal D}^{\mathfrak R}$ the set of solutions $\{{\mathbb X}(t)\}_{t\in{\cal I}}$ of the GDTRE (\ref{GRDE}) satisfying the conditions (\ref{e2.3}) and (\ref{e2.4}) and we call them admissible solutions. Employing Lemma 2 in Chapter 4 from \cite{Halanay:93}, some useful  properties of admissible solutions can be derived:
\begin{cor}\label{C4.1}
Assume that the Assumptions {\bf H1)} a) and {\bf H4)} are fulfilled.\\
Let $\tilde {\mathbb X}(t)=(\tilde X(t,1),..., \tilde X(t,N)), t\in{\mathbb Z}_+$ be a bounded solution of (\ref{GRDE}).
 If $\tilde X(\cdot)\in {\cal D}^{\mathfrak R}$ then the matrices ${\cal R}(t,\tilde{\mathbb X}(t+1), i)$ satisfy a sign condition of type (\ref{12d}).
 If $\tilde {\mathbb X}(\cdot)$ and the coefficients of (\ref{GRDE}) are periodic sequences of period ${\mathfrak p}$ and additionally $\tilde X(t,i)\geq 0$ for all $(t,i)\in{\mathbb Z}_+\times {\mathfrak N}$, then the following are equivalent:
\begin{itemize}
 \item[i)] $\tilde {\mathbb X}(\cdot)\in{\cal D}^{\mathfrak R}$
\item[ii)] the matrices ${\cal R}(t,\tilde{\mathbb X}(t+1),i), (t,i)\in{\mathbb Z}_+\times{\mathfrak N}$ satisfy a sign condition of type (\ref{12d}).
\end{itemize}
\end{cor}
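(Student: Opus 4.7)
The workhorse is the block-matrix inertia identity invoked via Lemma~2, Chapter~4 of \cite{Halanay:93}: for any symmetric block matrix
$$\mathcal{R}=\begin{pmatrix}\mathcal{R}_{11}&\mathcal{R}_{12}\\ \mathcal{R}_{12}^T&\mathcal{R}_{22}\end{pmatrix}$$
with $\mathcal{R}_{22}$ invertible, the standard congruence with $\operatorname{diag}(I,I)$ augmented by the shear $-\mathcal{R}_{12}\mathcal{R}_{22}^{-1}$ yields
$$\mathrm{Inertia}[\mathcal{R}]=\mathrm{Inertia}[\mathcal{R}_{22}]+\mathrm{Inertia}\!\left[\mathcal{R}_{11}-\mathcal{R}_{12}\mathcal{R}_{22}^{-1}\mathcal{R}_{12}^T\right].$$
Throughout I use the convention $\mathrm{Inertia}=(\#\text{neg},\#\text{pos},\#\text{zero})$, so $\mathrm{Inertia}[\mathfrak I]=(m_1,m_2,0)$.

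\emph{Part 1 (first assertion).} Fix $(t,i)$ and write $\mathcal{R}:=\mathcal{R}(t,\tilde{\mathbb X}(t+1),i)$ in the block form of Remark~\ref{R4.1a}. Under the assumption $\tilde{\mathbb X}(\cdot)\in\mathcal{D}^{\mathfrak R}$, (\ref{e2.4}) gives $\mathcal{R}_{22}\geq\delta_2I_{m_2}$, so $\mathcal{R}_{22}$ is invertible with inertia $(0,m_2,0)$, while (\ref{e2.3}) gives the Schur complement $\mathcal{R}_{22}^\sharp\leq-\delta_1 I_{m_1}$, whose inertia is $(m_1,0,0)$. Adding these, $\mathrm{Inertia}[\mathcal{R}]=(m_1,m_2,0)=\mathrm{Inertia}[\mathfrak I]$, establishing (\ref{12d}).

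\emph{Part 2 (equivalence under periodicity and $\tilde X\geq0$).} The direction (i)$\Rightarrow$(ii) is Part~1. For (ii)$\Rightarrow$(i), I first derive (\ref{e2.4}). Since $\tilde X(s,j)\geq 0$ for all $(s,j)$, the conditional average $\Xi(t)[\tilde{\mathbb X}(t+1)](i)=\sum_{j=1}^N p_t(i,j)\tilde X(t+1,j)$ is positive semidefinite, hence
$$\Pi_{322}(t)[\tilde{\mathbb X}(t+1)](i)=\sum_{j}B_{j2}^T(t,i)\,\Xi(t)[\tilde{\mathbb X}(t+1)](i)\,B_{j2}(t,i)\geq 0.$$
Combined with the lower bound $R_{22}(t,i)\geq\rho_2 I_{m_2}$ from \textbf{H4)} (\ref{sign1}), this yields $\mathcal{R}_{22}(t,\tilde{\mathbb X}(t+1),i)\geq\rho_2I_{m_2}$, i.e.\ (\ref{e2.4}) with $\delta_2=\rho_2$. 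Thus $\mathcal{R}_{22}$ is invertible of inertia $(0,m_2,0)$; the inertia identity above together with the hypothesis $\mathrm{Inertia}[\mathcal{R}]=(m_1,m_2,0)$ now forces $\mathrm{Inertia}[\mathcal{R}_{22}^\sharp]=(m_1,0,0)$, so $\mathcal{R}_{22}^\sharp(t,\tilde{\mathbb X}(t+1),i)$ is negative definite pointwise. The uniform bound demanded by (\ref{e2.3}) comes from periodicity: since both $\tilde{\mathbb X}(\cdot)$ and all coefficients are $\mathfrak p$-periodic and $\mathfrak N$ is finite, $(t,i)\mapsto\mathcal{R}_{22}^\sharp(t,\tilde{\mathbb X}(t+1),i)$ takes only finitely many values, so $\delta_1:=-\max_{t,i}\lambda_{\max}\!\left(\mathcal{R}_{22}^\sharp(t,\tilde{\mathbb X}(t+1),i)\right)$ is strictly positive and provides the required estimate.

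\emph{Main obstacle.} Once the block-inertia identity is in hand, the only step with any subtlety is (ii)$\Rightarrow$(i). An inertia specification alone does not single out which block of $\mathcal{R}$ is the positive-definite part; it is precisely the combination of $\tilde X\geq 0$ (which via monotonicity of $\Pi_{322}$ in $\mathbb X$ and hypothesis \textbf{H4)} pins $\mathcal{R}_{22}$ to be positive definite) and $\mathfrak p$-periodicity (which upgrades pointwise negative definiteness of the Schur complement to a uniform bound) that makes the inertia condition sufficient.
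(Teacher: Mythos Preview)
Your proof is correct and follows exactly the approach the paper indicates: the paper does not give a detailed argument for Corollary~\ref{C4.1} but simply invokes Lemma~2, Chapter~4 of \cite{Halanay:93}, which is precisely the block-inertia (Haynsworth) identity you use. Your filling-in of the details is accurate, including the observation that $\tilde X\geq 0$ together with (\ref{sign1}) forces $\mathcal{R}_{22}\geq\rho_2 I_{m_2}$, and that periodicity is what upgrades pointwise negative definiteness of the Schur complement to the uniform bound required by (\ref{e2.3}).
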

According to the result stated in Corollary \ref{C4.1}, to derive conditions that guarantee the existence of a bounded and stabilizing solution of (\ref{GRDE}) satisfying a sign condition of type (\ref{12d}) we shall provide conditions that allows us to obtain the existence of a bounded and stabilizing solution of (\ref{GRDE}) satisfying sign conditions of type (\ref{e2.3})-(\ref{e2.4}).
We shall see that the conditions of type (\ref{e2.3})-(\ref{e2.4}) are more adapted to the definition of some admissible strategies in a zero sum discrete-time LQ dynamic game.

\begin{rem}\label{R4.2abis}
In the sequel, we will also give conditions for the existence of a bounded and stabilizing solution of (\ref{GRDE}) satisfying sign conditions of type:
\begin{equation}\label{e2.3'}
{\cal R}_{11}(t,{\mathbb X}(t+1),i)=R_{11}(t,i)+\Pi_{311}(t)[\mathbb{X}(t+1)](i)\leq-\delta_3I_{m_1}
\end{equation}
together with  (\ref{e2.4}), for all $t\in{\cal{I}}$, $1\leq i\leq N$, $\delta_3>0$ being a constant. Such solution is involved in the definition of another admissible strategy in a zero sum discrete-time LQ dynamic game. Such a strategy appears for example in \cite{mou} where further details could be found for the interested reader.
\end{rem}

\begin{rem}\label{R4.2a}
If $\mathbb{X}(\cdot):\mathcal{I}\rightarrow \mathcal{S}_n^N$ is an admissible solution of GDTRE (\ref{GRDE}) then we have the following factorization:
\begin{equation}\label{eq.34.rev}
R(t,i)+\Pi_3(t)[\mathbb{X}(t+1)](i)=\left(\begin{array}{cc}V_{11}(t)[\mathbb{X}(t+1)](i) & \mathbf{0} \\V_{21}(t)[\mathbb{X}(t+1)](i) & V_{22}(t)[\mathbb{X}(t+1)](i)\end{array}\right)^T\left(\begin{array}{cc}-{ I}_{m_1} & \mathbf{0} \\ \mathbf{0} & {I}_{m_2}\end{array}\right)\star
\end{equation}
where $V_{kk}(t)[\mathbb{X}(t+1)](i)\geq c_k { I}_{m_k}$, $k=1,2$, $i\in \mathfrak{N}$ and $t\in \mathcal{I}$. In order to obtain (\ref{eq.34.rev}) we may take
$$V_{11}(t)[{\mathbb X}(t+1)](i) =(-{\cal R}_{22}^{\sharp}(t, {\mathbb X}(t+1),i))^{\frac{1}{2}}$$
$$V_{21}(t)[{\mathbb X}(t+1)](i) =({\cal R}_{22}(t, {\mathbb X}(t+1),i))^{-\frac{1}{2}}{\cal R}_{12}^T(t,{\mathbb X}(t+1),i)$$
$$V_{22}(t)[{\mathbb X}(t+1)](i) =({\cal R}_{22}(t, {\mathbb X}(t+1),i))^{\frac{1}{2}}.$$
This factorization, adapted to the indefinite sign of $R(t,i)+\Pi_3(t)[{\mathbb X}(t+1)](i)$,
 will play a crucial role in the derivation of the main results.
\end{rem}

\noindent We will now define and characterize the set $\mathcal{A}^\Sigma$ that will play a key role in the proof of our main results. To this end, and according with the partition (\ref{partition1}), we define $u(t)=\left(\begin{array}{cc}u_1^T(t) & u_2^T(t)\end{array}\right)^T$ and we set formally $u_2(t)\equiv u_2^{\mathbb{KW}}(t)=K(t,\theta_t)x(t)+W(t,\theta_t)u_1(t)$.
Hence (\ref{equa1}) and (\ref{cost}) is rewritten as follows:
\begin{equation}\label{equa3}
       x(t+1)=A_{0\mathbb{K}}(t,\theta_t)x(t)+B_{0\mathbb{W}}(t, \theta_t)u_1(t)+\sum_{k=1}^rw_k(t)\left(A_{k\mathbb{K}}(t, \theta_t)x(t)+B_{k\mathbb{W}}(t, \theta_t)u_1(t)\right)\end{equation}
\begin{equation}
\label{cost_closed}
\mathcal{J}_{\mathbb{KW}}(t_0,x_0,u_1)={\mathbb E}\left[\sum_{t=t_0}^\infty\left(\begin{array}{c}x_{u_1}(t) \\u_1(t)\end{array}\right)^T\left(\begin{array}{cc}M_\mathbb{K}(t,\theta_t) & L_{\mathbb{KW}}(t,\theta_t) \\\star & R_\mathbb{W}(t,\theta_t)\end{array}\right)\star\right]
\end{equation}
where $x_{u_1}(t)$ is the solution of (\ref{equa3}) corresponding to $u_1(t)$ and:
\begin{equation}\label{notation}
\begin{cases}
A_{k\mathbb{K}}(t,i)=A_k(t,i)+B_{k2}(t,i)K(t,i)\\
B_{k\mathbb{W}}(t,i)=B_{k1}(t, i)+B_{k2}(t,i)W(t,i)\\
M_{\mathbb{K}}(t,i)=M(t,i)+L_2(t,i)K(t,i)+K^T(t,i)L_2^T(t,i)+K^T(t,i)R_{22}(t,i)K(t,i)\\
L_{\mathbb{KW}}(t,i)=L_1(t,i)+K^T(t,i)R_{12}^T(t,i)+\left(L_2(t,i)+K^T(t,i)R_{22}(t,i)\right)W(t,i)\\
R_{\mathbb{W}}(t,i)=\left(\begin{array}{c}I_{m_1} \\W(t,i)\end{array}\right)^TR(t,i)\left(\begin{array}{c}I_{m_1} \\W(t,i)\end{array}\right).
\end{cases}
\end{equation}
To the above system (\ref{equa3}) and the corresponding quadratic functional (\ref{cost_closed}) we associate the following Riccati-type difference equation of type (\ref{GRDE}):
\begin{align}\label{Ric_closed}
{X}(t,i)&=\Pi_\mathbb{K}(t)[\mathbb{X}(t+1)](i)+M_\mathbb{K}(t,i)-\big(\Pi_{\mathbb{K}\mathbb{W}}(t)[\mathbb{X}(t+1)](i)+L_\mathbb{KW}(t,i)\big)\times\notag\\
&\times\big(R_\mathbb{W}(t,i)+\Pi_\mathbb{W}(t)[\mathbb{X}(t+1)](i)\big)^{-1}\star
\end{align}
where:
\begin{equation}
\begin{cases}
\Pi_\mathbb{K}(t)[\mathbb{X}](i)=\sum_{j=0}^rA_{j\mathbb{K}}^T(t,i)\Xi(t)[\mathbb{X}](i)A_{j\mathbb{K}}(t,i)\\
\Pi_{\mathbb{K}\mathbb{W}}(t)[\mathbb{X}](i)=\sum_{j=0}^rA_{j\mathbb{K}}^T(t,i)\Xi(t)[\mathbb{X}](i)B_{j\mathbb{W}}(t,i)\\
\Pi_\mathbb{\mathbb{W}}(t)[\mathbb{X}](i)=\sum_{j=0}^rB_{j\mathbb{W}}^T(t,i)\Xi(t)[\mathbb{X}](i)B_{j\mathbb{W}}(t,i)
\end{cases}
\end{equation}
for all $\mathbb{X}\in \mathcal{S}_n^N$.
Since the Riccati equation  (\ref{Ric_closed}) is of type (\ref{GRDE}) it follows that the notion of stabilizing solution of (\ref{Ric_closed}) may be introduced as in Definition \ref{D2.1}.
Furthermore, according to Theorem \ref{T3.3} the GDTRE (\ref{Ric_closed}) may have at most one bounded and stabilizing solution.\\
In the following, to a GDTRE (\ref{GRDE}) defined by a given quadruple of sequences
$$\Sigma=(\{{\bf A}(t)\}_{t\geq 0}, \{{\bf B}(t)\}_{t\geq 0}, \{P_t\}_{t\geq 0}, \{{\mathbb Q}(t)\}_{t\geq 0}),$$
we associate the set ${\cal{A}}^{\Sigma}$, which consists of all pairs of feedback gains $(\mathbb{K}(\cdot), \mathbb{W}(\cdot))$ where $t\rightarrow\mathbb{K}(t)=\left(\begin{array}{ccc}K(t,1), & \cdots, & K(t,N)\end{array}\right):\mathbb{Z}_+\rightarrow \mathcal{M}_{m_2,n}^N$, $t\rightarrow\mathbb{W}(t)=\left(\begin{array}{ccc}W(t,1), & \cdots, & W(t,N)\end{array}\right):\mathbb{Z}_+\rightarrow \mathcal{M}_{m_2,m_1}^N$ are
bounded matrix valued sequences having the properties:
\begin{itemize}
\item[i)]  The zero solution of the stochastic linear system:
\begin{align}
\label{sys_closed}
x(t+1)&=A_{0\mathbb{K}}(t,\theta_t)x(t)+\sum_{k=1}^rw_k(t)A_{k\mathbb{K}}(t,\theta_t)x(t)
\end{align}
is ESMS-CI (see Definition 3.1 from \cite{carte2010} for details).
\item[ii)] The corresponding GDTRE (\ref{Ric_closed}) has a unique bounded and stabilizing solution $\tilde{\mathbb{X}}_{\mathbb{K}\mathbb{W}}(\cdot)$ satisfying the sign condition:
\begin{equation}
\label{sign_closed}
R_\mathbb{W}(t,i)+\Pi_\mathbb{W}(t)[\tilde{\mathbb{X}}_{\mathbb{K}\mathbb{W}}(t+1)](i)\leq -\xi \mathbb{I}
\end{equation}
for some positive scalar $\xi$ (which may depend upon $\left(\mathbb{K}(\cdot),\mathbb{W}(\cdot)\right)$), $(t,i)\in \mathbb{Z}_+\times \mathfrak{N}$.
\end{itemize}
\begin{rem}\label{R4.1}
\begin{itemize}
\item [i)] In order to obtain necessary and sufficient conditions that allow us to decide if the set $\mathcal{A}^{\Sigma}$ is empty or not, one can apply Theorem 5.6 in \cite{carte2010} to the Riccati difference equation:
\begin{align}\label{Ric_closed_modif}
{Y}(t,i)&=\Pi_\mathbb{K}(t)[\mathbb{Y}(t+1)](i)-M_\mathbb{K}(t,i)-\big(\Pi_{\mathbb{K}\mathbb{W}}(t)[\mathbb{Y}(t+1)](i)-L_\mathbb{W}(t,i)\big)\notag\\
&\times\big(-R_\mathbb{W}(t,i)+\Pi_\mathbb{W}(t)[\mathbb{Y}(t+1)](i)\big)^{-1}\star
\end{align}
obtained from (\ref{Ric_closed}) by taking $Y(t,i)=-X(t,i)$, $(t,i)\in \mathbb{Z}_+\times \mathfrak{N}$ and noticing that $\tilde{\mathbb{X}}(t)$ is the bounded and stabilizing solution of (\ref{Ric_closed}) satisfying the sign condition (\ref{sign_closed}) if and only if $\tilde{\mathbb{Y}}(t)$ is the bounded and stabilizing solution of (\ref{Ric_closed_modif}) satisfying the sign condition:
\begin{equation}
\label{sign_closed_modif}
\tilde{R}_\mathbb{W}(t,i)+\Pi_\mathbb{W}(t)[\tilde{\mathbb{Y}}(t+1)](i)\geq \nu \mathbb{I}
\end{equation}
for some positive scalar $\nu$ and $\tilde{R}_\mathbb{W}(t,i)=-R_\mathbb{W}(t,i)$,  $(t,i)\in \mathbb{Z}_+\times \mathfrak{N}$. Such a necessary and sufficient condition is formulated as a convex optimization problem via an LMIs (Linear Matrix Inequalities) setting.
\item [ii)] It follows from the exponential stability in the mean square sense with conditioning of system (\ref{sys_closed}) and sign conditions (\ref{sign1}) and (\ref{sign2}) one can show that if $\left(\mathbb{K}(\cdot), \mathbb{W}(\cdot)\right)\in \mathcal{A}^{\Sigma}$ then the bounded and stabilizing solution of (\ref{Ric_closed}) is positive semi-definite.
\end{itemize}
\end{rem}
Among the subsets of ${\cal A}^{\Sigma}$ we distinguish the subset ${\cal A}_{0}^{\Sigma}$ that consists of all feedback gains $({\mathbb K}(\cdot),{\mathbb W}(\cdot))\in{\cal A}^{\Sigma}$ such that $W(t,i)=0$ for all $(t,i)\in{\mathbb Z}_+\times{\mathfrak N}$.
The bounded and stabilizing solution of the GDTRE (\ref{Ric_closed}) with $W(t,i)=0$ will be denoted by $\tilde{\mathbb{X}}_{\mathbb{K}}(t)=(\tilde X_{\mathbb{K}}(t,1),\cdots, \tilde X_{\mathbb{K}}(t, N))$.
The sign condition (\ref{sign_closed}) reduces in this case to:
\begin{equation}\label{sign_closed'}
R_{11}(t,i)+\sum\limits_{k=0}^rB_{k1}^T(t,i){\tilde{X}}_{\mathbb{K}}(t,i)B_{k1}(t,i)\leq -\xi I_{m_1}.
\end{equation}
\noindent Note that Remark \ref{R4.1} could be specialized in a straightforward way to the set ${\cal A}_{0}^{\Sigma}$ and the corresponding stabilizing solution $\tilde{\mathbb{X}}_{\mathbb{K}}(t)$ of the GDTRE (\ref{Ric_closed}). Hence by combining the sign condition (\ref{sign_closed'}) with (\ref{sign1}) and (\ref{sign2}) one shows easily that $\tilde{\mathbb{X}}_{\mathbb{K}}(t)$ verifies the sign conditions (\ref{e2.3}) and (\ref{e2.4}).

\subsection{Main results}
Before introducing the main existence conditions, let us first start by stating and proving a comparison theorem for the considered class of generalized Riccati equations. Such a result shows the monotonicity of the admissible solutions (in the sense of conditions (\ref{e2.3})-(\ref{e2.4})) of the Riccati difference equation (\ref{GRDE}). This result will play a key role in the proof of the proposed existence conditions. 

\begin{thm}\label{T5.1}
Let $\mathbb{X}^2(\cdot):\mathcal{I}_2\subset {\mathbb Z}_+\rightarrow \mathcal{S}_n^N$ be an admissible solution of a Riccati difference equation of type (\ref{GRDE}) associated to the quadruple $\sum^2=\left(\{{\bf A}(t)\}_{t\geq 0}, \{{\bf B}(t)\}_{t\geq 0}, \{{P_t}\}_{t\geq 0},\{{\mathbb Q}^2(t)\}_{t\geq 0} \right)$  and $\mathbb{X}^1(\cdot):\mathcal{I}_1\subset {\mathbb Z}_+\rightarrow \mathcal{S}_n^N$ be a solution of (\ref{GRDE}) associated to the quadruple
$\sum^1=\big(\{{\bf A}(t)\}_{t\geq 0}, \{{\bf B}(t)\}_{t\geq 0}, \{{P_t}(t)\}_{t\geq 0},$ $\{{\mathbb Q}^1(t)\}_{t\geq 0} \big)$ that verifies the sign condition  (\ref{e2.4}). $\mathbf{A}(t)$, $\mathbf{B}(t)$, $P_t$ are as in Remark \ref{R2.1} while:
\begin{equation}\label{eq.4.1}
Q^l(t,i)=\left(\begin{array}{cc}M^l(t,i) & L^l(t,i) \\ \star & R^l(t,i)\end{array}\right),\;l=1,2.
\end{equation}
Assume that $Q^2(t,i)\geq Q^1(t,i)$ for all $t\in \mathcal{I}_1\cap \mathcal{I}_2$, $1\leq i\leq N$.
If there exists $\tau \in \mathcal{I}_1\cap \mathcal{I}_2$ such that $X^2(\tau,i)\geq X^1(\tau,i)$, $1\leq i\leq N$, then $X^2(t,i)\geq X^1(t,i)$, for all $t\in \left[0, \; \tau\right]\cap \left(\mathcal{I}_1\cap \mathcal{I}_2\right)$, $1\leq i\leq N$.\\
\end{thm}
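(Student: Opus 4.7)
The plan is to derive a backward Lyapunov-type inequality for $\mathbb{Y}(t):=\mathbb{X}^2(t)-\mathbb{X}^1(t)$ via a completion-of-squares identity written with a common, carefully chosen feedback gain $\mathbb{F}(t)$, and then to close the argument by backward induction using the positivity of an associated Lyapunov operator.

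First, I would establish, by a direct algebraic manipulation based on the definition (\ref{e2.8}) of the feedback gain $\mathbb{F}^l(t)$ attached to $\mathbb{X}^l(t+1)$, the elementary identity
\begin{equation*}
X^l(t,i) \;=\; \mathcal{L}_{\mathbb{F}}^{*}(t)[\mathbb{X}^l(t+1)](i) + \begin{pmatrix}I_n\\ F(t,i)\end{pmatrix}^{T}\!Q^l(t,i)\begin{pmatrix}I_n\\ F(t,i)\end{pmatrix} - \bigl(F(t,i)-F^l(t,i)\bigr)^T H^l(t,i)\bigl(F(t,i)-F^l(t,i)\bigr),
\end{equation*}
valid for any bounded matrix-valued sequence $\mathbb{F}(t)=(F(t,1),\ldots,F(t,N))$, where $\mathcal{L}_{\mathbb{F}}^{*}(t)$ denotes the adjoint Lyapunov operator of the form (\ref{e3.2}) built on the closed-loop matrices $A_k(t,i)+B_k(t,i)F(t,i)$, and $H^l(t,i) := R^l(t,i)+\Pi_3(t)[\mathbb{X}^l(t+1)](i)$. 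Subtracting the two identities for $l=2$ and $l=1$ yields
\begin{equation*}
Y(t,i) \;=\; \mathcal{L}_{\mathbb{F}}^{*}(t)[\mathbb{Y}(t+1)](i) + \begin{pmatrix}I_n\\ F(t,i)\end{pmatrix}^{T}\!(Q^2-Q^1)(t,i)\begin{pmatrix}I_n\\ F(t,i)\end{pmatrix} + \mathcal{R}_{\mathbb{F}}(t,i),
\end{equation*}
with $\mathcal{R}_{\mathbb{F}}(t,i) := (F-F^1)^T H^1 (F-F^1) - (F-F^2)^T H^2 (F-F^2)$. The middle quadratic correction is nonnegative by $Q^2\geq Q^1$, so the crux of the proof is to select $\mathbb{F}$ so that $\mathcal{R}_{\mathbb{F}}(t,i)\geq 0$ as well, despite both $H^l$ being in general indefinite.

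The key step is to make, with the block partition of (\ref{partition1})--(\ref{partition2}), the asymmetric choice
\begin{equation*}
F_1(t,i) \;=\; F^1_1(t,i),\qquad F_2(t,i) \;=\; F^2_2(t,i) - \bigl(H^2(t,i)\bigr)_{22}^{-1}\bigl(H^2(t,i)\bigr)_{12}^{T}\bigl(F^1_1(t,i)-F^2_1(t,i)\bigr).
\end{equation*}
The first equality forces $(F-F^1)_1\equiv 0$, hence $(F-F^1)^T H^1 (F-F^1) = (F_2-F^1_2)^T(H^1)_{22}(F_2-F^1_2)\geq 0$ thanks to the sign condition (\ref{e2.4}) applied to $\mathbb{X}^1$. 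The specific definition of $F_2$ is precisely the Schur-complement elimination of the second block of $F-F^2$; a short direct expansion then collapses the second quadratic form to $(F-F^2)^T H^2 (F-F^2) = (F^1_1-F^2_1)^T \mathcal{R}^{\sharp}_{22}(t,\mathbb{X}^2(t+1),i)(F^1_1-F^2_1)\leq 0$ by the admissibility condition (\ref{e2.3}) on $\mathbb{X}^2$. Combining these two facts yields $\mathcal{R}_{\mathbb{F}}(t,i)\geq 0$, and the identity above reduces to $Y(t,i)\geq \mathcal{L}_{\mathbb{F}}^{*}(t)[\mathbb{Y}(t+1)](i)$.

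Finally, since $\mathcal{L}_{\mathbb{F}}^{*}(t)$ is a positive operator on $\mathcal{S}_n^N$ (visibly a nonnegative combination of congruences of positive semidefinite matrices, as in (\ref{e3.2})), a straightforward backward induction starting from the hypothesis $\mathbb{Y}(\tau)\geq 0$ delivers $\mathbb{Y}(t)\geq 0$ for every $t\in[0,\tau]\cap(\mathcal{I}_1\cap\mathcal{I}_2)$, which is the required conclusion. I expect the principal difficulty to lie in the construction of $\mathbb{F}$: neither $\mathbb{F}^1$ nor $\mathbb{F}^2$ on its own renders $\mathcal{R}_{\mathbb{F}}$ sign-definite (precisely because the $H^l$ are indefinite), and the asymmetric mixture above mirrors the asymmetry of the hypotheses, exploiting the full admissibility of $\mathbb{X}^2$ (through its strictly negative Schur complement $\mathcal{R}^{\sharp}_{22}$) and only the weaker positivity of the $(2,2)$-block $(H^1)_{22}$ available for $\mathbb{X}^1$.
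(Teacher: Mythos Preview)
Your proof is correct and follows essentially the same route as the paper's: the completion-of-squares identity you state is precisely Lemma~A.1, and your asymmetric choice $F_1=F_1^1$, $F_2=F_2^2-(H^2)_{22}^{-1}(H^2)_{12}^{T}(F_1^1-F_1^2)$ coincides (after unpacking the factorization in Remark~\ref{R4.2a}, where $(V_{22}^2)^{-1}V_{21}^2=(H^2)_{22}^{-1}(H^2)_{12}^{T}$) with the paper's choice of $\Gamma$, leading to the same nonnegative forcing term and the same backward induction.
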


\begin{proof} Since $\mathbb{X}^2(\cdot)$ is an admissible solution it follows that it satisfies the sign conditions of type (\ref{e2.3})-(\ref{e2.4}). Therefore, and taking into account Remark \ref{R4.2a}, the following factorization takes place:
\begin{equation}
R^2(t,i)+\Pi_3(t)[\mathbb{X}^2(t)](i)=\left(\begin{array}{cc}V_{11}^2(t,i) & \mathbf{0} \\V_{21}^2(t,i)& V_{22}^2(t,i)\end{array}\right)^T\left(\begin{array}{cc}-{\mathbb I}_{m_1} & \mathbf{0} \\\mathbf{0} & {\mathbb I}_{m_2}\end{array}\right)\star
\end{equation}
$R^2(t,i)$ being the $(2,2)$ block of (\ref{eq.4.1}) and $V_{jk}^2(t,i)$ stand for $V_{jk}(t)[\mathbb{X}^2(t)](i)$, $1\leq i\leq N$.\\
\noindent Let $\Gamma(t,i)=\left(\begin{array}{c}\Gamma_1(t,i) \\\Gamma_2(t,i)\end{array}\right)$, $\Gamma_k(t,i)\in \mathbb{R}^{m_k\times n}$, $k=1,2$, $1\leq i\leq N$. Applying Lemma A.1 from Appendix A, we deduce that the Riccati difference equation solved by $\mathbb{X}^l(\cdot)$, $l=1, 2$, may be rewritten in the form:
\begin{equation}\label{eq.4.3.}
X^l(t,i)=\mathcal{L}_\Gamma^*(t)[\mathbb{X}^l(t+1)](i)+\mathcal{Q}_\Gamma^l(t,i)-\left(\Gamma(t,i)-F^l(t,i)\right)^T\left(R^l(t,i)+\Pi_3(t)[\mathbb{X}^l(t)](i)\right)\star
\end{equation}
where:
\begin{equation}\label{eq.4.4.}
\mathcal{Q}^l_\Gamma(t,i)=\left(\begin{array}{cc} I_n & \Gamma^T(t,i)\end{array}\right)Q^l(t,i)\star
\end{equation}
\begin{equation*}
F^l(t,i)=-\left(R^l(t,i)+\Pi_3(t)[\mathbb{X}^l(t+1)](i)\right)^{-1}\left(\Pi_2^T(t)[\mathbb{X}^l(t+1)](i)+(L^l(t,i))^T\right).
\end{equation*}
Let:
\begin{equation}\label{eq.4.6.bis}
\begin{cases}
F_1^l(t,i)=\left(\begin{array}{cc} I_{m_1} & \mathbf{0}\end{array}\right)F^l(t,i)\\
F_2^l(t,i)=\left(\begin{array}{cc}\mathbf{0} & I_{m_2}\end{array}\right)F^l(t,i)
\end{cases},\;l=1,2,\;1\leq i\leq N
\end{equation}
Taking:
\begin{equation}\label{eq.4.6.bis1}
\begin{cases}
\Gamma_1(t,i)=F^1_1(t,i)\\
\Gamma_2(t,i)=F^2_2(t,i)-\left(V_{22}^2(t,i)\right)^{-1}V_{21}^2(t,i)\left(F_1^1(t,i)-F_1^2(t,i)\right)
\end{cases}
,\;1\leq i\leq N
\end{equation}
and employing (\ref{eq.4.3.}) and (\ref{eq.4.6.bis})-(\ref{eq.4.6.bis1}) one obtains that: $t\rightarrow \mathbb{X}^2(t)-\mathbb{X}^1(t)$ solves the backward difference equation:
\begin{equation}\label{eq.4.7.}
\mathbb{X}^2(t)-\mathbb{X}^1(t)=\mathcal{L}_\Gamma^*(t)\left[\mathbb{X}^2(t+1)-\mathbb{X}^1(t+1)\right]+\mathbb{H}(t),\; t\in \mathcal{I}_1\cap \mathcal{I}_2
\end{equation}
where $\mathbb{H}(t)=\left(\begin{array}{ccc}H(t,1), & \cdots, & H(t,N)\end{array}\right)$ with:
\begin{align}\label{eq.4.8.}
H(t,i)&=\mathcal{Q}_\Gamma^2(t,i)-\mathcal{Q}_\Gamma^1(t,i)+\left(F_1^1(t,i)-F_1^2(t,i)\right)^T\left(V_{11}^2(t,i)\right)^2\star\notag\\
&+\left(\Gamma_2(t,i)-F_2^1(t,i)\right)^T\left(R_{22}^1(t,i)+\Pi_{322}(t)\left[\mathbb{X}^1(t+1)\right](i)\right)\star,\;1\leq i\leq N.
\end{align}
The assumption $Q^2(t,i)\geq Q^1(t,i)$ together with (\ref{eq.4.4.}) yields: $\mathcal{Q}_\Gamma^2(t,i)- \mathcal{Q}_\Gamma^1(t,i)\geq 0$, $t\in \mathcal{I}_1\cap \mathcal{I}_2$, $1\leq i\leq N$.
Hence (\ref{eq.4.8.}) allows us to conclude that $H(t,i)\geq 0$, $t\in \mathcal{I}_1\cap \mathcal{I}_2$, $1\leq i\leq N$. So, by induction, the proof is completed. \hfill$\blacksquare$
\end{proof}

\noindent For each $\tau>0$ we consider $\mathbb{X}_\tau(t)=\left(\begin{array}{ccc}X_\tau(t,1), & \cdots & X_\tau(t,N)\end{array}\right)$ the solution of (\ref{GRDE}) satisfying the condition $X_\tau(\tau+1,i)=0$, $1\leq i\leq N$ and $\mathcal{I}(\tau)\subset\left[0, \tau+1\right]$ the maximal interval on which $\mathbb{X}_\tau(\cdot)$ is defined. Note that under assumption $\textbf{H4)}$, $R(t,i)$ is boundedly invertible for every $(t,i)\in \mathbb{Z}_+\times \mathfrak{N}$, hence  $\mathcal{I}(\tau)$ is not empty.\\

\noindent We are now in position to prove the following result.\\

\begin{prop}\label{P5.5}
Assume:
\begin{itemize}
\item[a)] The assumptions {\bf H1)} a), {\bf H2)}, {\bf H3)} a) and {\bf H4)} are fulfilled.
\item[b)] ${\cal A}^{\Sigma}$ is not empty.
\end{itemize}
Under these assumptions, the following hold:
\begin{itemize}
 \item [i)] For any $\tau>0$ the solution $\mathbb{X}_\tau(t)$ is well defined for all $0\leq t\leq \tau+1$.
\item [ii)] For any $\tau>0$ we have $0\leq X_\tau(t,i)\leq \hat{X}(t,i)$ for all $0\leq t\leq \tau+1$, $i\in \mathfrak{N}$, $\mathbb{\hat{X}}(t)=(\hat{X}(t,1),  \cdots  ,\hat{X}(t,N))$, $t\in \mathbb{Z}_+$ being an arbitrary admissible globally defined positive semi-definite solution of (\ref{GRDE}).
\item [iii)] For any $\tau>0$, $\mathbb{X}_\tau(t)$ satisfies the sign conditions
 (\ref{e2.3})-(\ref{e2.4}) for all $0\leq t\leq \tau+1$ with the constants $\delta_k>0$ not depending upon $\tau$.
\item [iv)] There exists $\mu>0$ such that $0\leq X_{\tau_1}(t,i)\leq X_{\tau_2}(t,i)\leq \mu I_n$ for all $0\leq t\leq \tau_1\leq \tau_2$, $i\in \mathfrak{N}$.
\item [v)] If $\tilde{X}(t,i)=\underset{\tau\rightarrow \infty}{\lim}X_\tau(t,i)$ then $\mathbb{\tilde{X}}(t)=(\tilde{X}(t,1),  \cdots, \tilde{X}(t,N))$, $t\geq 0$, is the admissible minimal positive semi-definite solution of (\ref{GRDE}).
\item [vi)] If $t\rightarrow A_j(t,i)$, $t\rightarrow B_j(t,i)$, $0\leq j\leq r$, $t\rightarrow P(t)$, $t\rightarrow L(t,i)$, $t\rightarrow M(t,i)$ and $t\rightarrow R(t,i)$ are $\mathfrak p$-periodic sequences  then $\tilde{\mathbb{X}}(\cdot)$ is also a $\mathfrak p$-periodic sequence.
\item [vii)] If $\mathcal{A}_0^{\Sigma}$ is not empty, then the solution $\tilde{\mathbb{X}}(\cdot)$ satisfies besides (\ref{e2.4}) the following sign condition:
\begin{equation}\label{Prop4.2_equa1}
R_{11}(t,i)+\sum_{k=1}^rB_{k1}^T(t,i)\tilde{X}(t,i)B_{k1}(t,i)\leq -\mu I_{m_1}
\end{equation}
for all $(t,i)\in [0,\tau+1]\times \mathfrak{N}$, $\tau>0$, $\mu>0$ being a constant independent upon $t,i,\tau$.
\end{itemize}
\end{prop}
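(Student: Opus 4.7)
The plan is to combine the comparison Theorem \ref{T5.1} with a bounded, nonnegative, admissible reference solution of (\ref{GRDE}) manufactured from the assumption $\mathcal{A}^{\Sigma}\neq\emptyset$, and then to propagate the resulting a priori bounds through the seven items. Throughout, Lemma A.1 (the completion-of-squares rewriting of (\ref{GRDE})) plays a central role. Fix $(\mathbb{K}_0,\mathbb{W}_0)\in\mathcal{A}^{\Sigma}$ and let $\tilde{\mathbb{X}}_0:=\tilde{\mathbb{X}}_{\mathbb{K}_0\mathbb{W}_0}$ be the associated bounded stabilizing solution of the closed-loop Riccati equation (\ref{Ric_closed}), which is PSD by Remark \ref{R4.1}(ii). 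Applying Lemma A.1 to (\ref{GRDE}) with the partitioned feedback $F=(F_1^T,(K_0+W_0F_1)^T)^T$ and a suitable choice of $F_1$ exhibits $\tilde{\mathbb{X}}_0$ as a bounded admissible PSD solution of a GDTRE of type (\ref{GRDE}) associated with an augmented quadruple $\Sigma^+$ whose weight satisfies $\mathbb{Q}^+(t)\geq\mathbb{Q}(t)$; the admissibility sign conditions (\ref{e2.3})-(\ref{e2.4}) for $\tilde{\mathbb{X}}_0$ come from the closed-loop sign condition (\ref{sign_closed}) via the block manipulations of Remark \ref{R4.2a} together with Assumption H4. Theorem \ref{T5.1} then delivers $X_\tau(t,i)\leq \tilde X_0(t,i)\leq \mu I_n$ on the interval of definition, proving the $\tau$-uniform upper bound in (iv). Applied instead to an arbitrary admissible globally defined PSD solution $\hat{\mathbb{X}}$ of the same (\ref{GRDE}), Theorem \ref{T5.1} yields the upper estimate in (ii) as soon as $\mathbb{X}_\tau$ satisfies (\ref{e2.4}).

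\emph{Items (i) and (iii) jointly, by backward induction on $t$ from $\tau+1$.} At $t=\tau$ one has $X_\tau(\tau+1,\cdot)=0$, so the matrix to invert in (\ref{GRDE}) is $R(\tau,\cdot)$, whose block structure realizes (\ref{e2.3})-(\ref{e2.4}) with $\delta_1=\rho_1$ and $\delta_2=\rho_2$ by H4. For the inductive step, the upper bound of (iv), which uses only the portion of the recursion already constructed, confines $\Pi_3(t)[\mathbb{X}_\tau(t+1)](i)$ to a compact set independent of $\tau$; combined with H4 and a continuity argument on the Schur-complement representation of Remark \ref{R4.2a}, this preserves the block inertia of $R+\Pi_3$ with uniform positive constants $\delta_1,\delta_2$, simultaneously guaranteeing invertibility --- hence the existence in (i) --- and the uniform sign bounds of (iii). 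The PSD lower bound $X_\tau\geq 0$ is obtained from Lemma A.1 by reading $x^T X_\tau(t_0,i) x$ as the infimum over a suitable class of controls of a finite-horizon quadratic cost whose reduced running weight $M-L_2 R_{22}^{-1} L_2^T$ is positive semi-definite by (\ref{sign2}).

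\emph{Items (v), (vi), (vii).} Monotonicity of $\tau\mapsto\mathbb{X}_\tau(t)$ is a further application of Theorem \ref{T5.1}, taking $\mathbb{X}^1=\mathbb{X}_{\tau_1}$, $\mathbb{X}^2=\mathbb{X}_{\tau_2}$ for $\tau_1<\tau_2$ and using $X_{\tau_2}(\tau_1+1,i)\geq 0=X_{\tau_1}(\tau_1+1,i)$ together with (iii). The resulting pointwise bounded monotone sequence converges to $\tilde{\mathbb{X}}(t)$, which solves (\ref{GRDE}) by passing to the limit in the recursion and inherits admissibility and positive semi-definiteness from (iii)-(iv); minimality among admissible PSD solutions is just (ii) in the limit, yielding (v). Periodicity (vi) follows the template of the proof of Theorem \ref{T3.3}(ii): periodicity of the coefficients shows that $t\mapsto\tilde{\mathbb{X}}(t+\mathfrak{p})$ is again a minimal admissible PSD bounded solution of (\ref{GRDE}), hence coincides with $\tilde{\mathbb{X}}(t)$ by the minimality characterization. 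For (vii), specializing the reference-solution construction to $(\mathbb{K}_0,0)\in\mathcal{A}_0^{\Sigma}$ gives the pointwise bound $\tilde{\mathbb{X}}(t,i)\leq \tilde X_{\mathbb{K}_0}(t,i)$; monotonicity of $X\mapsto B_{k1}^T X B_{k1}$ in $X$ then transfers the sign condition (\ref{sign_closed'}) satisfied by $\tilde{\mathbb{X}}_{\mathbb{K}_0}$ to $\tilde{\mathbb{X}}$, yielding (\ref{Prop4.2_equa1}).

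The main obstacle I anticipate is the construction of the reference admissible solution in the first paragraph: $\tilde{\mathbb{X}}_{\mathbb{K}_0\mathbb{W}_0}$ only solves a priori the closed-loop equation (\ref{Ric_closed}), whose $R_\mathbb{W}+\Pi_\mathbb{W}$ is sign-definite, not the sign-indefinite original (\ref{GRDE}). The bridge is Lemma A.1, but the block-by-block verification that the augmented quadruple $\Sigma^+$ satisfies $\mathbb{Q}^+\geq\mathbb{Q}$ and that the admissibility conditions (\ref{e2.3})-(\ref{e2.4}) hold with uniform positive constants is the delicate technical content. Everything else --- backward induction, monotone convergence, periodicity via minimality, and (vii) --- follows standard patterns once this reference solution is in place.
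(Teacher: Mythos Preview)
Your overall architecture is right---one needs a bounded PSD reference to dominate $\mathbb{X}_\tau$, and once that upper bound is in place, items (i)--(vii) follow by the pattern you describe. But the step you yourself flag as the main obstacle is a genuine gap, and Lemma A.1 will not close it.

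Concretely: Theorem \ref{T5.1} compares two solutions of GDTREs that share the \emph{same} triple $(\mathbf{A},\mathbf{B},P)$ and differ only in $\mathbb{Q}$. The reference $\tilde{\mathbb{X}}_{\mathbb{K}_0\mathbb{W}_0}$, however, solves the closed-loop equation (\ref{Ric_closed}), whose coefficients are $A_{k\mathbb{K}}=A_k+B_{k2}K_0$ and $B_{k\mathbb{W}}=B_{k1}+B_{k2}W_0$ with an $m_1$-dimensional input. Lemma A.1 only rewrites a \emph{given} GDTRE as a perturbed Lyapunov equation; it does not manufacture a new GDTRE with the original $(\mathbf{A},\mathbf{B})$ that $\tilde{\mathbb{X}}_{\mathbb{K}_0\mathbb{W}_0}$ happens to solve. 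If you attempt the ``embedding'' you sketch, writing $\tilde{\mathbb{X}}_{\mathbb{K}_0\mathbb{W}_0}$ via Lemma A.1 with feedback $\Gamma=\bigl(\Gamma_1^T,(K_0+W_0\Gamma_1)^T\bigr)^T$, the equation you recover differs from the original GDTRE by a term of the form $(\Gamma-F)^T\bigl(R+\Pi_3[\tilde{\mathbb{X}}_{\mathbb{K}_0\mathbb{W}_0}]\bigr)(\Gamma-F)$, and this residual is \emph{sign-indefinite} because $R+\Pi_3$ is. There is no natural PSD correction $\Delta\mathbb{Q}\geq 0$ that absorbs it, so the comparison theorem is not available for the pair $(\mathbb{X}_\tau,\tilde{\mathbb{X}}_{\mathbb{K}_0\mathbb{W}_0})$.

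The paper therefore does \emph{not} route the key bound $0\leq X_\tau(t,i)\leq \tilde{X}_{\mathbb{KW}}(t,i)$ through Theorem \ref{T5.1}. Instead it uses the game-theoretic (cost-functional) interpretation of $\mathbb{X}_\tau$: Lemma C.1(ii) gives $x_0^TX_\tau(t_0,i)x_0\leq \mathcal{J}(t_0,\tau,x_0,i,\tilde{u}_1^\tau,u_2)$ for every $u_2$; Lemma C.2 gives $\mathcal{J}(t_0,\tau,x_0,i,u_1,u_2^{\mathbb{KW}})\leq x_0^TX^\tau_{\mathbb{KW}}(t_0,i)x_0$ for every $u_1$; chaining these with Proposition B.3(ii) yields $X_\tau\leq X^\tau_{\mathbb{KW}}\leq\tilde{X}_{\mathbb{KW}}$. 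This is Lemma C.3(i), and it is the substitute for your ``reference admissible solution'' step. The lower bound $X_\tau\geq 0$ is then obtained not from a cost interpretation but algebraically (Lemma C.3(i) proof): Lemma A.1 with the specific choice $\Gamma_1=0$, $\Gamma_2=F_2-WF_1$ rewrites (\ref{GRDE}) with a free term that is PSD by \textbf{H4)} plus a term controlled by (\ref{L3.3_equa2}), and backward induction finishes.

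Two smaller points. First, your ``continuity argument'' for the uniform sign condition (\ref{e2.3}) in (iii) should really be a Schur-complement \emph{monotonicity} argument: once $0\leq X_\tau\leq\tilde{X}_{\mathbb{KW}}$ is known, $R+\Pi_3[X_\tau]\leq R+\Pi_3[\tilde{X}_{\mathbb{KW}}]$, and then Proposition B.1 together with the monotonicity of Schur complements (the paper invokes Corollary 2.6 of \cite{Freiling:03}) gives (\ref{e2.3}) with $\delta_1$ inherited from (\ref{sign_closed}). Compactness alone does not produce a \emph{uniform} positive $\delta_1$. Second, your use of Theorem \ref{T5.1} for (ii) and for the monotonicity in (iv) is exactly what the paper does, and (v)--(vii) proceed as you outline.
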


\begin{proof}
\begin{itemize}
\item [i)] Let us assume that ${\cal I}(\tau)=\{t_0, t_0+1, ..., \tau+1\}$ with $t_0\geq 1$. Based on (\ref{e80a}) and (\ref{e80b}) it follows that $R(t_0-1,i)+\Pi_3(t_0-1)[{\mathbb X}_{\tau}(t_0)](i)$ are invertible. Hence, ${\mathbb X}_{\tau}(t_0-1)=(X_{\tau}(t_0-1,1), ..., X_{\tau}(t_0-1,N))$ is well defined by (\ref{GRDE}) written for $t=t_0-1$. In this way, we have shown that $t_0-1\in{\cal I}(\tau)$.
thus one obtains inductively that ${\mathbb X}_{\tau}(t)$ is well defined for $0\leq t\leq \tau+1$.
\item [ii)] It follows immediately from Theorem \ref{T5.1}.
\item [iii)] It follows directly from(\ref{e80a}) and(\ref{e80b}) written for $t_0=t+1$, $t=0,1,...,\tau$.
\item [iv)] The proof of $0\leq X_{\tau_1}(t,i)\leq X_{\tau_2}(t,i)$ is a direct consequence of the Comparison Theorem (Theorem \ref{T5.1}).
The existence of $\mu>0$ follows from Lemma C.3. (i) and the boundedness property of $\tilde{\mathbb{X}}_{\mathbb{K}\mathbb{W}}(\cdot)$.
\item [v)] The existence of the limit $\tilde{X}(t,i)=\underset{\tau\rightarrow \infty}{\lim}X_\tau(t,i)$ follows from iv) above.
The minimality and admissibility properties are direct consequences of assertion ii) and iii).
\item [vi)] The periodicity property can be proved using the same steps as in in the proof of Theorem 1 from \cite{SV:Automatica}.
\item [vii)] The result follows from Lemma C.3. i) and (\ref{sign_closed'}).
\end{itemize}\hfill$\blacksquare$
\end{proof}

\noindent Let us now introduce the following auxiliary system:
\begin{equation}\label{auxiliary}
\begin{cases}
x(t+1)={\check{A}}_0(t,\theta_t)x(t)+\sum_{j=1}^rw_j(t){\check{A}}_j(t,\theta_t)x(t)\\
y(t)={\check{C}}(t,\theta_t)x(t)
\end{cases}
\end{equation}
where
\begin{equation}\label{e3.5}
{\check{A}}_j(t,i)=A_j(t,i)-B_{j2}(t,i)R_{22}^{-1}(t,i)L_2^T(t,i),\;\; 0\leq j\leq r
\end{equation}
and ${\check{C}}(t,i)$ are obtained from the factorization $M(t,i)-L_2(t,i)R_{22}^{-1}(t,i)L_2^T(t,i)={\check{C}}^T(t,i){\check{C}}(t,i)$,
for all $i\in \mathfrak{N}$, $t\geq 0$.\\

\noindent The main result of this section is given as follows:\\
\begin{thm}\label{T5.6}
Assume that:
\begin{itemize}
\item [a)] the assumptions \textbf{H1})-\textbf{H4}) are fulfilled;
\item [b)] the set $\mathcal{A}^{\Sigma}$ is not empty;
\item [c)] there exists an initial distribution $\tilde{\pi}_0$ such that for the Markov chain $(\{\tilde{\theta}_t\}_{t\geq 0},\{P_t\}_{t\geq 0},\mathfrak{N})$ (with initial distribution $\tilde{\pi}_0$), there exists $\delta>0$ such that:
\begin{equation*}\label{H4}
\tilde{\pi}_t(i)=\mathrm{P} \{\tilde{\theta}_t=i\}\geq \delta,\; t\geq 0,\; i \in \mathfrak{N}.
\end{equation*}
\item[d)] the auxiliary system (\ref{auxiliary}) is stochastically detectable:
\end{itemize}
then:
\begin{itemize}
\item [i)] $\tilde{\mathbb{X}}(\cdot)$ defined in v) from Proposition \ref{P5.5} coincides with the unique admissible stabilizing solution ${\mathbb{X}}_s(\cdot)$ of (\ref{GRDE}).
\item [ii)] If $\mathcal{A}^{\Sigma}_0$ is not empty, then the solution $\tilde{\mathbb{X}}(\cdot)$ satisfies (\ref{e2.4}) together with the sign condition (\ref{Prop4.2_equa1}).
\end{itemize}
\end{thm}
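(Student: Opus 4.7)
The plan is to take $\tilde{\mathbb{X}}(\cdot)$ as supplied by Proposition~\ref{P5.5}(v): under hypotheses (a)-(b) this is a bounded, positive semi-definite, admissible solution of the GDTRE satisfying the sign conditions (\ref{e2.3})--(\ref{e2.4}) uniformly on $\mathbb{Z}_+$. All that remains for (i) is to verify that $\tilde{\mathbb{X}}(\cdot)$ is stabilizing in the sense of Definition~\ref{D2.1}; the coincidence with $\mathbb{X}_s(\cdot)$ then follows from the uniqueness part of Theorem~\ref{T3.3}. Part (ii) is immediate on passing to the limit $\tau\to\infty$ in Proposition~\ref{P5.5}(iii) and (vii).

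For the stability step, I would rewrite the GDTRE at $\tilde{\mathbb{X}}$ as a Lyapunov equation whose nonnegative forcing dominates $\check{C}^T\check{C}$. Partitioning $\tilde{F}_s=(\tilde{F}_1;\tilde{F}_2)$ as in (\ref{partition1}) and using the factorization $R+\Pi_3[\tilde{\mathbb{X}}] = V^T\,\mathrm{diag}(-I_{m_1},I_{m_2})\,V$ of Remark~\ref{R4.2a}, apply Lemma~A.1 with the auxiliary gain $\Gamma_1(t,i)=0$, $\Gamma_2(t,i)=\tilde{F}_2(t,i)+V_{22}^{-1}(t,i)V_{21}(t,i)\tilde{F}_1(t,i)$. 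The lower-triangular structure of $V$ is designed so that the indefinite remainder $-(\Gamma-\tilde{F}_s)^T(R+\Pi_3[\tilde{\mathbb{X}}])(\Gamma-\tilde{F}_s)$ collapses to $\tilde{F}_1^T V_{11}^2 \tilde{F}_1\geq 0$, and a completion-of-squares calculation under \textbf{H4)} yields $\mathcal{Q}_\Gamma(t,i)\geq \check{C}^T(t,i)\check{C}(t,i)$. Hence
\begin{equation*}
\tilde{X}(t,i) = \mathcal{L}_\Gamma^*(t)[\tilde{\mathbb{X}}(t+1)](i) + \mathcal{G}(t,i),\qquad \mathcal{G}(t,i)\geq \check{C}^T(t,i)\check{C}(t,i),
\end{equation*}
where $\mathcal{L}_\Gamma$ is built from the coefficients $A_k+B_{k2}\Gamma_2 = \check{A}_k+B_{k2}(\Gamma_2+R_{22}^{-1}L_2^T)$, i.e., the auxiliary system (\ref{auxiliary}) closed with a bounded, $\tilde{\mathbb{X}}$-dependent state feedback through $B_{k2}$.

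Since stochastic detectability of $(\check{A},\check{C})$ is preserved under such a bounded state feedback, the pair associated with $\mathcal{L}_\Gamma$ remains detectable; combined with the bounded nonnegative solution of the displayed Lyapunov equation, the dominance $\mathcal{G}\geq \check{C}^T\check{C}$, and hypothesis (c)---which supplies the uniform positive mode distribution needed to lift unconditional estimates to the conditional ones of Definition~\ref{D2.1}(b)---a stochastic Lyapunov--detectability theorem in the spirit of Theorem~4.6 of \cite{carte2010}, adapted to the nonhomogeneous Markov jump setting, yields exponential stability of $\mathcal{L}_\Gamma$. The main obstacle is the final transfer from $\mathcal{L}_\Gamma$ to the operator $\mathcal{L}_{\tilde{\mathbb{F}}_s}$ appearing in Definition~\ref{D2.1}(b), since the two are built from different closed-loop coefficients. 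This would be handled by invoking Lemma~A.1 a second time with $\Gamma=\tilde{F}_s$, producing a Lyapunov identity for $\tilde{\mathbb{X}}$ with respect to $\mathcal{L}_{\tilde{\mathbb{F}}_s}$ whose forcing term, after using the uniform bounds on $V_{11},V_{22},V_{21}$ (which follow from the constants $\delta_1,\delta_2$ of (\ref{e2.3})--(\ref{e2.4}) and from Proposition~\ref{P5.5}(iv)), inherits enough positivity to invoke Corollary~\ref{C3.1} and conclude exponential stability of $\mathcal{L}_{\tilde{\mathbb{F}}_s}$.
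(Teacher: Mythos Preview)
Your approach is sound up to and including the exponential stability of $\mathcal{L}_\Gamma$: the choice $\Gamma_1=0$, $\Gamma_2=\tilde F_2+V_{22}^{-1}V_{21}\tilde F_1$, the identity
\[
\tilde X(t,i)=\mathcal{L}_\Gamma^*(t)[\tilde{\mathbb X}(t+1)](i)+\mathcal{G}(t,i),\qquad \mathcal{G}(t,i)\geq \check C^T(t,i)\check C(t,i),
\]
and the detectability argument are exactly what the paper does in Lemma~\ref{L5.7}. The gap is in your last paragraph, the transfer from $\mathcal{L}_\Gamma$ to $\mathcal{L}_{\tilde{\mathbb F}_s}$.

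Applying Lemma~A.1 with $\Gamma=\tilde F_s$ gives $\tilde X(t,i)=\mathcal{L}_{\tilde{\mathbb F}_s}^*(t)[\tilde{\mathbb X}(t+1)](i)+\mathcal{M}_{\tilde F_s}(t,i)$ with vanishing remainder, and the forcing is $\mathcal{M}_{\tilde F_s}(t,i)=\bigl(I_n\ \tilde F_s^T\bigr)Q(t,i)\bigl(I_n\ \tilde F_s^T\bigr)^T$. Under {\bf H4)} the weight $Q(t,i)$ is genuinely indefinite (its Schur complement in the $R_{11}$ block is $\leq -\rho_1 I_{m_1}$), so $\mathcal{M}_{\tilde F_s}$ carries no sign; the bounds on $V_{11},V_{21},V_{22}$ concern $R+\Pi_3[\tilde{\mathbb X}]$ and give you nothing here. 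Corollary~\ref{C3.1} therefore does not apply, and no amount of juggling the identity will produce a uniformly positive forcing for $\mathcal{L}_{\tilde{\mathbb F}_s}^*$---if it did, the sign-indefinite case would be no harder than the sign-definite one.

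The paper closes this gap by a completely different mechanism. It writes the $\tilde F_s$--closed loop as the $\Gamma$--closed loop (already ESMS) plus forcing terms $g_k(t,\theta_t)=\bigl[B_{k1}-B_{k2}V_{22}^{-1}V_{21}\bigr]\tilde F_1(t,\theta_t)x(t)$, $0\leq k\leq r$, and then proves separately (Lemma~\ref{L5.8}) the a~priori estimate
\[
\sum_{t=t_0}^\infty \mathbb{E}\bigl[\,|\tilde F_1(t,\theta_t)\tilde\Phi(t,t_0)x_0|^2\,\big|\,\theta_{t_0}=i\bigr]\leq c\,|x_0|^2 ,
\]
so that the $g_k$ are in $\ell^2$ and a perturbation argument (Corollary~3.9 of \cite{carte2010}) yields ESMS of the $\tilde F_s$--closed loop. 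The proof of this estimate is the substantive step you are missing: it is obtained by contradiction, comparing the finite-horizon game value $x_0^T X_\tau(t_0,i)x_0$ (nonnegative by Lemma~C.3) with the upper bound from Proposition~B.4 coming from a pair $(\mathbb K,\mathbb W)\in\mathcal A^\Sigma$; assumption~(c) enters precisely to convert the conditional blow-up into an unconditional one so that Proposition~B.4 can be invoked. Without this $\ell^2$-control on $\tilde F_1 x$, stability of $\mathcal{L}_\Gamma$ does not propagate to $\mathcal{L}_{\tilde{\mathbb F}_s}$.
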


\noindent In order to prove the above assertions, we need first to prove some auxiliary results.

\begin{lem}\label{L5.7}
 Assume that the assumptions of Theorem \ref{T5.6} hold.
If $\mathbb{X}(\cdot)$ is a bounded on $\mathbb{Z}_+$ positive semi-definite solution of (\ref{GRDE}) then the system:
\begin{align}\label{L3.4_equa6}
x(t+1)&=[A_0(t,\theta_t)+B_{02}(t,\theta_t)V_{22}^{-1}(t,\theta_t)V_2(t,\theta_t)F(t,\theta_t)
+\notag\\&+\sum_{k=1}^rw_k(t)(A_k(t,\theta_t)+B_{k2}(t,\theta_t)V_{22}^{-1}(t,\theta_t)V_2(t,\theta_t)F(t,\theta_t))]x(t)
\end{align}
is ESMS, where $F(t,i)$ is defined as in (\ref{e2.8}) with $\mathbb{X}_s(\cdot)$ being replaced by  ${\mathbb{X}}(\cdot)$ and $V_2(t,\theta_t)=\left[\begin{array}{cc}V_{21}(t,\theta_t) & V_{22}(t,\theta_t)\end{array}\right]$, $V_{jk}(t,\theta_t)=V_{jk}(t)[{\mathbb X}(t+1)](\theta_t)$ introduced in Remark \ref{R4.2a}.
\end{lem}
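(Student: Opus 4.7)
The plan is to use $\mathbb{X}(\cdot)$ itself as a Lyapunov function for the closed-loop dynamics (\ref{L3.4_equa6}) and then upgrade the resulting non-strict Lyapunov inequality to exponential stability via the stochastic detectability hypothesis on the auxiliary system (\ref{auxiliary}).

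First I would rewrite the equation (\ref{GRDE}) satisfied by $\mathbb{X}(\cdot)$ in the Lyapunov-type form of Lemma A.1 used in the proof of Theorem \ref{T5.1}, specialized to the feedback gain
\begin{equation*}
\Gamma(t,i)=\begin{pmatrix}0\\ V_{22}^{-1}(t,i)V_{2}(t,i)F(t,i)\end{pmatrix},
\end{equation*}
so that the closed-loop matrices $A_{k}(t,i)+B_{k}(t,i)\Gamma(t,i)$ are exactly those appearing in (\ref{L3.4_equa6}). With the partition $F=(F_{1};F_{2})$, one has $\Gamma-F=(-F_{1};\,V_{22}^{-1}V_{21}F_{1})$, and the factorization $\mathcal{R}(t,\mathbb{X}(t+1),i)=V^{T}JV$ with $J=\mathrm{diag}(-I_{m_{1}},I_{m_{2}})$ from Remark \ref{R4.2a} makes $V(\Gamma-F)=(-V_{11}F_{1};\,0)$ immediate, so
\begin{equation*}
(\Gamma(t,i)-F(t,i))^{T}\mathcal{R}(t,\mathbb{X}(t+1),i)(\Gamma(t,i)-F(t,i))=-F_{1}^{T}(t,i)V_{11}^{2}(t,i)F_{1}(t,i)\le 0.
\end{equation*}
Substituting into the identity of Lemma A.1 produces the key Lyapunov-form equation
\begin{equation*}
X(t,i)=\mathcal{L}_{\Gamma}^{*}(t)[\mathbb{X}(t+1)](i)+\mathcal{Q}_{\Gamma}(t,i)+F_{1}^{T}(t,i)V_{11}^{2}(t,i)F_{1}(t,i).
\end{equation*}

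Since $\Gamma_{1}\equiv 0$, the sign condition (\ref{sign2}) from \textbf{H4)} (with the factorization $M-L_{2}R_{22}^{-1}L_{2}^{T}=\check{C}^{T}\check{C}$) together with a completion of squares yields $\mathcal{Q}_{\Gamma}(t,i)=\check{C}^{T}\check{C}+\eta^{T}R_{22}\eta$, where $\eta(t,i):=V_{22}^{-1}V_{2}F+R_{22}^{-1}L_{2}^{T}$; moreover $A_{k}+B_{k2}V_{22}^{-1}V_{2}F=\check{A}_{k}+B_{k2}\eta$ for $k=0,\ldots,r$, which rewrites the closed-loop matrices of (\ref{L3.4_equa6}) in the form suitable for invoking the detectability of $(\check{\mathbb{A}},\check{\mathbb{C}})$. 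Propagating $V(t,x,i):=x^{T}X(t,i)x$ along a trajectory of (\ref{L3.4_equa6}), telescoping, and using $\mathbb{X}\ge 0$ and bounded together with $R_{22}\ge\rho_{2}I$ and $V_{11}^{2}\ge\delta_{1}I$ (from (\ref{e2.3})) yields the uniform square-summability bound
\begin{equation*}
\sum_{t=t_{0}}^{\infty}\mathbb{E}\bigl[|\check{C}(t,\theta_{t})x(t)|^{2}+|\eta(t,\theta_{t})x(t)|^{2}+|F_{1}(t,\theta_{t})x(t)|^{2}\,\big|\,\theta_{t_{0}}=i\bigr]\le\kappa|x(t_{0})|^{2}
\end{equation*}
with $\kappa$ independent of $t_{0},x(t_{0}),i$.

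To close the argument I would invoke stochastic detectability of $(\check{\mathbb{A}},\check{\mathbb{C}})$: there exists a bounded injection $\mathbb{H}$ making $x(t+1)=(\check{A}_{0}+H\check{C})x+\sum_{k=1}^{r}w_{k}\check{A}_{k}x$ ESMS-CI. Rewriting (\ref{L3.4_equa6}) as this stable system driven by
\begin{equation*}
g(t):=\bigl[B_{02}(t,\theta_{t})\eta(t,\theta_{t})-H(t,\theta_{t})\check{C}(t,\theta_{t})\bigr]x(t)+\sum_{k=1}^{r}w_{k}(t)B_{k2}(t,\theta_{t})\eta(t,\theta_{t})x(t),
\end{equation*}
the previous estimate gives $\sum_{t\ge t_{0}}\mathbb{E}[|g(t)|^{2}\mid\theta_{t_{0}}=i]\le C|x(t_{0})|^{2}$. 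The standard $\ell^{2}$-input to $\ell^{2}$-state estimate for the ESMS-CI auxiliary dynamics then produces $\sum_{t\ge t_{0}}\mathbb{E}[|x(t)|^{2}\mid\theta_{t_{0}}=i]\le C'|x(t_{0})|^{2}$ uniformly in $t_{0},x(t_{0}),i$, and Corollary \ref{C3.1} applied to the observability Gramian built from this uniform bound (equivalently, constructing a bounded uniformly positive solution of the perturbed backward Lyapunov equation for $\mathcal{L}_{\Gamma}$) delivers exponential stability of $\mathcal{L}_{\Gamma}$, i.e.\ (\ref{L3.4_equa6}) is ESMS.

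The main obstacle is the first step: collapsing the indefinite quadratic $(\Gamma-F)^{T}\mathcal{R}(\Gamma-F)$ to the manifestly non-positive $-F_{1}^{T}V_{11}^{2}F_{1}$ requires the precise pairing of the non-standard gain $\Gamma=(0;\,V_{22}^{-1}V_{2}F)$ with the $V^{T}JV$ factorization from Remark \ref{R4.2a}, which is what turns the indefinite-sign Riccati identity into a valid Lyapunov inequality with non-negative cost and, after matching $\tilde{A}_{k}=\check{A}_{k}+B_{k2}\eta$, allows stochastic detectability of $(\check{\mathbb{A}},\check{\mathbb{C}})$ to be deployed; the subsequent $\ell^{2}$-input/$\ell^{2}$-state and observability-Gramian steps are routine for this class of Markov-jump discrete-time stochastic linear systems.
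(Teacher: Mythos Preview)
Your proposal is correct and follows essentially the same route as the paper's own proof: the choice $\Gamma=(0;\,V_{22}^{-1}V_{2}F)$ is exactly the paper's $\Gamma_{2}=F_{2}+V_{22}^{-1}V_{21}F_{1}$, the collapse of $(\Gamma-F)^{T}\mathcal{R}(\Gamma-F)$ to $-F_{1}^{T}V_{11}^{2}F_{1}$ matches their (\ref{L3.4_equa4}), your decomposition $\mathcal{Q}_{\Gamma}=\check{C}^{T}\check{C}+\eta^{T}R_{22}\eta$ with $\eta=\Gamma_{2}+R_{22}^{-1}L_{2}^{T}$ is their (\ref{L3.4_equa5}) with $\bar{L}^{T}=R_{22}\eta$, and the rewriting $A_{k}+B_{k2}\Gamma_{2}=\check{A}_{k}+B_{k2}\eta$ followed by the detectability/forcing argument is exactly their (\ref{L3.4_equa10})--(\ref{L3.4_equa11}). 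The only point you gloss over is the last step: exponential stability of $\mathcal{L}_{\Gamma}$ on $\mathcal{S}_{n}^{N}$ (equivalently, your uniform $\ell^{2}$ bound) gives ESMS-CI, and the paper then invokes assumption~c) of Theorem~\ref{T5.6} (via Theorem~3.6 of \cite{carte2010}) to upgrade this to ESMS; you should mention this, though as Remark~\ref{Rem5.1} notes, under \textbf{H1)}~b) the implication ESMS-CI $\Rightarrow$ ESMS already holds without~c).
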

\begin{proof} Using Lemma A.1. in Appendix A with:
\begin{equation}\label{L3.4_equa3}
\begin{cases}
\Gamma_1(t,i)=0\\
\Gamma_2(t,i)=F_2(t,i)+V_{22}^{-1}(t,i)V_{21}(t,i)F_1(t,i)
\end{cases}
;\;(t, i)\in \mathbb{Z}_+\times \mathfrak{N}
\end{equation}
one shows that the Riccati difference equation (\ref{GRDE}) can be rewritten as:
\begin{align}\label{L3.4_equa4}
X(t,i)&=\sum_{k=0}^r(A_k(t,i)+B_{k2}(t,i)\Gamma_2(t,i))^T\Xi(t)[\mathbb{X}(t+1)](i)\star+\mathcal{M}_\Gamma(t,i)\notag\\
&-F_1^T(t,i)\left[\begin{array}{cc}I & -V_{21}^T(t,i)V_{22}^{-1}(t,i)\end{array}\right]\left[R(t,i)+\Pi_3(t)[\mathbb{X}(t+1)](i)\right]\left[\begin{array}{c}I \\-V_{22}^{-1}(t,i)V_{21}(t,i)\end{array}\right]F_1(t,i)
\end{align}
for $(t,i)\in \mathbb{Z}_+\times \mathfrak{N}$.
After some algebraic manipulations, one shows that (\ref{L3.4_equa4}) can be rewritten as:
\begin{align}\label{L3.4_equa5}
X(t,i)&=\sum_{k=0}^r(A_k(t,i)+B_{2k}(t,i)\Gamma_2(t,i))^T\Xi(t)[\mathbb{X}(t+1)](i)\star+\notag\\
&+F_1^T(t,i)V_{11}^T(t,i)V_{11}(t,i)F_1(t,i)+\check{C}^T(t,i)\check{C}(t,i)+\notag\\
&+\left[L_2(t,i)+\Gamma_2^T(t,i)R_{22}(t,i)\right]R_{22}^{-1}(t,i)\star
\end{align}
$(t,i)\in \mathbb{Z}_+\times \mathfrak{N}$.
Let $\mathcal{V}(t,i)=x^T(t)\tilde{X}(t,i)x(t)$ where $x(t)=x(t,t_0,x_0)$ is the solution of (\ref{L3.4_equa6})
with $x(t_0)=x_0$. Using conditional expectation properties, one can show that:
\begin{align}
\label{L3.4_equa7}
&\sum_{t=t_0}^\tau {\mathbb E}\left[\mathcal{V}(t+1,\theta_{t+1})-\mathcal{V}(t,\theta_t)\Big|\theta_{t_0}=i\right]
={\mathbb E}\left[\mathcal{V}(\tau+1,\theta_{\tau+1})\Big|\theta_{t_0}=i\right]-x_0^T\tilde{X}(t_0,i)x_0\notag\\
&=-\sum_{t=t_0}^\tau {\mathbb E} \left[\Big|V_{11}(t,\theta_t)F_{1}(t,\theta_t)x(t)\Big|^2+\Big|\check{C}(t,\theta_t)x(t)\Big|^2+\Big|R_{22}^{-\frac{1}{2}}(t,\theta_t)\bar{L}(t,\theta_t)x(t)\Big|^2\Big|\theta_{t_0}=i\right]
\end{align}
for all $\tau \geq t_0$, where $\bar{L}(t,\theta_t)=L_2(t,\theta_t)+\Gamma_2^T(t,\theta_t)R_{22}(t,\theta_t)$.\\
\noindent Since $\tilde{\mathbb{X}}(\cdot)$ is a semi-positive bounded solution we deduce that:
\begin{align}
\label{L3.4_equa8}
\sum_{t=t_0}^\infty {\mathbb E} \left[\Big|\check{C}(t,\theta_t)x(t)\Big|^2+\Big|R_{22}^{-\frac{1}{2}}(t,\theta_t)\bar{L}(t,\theta_t)x(t)\Big|^2\Big|\theta_{t_0}=i\right]\leq c|x_0|^2
\end{align}
for some positive constant $c>0$. \\
It follows from the detectability assumption that there exist gain matrices $\mathbb{H}(\cdot):\mathbb{Z}_+\rightarrow \mathcal{M}_{n,p}^N$ such that:
\begin{align}
\label{L3.4_equa9}
x(t+1)&=\left(\check{A}_0(t,\theta_t)+H(t,\theta_t)\check{C}(t,\theta_t)\right)x(t)+\sum_{k=1}^rw_k(t)\check{A}_k(t,\theta_t)x(t)
\end{align}
is ESMS. Note also that (\ref{L3.4_equa6}) can be rewritten as:
\begin{align}
\label{L3.4_equa10}
x(t+1)=\left(\check{A}_0(t,\theta_t)+H(t,\theta_t)\check{C}(t,\theta_t)\right)x(t)+f_0(t,\theta_t)+\sum_{k=1}^rw_k(t)\left(\check{A}_k(t,\theta_t)x(t)+f_k(t,\theta_t)\right)
\end{align}
where:
\begin{equation}
\label{L3.4_equa11}
\begin{cases}
f_0(t,i)=-H(t,i)\check{C}(t,i)x(t)+B_{02}(t,i)R_{22}^{-1}(t,i)\bar{L}^T(t,i)x(t)\\
f_k(t,i)=B_{k2}(t,i)R_{22}^{-1}(t,i)\bar{L}^T(t,i)x(t)
\end{cases}
\end{equation}
$(t,i)\in \mathbb{Z}_+\times \mathfrak{N}$.
Hence reasoning as in the proof of Corollary 3.9 in \cite{carte2010} one deduces that there exist a positive constant $\nu$ not depending upon $(t,x_0,i)\in \mathbb{Z}_+\times \mathbb{R}^n\times \mathfrak{N}$ such that:
\begin{equation}
\sum_{t=t_0}^\infty {\mathbb E}\left[|x(t)|^2|\theta_{t_0}=i\right]\leq \nu |x_0|^2
\end{equation}
for all $(t,x_0,i)\in \mathbb{Z}_+\times \mathbb{R}^n\times \mathfrak{N}$.
Using the implication $ii)\rightarrow i)$ from Theorem 3.2 in \cite{Dragan:14}, one obtains that system (\ref{L3.4_equa6}) is ESMS-CI. Using Theorem 3.6 from \cite{carte2010} one shows that under assumption c) from the statement of Theorem \ref{T5.6} from above, that the stability notions ESMS-CI and ESMS are equivalent. This ends the proof. \hfill$\blacksquare$
\end{proof}
\begin{rem}\label{Rem5.1}
In Lemma \ref{L5.7}, if one stresses the assumption c) from the statement of Theorem \ref{T5.6} then we can still prove that system (\ref{L3.4_equa6}) is ESMS. This is because, under the considered assumptions, ESMS-CI implies ESMS (please see Theorem 3.3 from \cite{carte2010}) .
\end{rem}

\begin{lem}\label{L5.8}
 Assume that the assumptions a)- c) from Theorem \ref{T5.6} are fulfilled. Let  $\tilde{X}(t,i)=\underset{\tau\rightarrow \infty}{\lim}X_\tau(t,i)$, $(t,\;i) \in \mathbb{Z}_+\times \mathfrak{N}$ and $\tilde{F}(t,i)$ is defined as in (\ref{e2.8}) with $\mathbb{X}_s(\cdot)$ being replaced by  $\tilde{\mathbb{X}}(\cdot)$.
Then there exists $c>0$ not depending upon $t_0$, such that:
\begin{equation}\label{eq_77}
{\mathbb E}\left[\sum_{t=t_0}^\infty\tilde{\Phi}^T(t,t_0)\tilde{F}_1^T(t,\theta_t)\tilde{F}_1(t,\theta_t)\tilde{\Phi}(t,t_0)\Big|\theta(t_0)=i\right]\leq c I_n,\; i\in \mathfrak{N}
\end{equation}
where $\tilde{\Phi}(t,t_0)$ is the fundamental random matrix solution of the system:
\begin{equation}\label{eq_78}
x(t+1)=\left[A_0(t,\theta_t)+B_0(t,\theta_t)\tilde{F}(t,\theta_t)+\sum_{k=1}^rw_k(t)\left(A_k(t,\theta_t)+
B_k(t,\theta_t)\tilde{F}(t,\theta_t)\right)\right]x(t).
\end{equation}
\end{lem}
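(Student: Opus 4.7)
The approach is to use $\tilde{\mathbb{X}}(\cdot)$ as a stochastic Lyapunov functional along the trajectories $x(t)=\tilde{\Phi}(t,t_0)x_0$ of (\ref{eq_78}). Setting $V(t):=x(t)^T\tilde{X}(t,\theta_t)x(t)$, the Markov property of $\{\theta_t\}$ together with the independence/orthogonality hypotheses H2--H3 yields the one-step dissipation identity $V(t)-\mathbb{E}[V(t+1)\mid\mathcal{H}_t] = x(t)^T\bigl(\tilde{X}(t,\theta_t) - \mathcal{L}^{\ast}_{\tilde{\mathbb{F}}}(t)[\tilde{\mathbb{X}}(t+1)](\theta_t)\bigr)x(t)$, where $\mathcal{L}^{\ast}_{\tilde{\mathbb{F}}}(t)[\tilde{\mathbb{X}}(t+1)](i) := \sum_{k=0}^{r}(A_k+B_k\tilde{F})^T \Xi(t)[\tilde{\mathbb{X}}(t+1)](i)(A_k+B_k\tilde{F})$. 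I would telescope this from $t_0$ to $\tau$, take conditional expectation given $\theta_{t_0}=i$, and combine nonnegativity of $V(\tau+1)$ with the uniform upper bound on $\tilde{\mathbb{X}}$ provided by Proposition~\ref{P5.5} to obtain an a priori estimate showing that the total conditional stage dissipation is bounded by $c|x_0|^2$, with $c$ independent of $t_0$, $\tau$ and $x_0$.

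To extract the matrix bound in (\ref{eq_77}) I would then substitute into the stage dissipation the alternative representation of $\tilde{X}(t,i)$ supplied by Lemma~A.1 with $\Gamma_1=0$ and $\Gamma_2=\tilde{F}_2+V_{22}^{-1}V_{21}\tilde{F}_1$ --- the same choice made in the proof of Lemma~\ref{L5.7}, yielding identity (\ref{L3.4_equa5}). Writing $A_k+B_k\tilde{F} = (A_k+B_{k2}\Gamma_2) + \hat{B}_k\tilde{F}_1$ with $\hat{B}_k:=B_{k1}-B_{k2}V_{22}^{-1}V_{21}$ and expanding $\mathcal{L}^{\ast}_{\tilde{\mathbb{F}}}[\tilde{\mathbb{X}}(t+1)]$ accordingly, the integrand decomposes as the three explicitly nonnegative blocks $\tilde{F}_1^T V_{11}^T V_{11}\tilde{F}_1$, $\check{C}^T\check{C}$ and $(L_2+\Gamma_2^T R_{22})R_{22}^{-1}\star$ from (\ref{L3.4_equa5}), plus a correction of the form $-2\,x^T\Theta\tilde{F}_1 x - x^T\tilde{F}_1^T\Lambda\tilde{F}_1 x$, where $\Theta$ and $\Lambda$ are bounded matrices built from $(A_k+B_{k2}\Gamma_2)$, $\hat{B}_k$ and $\Xi(t)[\tilde{\mathbb{X}}(t+1)]$. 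Completing the square in $\tilde{F}_1 x$ against the cross term and using $V_{11}^T V_{11}\geq\delta_1 I$ from (\ref{e2.3}) then leaves a leading positive multiple of $|\tilde{F}_1(t,\theta_t) x(t)|^2$ plus a state residual of the form $C|x(t)|^2$.

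The principal obstacle is absorbing this state residual, since the system (\ref{eq_78}) is not a priori known to be mean-square summable. For this I would reuse the detectability hypothesis (d) of Theorem~\ref{T5.6} in precisely the manner employed in the proof of Lemma~\ref{L5.7}: the stochastic detectability of the auxiliary system (\ref{auxiliary}) supplies an output-injection gain $\mathbb{H}$ such that $(\check{A}_0+H\check{C},\check{A}_1,\dots,\check{A}_r)$ is ESMS-CI. Rewriting (\ref{eq_78}) as a perturbation of this detectable dynamics driven by the signals $\check{C}(t,\theta_t)x(t)$, $\tilde{F}_1(t,\theta_t)x(t)$ and $(R_{22}^{-1}L_2^T+\Gamma_2)x(t)$ --- whose conditional $\ell^2$-norms are already controlled by the three nonnegative blocks coming from (\ref{L3.4_equa5}) together with the sign condition (\ref{e2.4}) --- bounds $\sum_{t\geq t_0}\mathbb{E}[|x(t)|^2\mid\theta_{t_0}=i]$ by a multiple of $|x_0|^2$, along the lines of (\ref{L3.4_equa9})--(\ref{L3.4_equa10}). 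Inserting this back into the completed-square estimate closes the loop, and monotone convergence as $\tau\to\infty$ yields (\ref{eq_77}) with a constant $c$ independent of $t_0$.
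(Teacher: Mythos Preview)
Your proposal has two genuine gaps.

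\textbf{First}, you invoke the detectability hypothesis (d) of Theorem~\ref{T5.6}, but Lemma~\ref{L5.8} is stated under a)--c) only. In the paper's architecture, Lemma~\ref{L5.7} is the place where (d) is spent, to make the $\Gamma_2$-closed loop (\ref{L3.4_equa6}) ESMS; Lemma~\ref{L5.8} must supply, \emph{without} (d), the summability of $\tilde F_1 x$ along the full $\tilde F$-closed loop (\ref{eq_78}). Both lemmas are only combined later in the proof of Theorem~\ref{T5.6}.

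\textbf{Second}, even granting (d), the argument is circular. The three nonnegative blocks in (\ref{L3.4_equa5}) come from writing $\tilde X(t,i)$ in the $\Gamma_2$-feedback form; the Lyapunov decrement you compute along (\ref{eq_78}) subtracts $\mathcal L^*_{\tilde{\mathbb F}}[\tilde{\mathbb X}(t{+}1)]$, not $\mathcal L^*_{\Gamma}[\tilde{\mathbb X}(t{+}1)]$. The telescoped identity therefore reads
\[
\sum_{t}(\text{three blocks}) \;=\; V(t_0)-\mathbb E[V(\tau{+}1)\mid\theta_{t_0}{=}i]\;+\;\sum_{t}\bigl(2x^T\Theta\tilde F_1 x + x^T\tilde F_1^T\Lambda\tilde F_1 x\bigr),
\]
and the last sum has no sign and is of order $\sum_t|x||\tilde F_1 x|$. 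Consequently the three blocks are \emph{not} a priori bounded by $c|x_0|^2$, so the forcing signals in your detectability rewriting of (\ref{eq_78}) --- one of which is $\tilde F_1 x$ itself --- are not controlled. You then need $\sum_t\mathbb E[|x(t)|^2]$ to absorb the residual, but obtaining that via the perturbation argument requires the very summability of $\tilde F_1 x$ you are trying to prove. There is no smallness condition available to close this loop.

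The paper's proof proceeds along an entirely different route: a contradiction argument exploiting the zero-sum game structure. Assuming (\ref{supre}) fails, one passes (as in \cite{Dragan:18}) to finite-horizon data $F_1^{\tau_k},\Phi^{\tau_k}$ along which $\sum_t|F_1^{\tau_k}x|^2$ blows up. Lemma~C.1~ii) gives the lower saddle bound $x_k^T X_{\tau_k}(t_k,i)x_k\le \mathcal J(t_k,\tau_k,x_k,i,\tilde u_1^{\tau_k},u_2)$ for every $u_2$, and Proposition~B.4 --- which uses assumption b), namely $\mathcal A^\Sigma\neq\emptyset$, not detectability --- bounds the right-hand side by $\alpha_0-\alpha_1\sum_t|\tilde u_1^{\tau_k}|^2\to-\infty$. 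This contradicts $X_{\tau_k}\ge 0$ from Lemma~C.3. Assumption c) is what allows one to pass between conditional and unconditional expectations via a Markov chain with uniformly positive marginals. This min-max mechanism is the substitute for the Lyapunov/detectability bootstrap you attempt, and it cannot be recovered from the dissipation identity alone.
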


\begin{proof}
\noindent We shall show that for each $i\in\mathfrak{N}$ we have:
\begin{equation}\label{supre}
\sup\left\{\Big|{\mathbb E} \left[\sum_{t=t_0}^T\tilde{\Phi}^T(t,t_0)\tilde{F}_1^T(t,\theta_t)\tilde{F}_1(t,\theta_t)\tilde{\Phi}(t,t_0)\Big|\theta(t_0)=i\right]\Big|:t_0\geq 0,\;T>t_0\right\}<\infty.
\end{equation}
Assume by contrary that there exists $i_0\in\mathfrak{N}$ such that the supremum in (\ref{supre}) is not finite. Hence, following step by step the reasoning in the proof of Lemma 3 from \cite{Dragan:18} one shows that there exist sequences $\tau_k$, $t_k$, $\tau_k>t_k$ and $k_0\geq 1$ such that:
\begin{equation}
{\mathbb E}\left[\sum_{t=t_k}^{\tau_k}|{F}_1^{\tau_k}(t,\theta_t)\tilde{x}_{\tau_k}(t)|^2\Big|\theta_{t_k}=i\right]\geq k,\; i\in \mathfrak{N}
\end{equation}
for every $k\geq k_0$, ${F}_1^{\tau_k}(t,\theta_t)$ being defined as in (\ref{equa_F}) and $\tilde{x}_{\tau_k}(t)={\Phi}^{\tau_k}(t,t_k)x_k$, $t\in [t_k,  \tau_k]$ where ${\Phi}^{\tau_k}(t,t_k)$ is the fundamental random matrix solution of the system:
\begin{equation}
x(t+1)=\left[A_0(t,\theta_t)+B_0(t,\theta_t){F}_{\tau_k}(t,\theta_t)+\sum_{k=1}^rw_k(t)\left(A_k(t,\theta_t)+B_k(t,\theta_t){F}_{\tau_k}(t,\theta_t)\right)\right]x(t)
\end{equation}
and $x_k\in \mathbb{R}^n$, $|x_k|=1$. This leads to:
\begin{equation}\label{Lemma4.5_equa1}
{\mathbb E}\left[\sum_{t=t_k}^{\tau_k}|{F}_1^{\tau_k}(t,\theta_t)\tilde{x}_{\tau_k}(t)|^2\right]\geq \pi_{t_k}(i)k,\; i\in \mathfrak{N}.
\end{equation}
The condition (\ref{Lemma4.5_equa1}) written for the initial distribution $\tilde{\pi}_0$ yields (using assumption c) from the statement of Theorem \ref{T5.6}):
\begin{equation}\label{Lemma4.5_equa2}
{\mathbb E}\left[\sum_{t=t_k}^{\tau_k}|{F}_1^{\tau_k}(t,\tilde{\theta}_t)\tilde{x}_{\tau_k}(t)|^2\right]\geq \delta k,\; i\in \mathfrak{N}.
\end{equation}
Let $\tilde{u}_1^{\tau_k}(t)={F}_1^{\tau_k}(t,\tilde{\theta}_t)\tilde{x}_{\tau_k}(t)$, $t\in [t_k, \tau_k]$, hence it follows from Lemma C.1. ii) that:
\begin{equation}
\sum_{i=1}^N\pi_{t_k}(i)x_k^TX_{\tau_k}(t_k,i)x_k\leq \sum_{i=1}^N\pi_{t_k}(i)\mathcal{J}(t_k,\tau_k,x_k,i,\tilde{u}_1^{\tau_k},u_2)
\end{equation}
for all $k\geq k_0$, $i\in \mathfrak{N}$ and $u_2\in\ell_{\tilde{\mathcal{H}}}^2\{t_k,\tau_k;\mathbb{R}^{m_2}\}$.
Let $(\mathbb{K}(\cdot), \mathbb{W}(\cdot))\in \mathcal{A}^{\Sigma}$ be arbitrary but fixed and let $u_2^k$ be defined by $u_2^k(t)=K(t,\theta_t)\chi_k(t)+W(t,\theta_t)\tilde{u}_1^{\tau_k}(t)$ where $\chi_k(t)$, $t_k\leq t\leq \tau_k$ is the solution of the following IVP:
\begin{equation}
\begin{cases}
\chi_k(t+1)=\left[A_{0{\mathbb K}}(t,\theta_t)+\sum_{k=1}^rw_k(t)A_{k{\mathbb K}}(t,\theta_t)\right]\chi_k(t)+\\
\quad\quad\quad\quad\quad+\left[B_{0{\mathbb W}}(t,\theta_t)+\sum_{k=1}^rw_k(t)B_{k{\mathbb W}}(t,\theta_t)\right]\tilde{u}_1^{\tau_k}(t)\\
\chi_k(t_k)=x_k.
\end{cases}
\end{equation}
It follows from Proposition B.4. that:
\begin{equation}
\sum_{i=1}^N\pi_{t_k}(i)\mathcal{J}(t_k,\tau_k,x_k,i,\tilde{u}_1^{\tau_k},u_2^k)\leq \alpha_0  -\alpha_1{\mathbb E} \left[\sum_{t=t_k}^{\tau_k}|\tilde{u}_1^{\tau_k}(t)|^2\right]
\end{equation}
for some positive constants $\alpha_0$ and $\alpha_1$ which do not depend upon $k$. We deduce from the above developments that:
\begin{equation}
\label{equa_82}
\sum_{i=1}^N\pi_{t_k}(i)x_k^TX_{\tau_k}(t_k,i)x_k<0
\end{equation}
if $\delta k>\max\{\delta k_0,\;\frac{\alpha_0}{\alpha_1}\}$.
But (\ref{equa_82}) is in contradiction with the result given in Lemma C.3. Thus we have obtained that (\ref{eq_77}) is true for the Markov chain $(\{\tilde \theta_t\}_{t\geq 0}, \{P_t\}_{t\geq 0}, {\mathfrak N})$, that is: 
\begin{eqnarray}\label{1000}
\sum_{t=t_0}^{\infty} {\mathbb E}[\tilde{\tilde{\Phi}}^T(t,t_0)\tilde F_1^T(t,\tilde\theta_t)\tilde F_1(t,\tilde\theta_t)\tilde{\tilde{\Phi}}(t,t_0)| \tilde\theta_{t_0}=i]\leq c I_n
\end{eqnarray}
for all $t_0\in{\mathbb Z}_+, i\in{\mathfrak N}$, where $\tilde{\tilde{\Phi}}(t,t_0)$ is the fundamental random matrix solution of the system obtained from (\ref{eq_78}) written for $\tilde\theta_t$ instead of $\theta_t$.

\noindent On the other hand from the representation theorem (Theorem 3.1 in \cite{carte2010}) and Remark 3.2 from \cite{carte2010} we deduce that
\begin{eqnarray}\label{1001}
{\mathbb E}[\tilde{\tilde{\Phi}}^T(t,t_0)\tilde F_1^T(t,\tilde\theta_t)\tilde F_1(t,\tilde\theta_t)\tilde{\tilde{\Phi}}(t,t_0)|\tilde\theta_{t_0}=i]=\nonumber\\
{\mathbb T}_{\tilde F}^*(t,t_0)[{\mathbb H}(t)](i)=
{\mathbb E}[\tilde\Phi^T(t,t_0)\tilde F_1^T(t,\theta_t)\tilde F_1(t,\theta_t)\tilde\Phi(t,t_0)|\theta_{t_0}=i]
\end{eqnarray}
for all $t\geq t_0\geq 0, i\in{\mathfrak N}$ and for arbitrary Markov chain $(\{\theta_t\}_{t\geq 0}, \{P_t\}_{t\geq 0}, {\mathfrak N})$, where ${\mathbb T}_{\tilde F}^*(t,t_0)$ is the adjoint operator of the linear evolution operator on ${\cal S}_n^N$ defined by the discrete time linear equation  (\ref{e2.7}) written for $\tilde F(t,i)$ instead of $F_s(t,i)$ and ${\mathbb H}(t)=(H(t,1),\cdots, H(t,N))$ with $H(t,i)=\tilde F_1^T(t,i)\tilde F_1(t,i)$.

\noindent Now (\ref{1000}) and (\ref{1001}) alow us to conclude that (\ref{eq_77}) holds for any Markov chain $(\{\theta_t\}_{t\geq 0}, \{P_t\}_{t\geq 0}, {\mathfrak N})$. Thus the proof is complete.\hfill$\blacksquare$ 
\end{proof}

\begin{rem} In \cite{carte2010} the representation formula (3.4) is proved for $H(\theta_t)$ and not for $H(t,\theta_t)$.  One can show that an equality of type (3.4) from \cite{carte2010} takes place also when $H(i)$ is replaced by $H(t,i), t\geq t_0, i\in{\mathfrak N}$. Indeed we chose $t_1\geq t_0$ arbitrary but fixed. Applying (3.4) from \cite{carte2010} we obtain 
\begin{eqnarray}\label{1002}
{\mathbb E}[\tilde \Phi^T(t,t_0)H(t_1,\theta_t)\tilde\Phi(t,t_0)| \theta_{t_0}=i ]={\mathbb T}_{\tilde F}^*(t,t_0)[{\mathbb H}(t_1)](i)
\end{eqnarray}
for all $t\geq t_0, i\in{\mathfrak N}$. Particularly, (\ref{1002}) is  also true for $t=t_1$ which leads to the desired equality.
\end{rem}

\noindent \textbf{Proof of Theorem \ref{T5.6}}
\begin{itemize}
\item [i)] Let  $\tilde{X}(t,i)=\underset{\tau\rightarrow \infty}{\lim}X_\tau(t,i)$, $(t,\;i) \in \mathbb{Z}_+\times \mathfrak{N}$.
It follows from Proposition \ref{P5.5} that $\tilde{\mathbb{X}}(\cdot)$ is a bounded on $\mathbb{Z}_+$ positive semi-definite solution of equation (\ref{GRDE}). Let us now show that $\tilde{\mathbb{X}}(\cdot)$ is a stabilizing solution of (\ref{GRDE}).
 First note that the system:
\begin{align}
x(t+1)=\left[A_0(t,\theta_t)+B_{0}(t,\theta_t)\tilde{F}(t,\theta_t)\right]x(t)+\sum_{k=1}^rw_k(t)\left[A_k(t,\theta_t)+B_{k}(t,\theta_t)\tilde{F}(t,\theta_t)\right]x(t)
\end{align}
(where $\tilde{F}(t,i)$ is defined as in (\ref{e2.8}) with $\mathbb{X}_s(\cdot)$ being replaced by  $\tilde{\mathbb{X}}(\cdot)$) can be rewritten as:
\begin{align}
x(t+1)=\hat{A}_0(t,\theta_t)x(t)+g_0(t,\theta_t)+\sum_{k=1}^rw_k(t)\left(\hat{A}_k(t,\theta_t)x(t)+g_k(t,\theta_t)\right)
\end{align}
where:
\begin{equation}
\begin{cases}
\hat{A}_j(t,\theta_t)=A_j(t,\theta_t)+B_{j2}(t,\theta_t)V_{22}^{-1}(t,\theta_t)V_2(t,\theta_t)\tilde{F}(t,\theta_t)\\
g_j(t,\theta_t)=\left[B_{j1}(t,\theta_t)-B_{j2}(t,\theta_t)V_{22}^{-1}(t,\theta_t)V_{21}(t,\theta_t)\right]\tilde{F}_1(t,\theta_t)x(t)
\end{cases}
;\; 0\leq j\leq r.
\end{equation}
It follows from Lemma \ref{L5.8} that $E\left[\sum_{t=t_0}^\infty |g_j(t,\theta_t)|^2\Big|\theta_{t_0}=i\right]\leq c_j|x_0|^2$.
Hence using the same reasoning as in the proof of Lemma \ref{L5.7} one shows that $\tilde{\mathbb{X}}(\cdot)$ is a stabilizing solution of (\ref{GRDE}). The unicity of $\tilde{\mathbb{X}}(\cdot)$ follows from Theorem 3.4.
\item[ii)] The proof is the same as for the proof of point vii) in Proposition \ref{P5.5}.
\end{itemize}
This ends the proof.\hfill$\blacksquare$

\begin{rem}
\begin{itemize}
\item [i)] The condition that ${\cal A}^{\Sigma}$ is not empty is also necessary. Indeed, one can prove that if the Riccati equation (GRDE) has a positive semidefinite and stabilizing solution, then the set ${\cal A}^{\Sigma}$ is not empty. To this end, one can takes $\tilde K(t,i)$ and $\tilde W(t,i)$ as defined in (\ref{e5.15})-(\ref{factor}) and prove that $(\tilde K(t,i),\;\tilde W(t,i))\in {\cal A}^{\Sigma}$.
\item[ii)] The stochastic detectability of the auxiliary system (\ref{auxiliary}) may be tested by the solvability of a set of LMIs conditions (The reader can refer to \cite{carte2010} for example). This shows, in addition with remark 4.4 i), the numerical tractability of the proposed set of existence conditions thanks to the LMI formalism.
\end{itemize}
\end{rem}

\section{A zero-sum stochastic LQ difference game}

In this section, we show the relation between the bounded and stabilizing solution of GDTRE (\ref{GRDE}) and the design of the feedback gains of an equilibrium strategy in the case of a zero-sum stochastic LQ difference game on infinite time horizon described by a controlled system of type (\ref{equa1}) and a quadratic performance criterion of type (\ref{cost}).\\
\noindent Let us first rewrite (\ref{equa1}) and (\ref{cost}) according to the partition $u=\left(\begin{array}{cc}u_1^T & u_2^T\end{array}\right)^T$ of the input $u$. Thus we obtain:
\begin{equation}\label{e5.1}
\begin{cases}
x(t+1)=A_0(t,\theta_t)x(t)+B_{01}(t,\theta_t)u_1(t)+B_{02}(t,\theta_t)u_2(t)\\
\quad\quad\quad +\sum_{k=1}^r\left[A_k(t,\theta_t)x(t)+B_{k1}(t,\theta_t)u_1(t)+B_{k2}(t,\theta_t)u_2(t)\right]w_k(t)\\
x(t_0)=x_0
\end{cases}
\end{equation}
\begin{equation}\label{e5.2}
\mathcal{J}(x_0,u_1(\cdot),u_2(\cdot))={\mathbb E}\left[\sum_{t_0}^\infty \left(\begin{array}{c}x_u(t) \\u_1(t) \\u_2(t)\end{array}\right)\left(\begin{array}{ccc}M(t,\theta_t) & L_1(t,\theta_t) & L_2(t,\theta_t) \\\star & R_{11}(t,\theta_t) & R_{12}(t,\theta_t) \\\star & \star & R_{22}(t,\theta_t)\end{array}\right)\star \right]
\end{equation}
In game theory terminology, the inputs $u_k:\mathbb{Z}_+\rightarrow \mathbb{R}^{m_k}$ are usually called strategies available for player $k$, $k=1,\;2$. These are functions with the property that $\left(\begin{array}{cc}u_1^T(\cdot) & u_2^T(\cdot)\end{array}\right)^T$ are in a subset of $\mathcal{U}_{\text{ad}}$. 
Such a subset of $\mathcal{U}_{\text{ad}}$ will be called the set of admissible strategies. 
The shape of this set depends on the type of available informations for player $k$ in order to compute the value of $u_k(t)$. 
In this section, we shall consider two different types of admissible strategies. 
Before detailing these two different strategies, we give here for the reader convenience a definition of an equilibrium strategy of the zero-sum difference game described above.

\begin{definition}\label{D6.1}
An admissible strategy $\left(\tilde{u}_1(\cdot), \tilde{u}_2(\cdot)\right)$ is named equilibrium strategy of the zero-sum differential game described by the controlled system (\ref{e5.1}) and the performance criterion (\ref{e5.2}) if:
\begin{equation}\label{e5.3}
\mathcal{J}(x_0,u_1(\cdot),\tilde{u}_2(\cdot))\leq \mathcal{J}(x_0,\tilde{u}_1(\cdot),\tilde{u}_2(\cdot))\leq \mathcal{J}(x_0,\tilde{u}_1(\cdot),u_2(\cdot))
\end{equation}
for all $u_1(\cdot)$, $u_2(\cdot)$ with the property that $(\tilde{u}_1(\cdot), {u}_2(\cdot))$ and $({u}_1(\cdot), \tilde{u}_2(\cdot))$ are admissible strategies.
\end{definition}

Roughly speaking, from (\ref{e5.3}) one sees that the aim of player 1 is to maximize the performance $\mathcal{J}(x_0,\cdot,\cdot)$ while the goal of player 2 is to minimize the value of $\mathcal{J}(x_0,\cdot,\cdot)$.

\subsection{The full state-feedback case}

In this case, the admissible strategies are constructed in a linear state-feedback form as follows:
\begin{equation}\label{e5.4}
u_k(t)=F_k(t,\theta_t)x(t),\; k=1,2.
\end{equation}
We denote $\mathcal{F}_{ad}$ the set of the pairs $(\mathbb{F}_1(\cdot),\;\mathbb{F}_2(\cdot))$ where $\mathbb{F}_k(\cdot)=(F_k(\cdot,1) , \cdots , F_k(\cdot,N))$ with $t\rightarrow F_k(t,i):\mathbb{Z}_+\rightarrow \mathbb{R}^{m_k\times n}$ are bounded and matrix valued sequences with the property that the closed-loop system:
\begin{equation}\label{e5.4'}
\begin{cases}
x(t+1)=\left[A_0(t,\theta_t)+B_{01}(t,\theta_t)F_1(t,\theta_t)+B_{02}(t,\theta_t)F_2(t,\theta_t)\right]x(t)\\
+\sum_{k=1}^rw_k(t)\left[A_k(t,\theta_t)+B_{k1}(t,\theta_t)F_1(t,\theta_t)+B_{k2}(t,\theta_t)F_2(t,\theta_t)\right]x(t)
\end{cases}
\end{equation}
is ESMS. In order to implement a strategy of the form (\ref{e5.4}), we assume that each player has access to the current Markov state $\theta_t$ and the current system state $x(t)$. One can easily verify that the strategies of type (\ref{e5.4}) lie in $\mathcal{U}_{ad}$.\\

\begin{rem}\label{R6.1} Obviously, there exists a one to one correspondence between the strategies (\ref{e5.4}) and the set of feedback gains $\mathcal{F}_{ad}$.
Hence, in the rest of this subsection we shall denote $\mathcal{J}(x_0,\mathbb{F}_1(\cdot),\mathbb{F}_2(\cdot))$ instead of $\mathcal{J}(x_0,u_1(\cdot),u_2(\cdot))$.
In this case, we shall say that $(\tilde{u}_1(\cdot),\;\tilde{u}_2(\cdot))$ is an equilibrium strategy (or equivalently, the pair of feedback gains $(\tilde{\mathbb{F}}_1(\cdot),\; \tilde{\mathbb{F}}_2(\cdot))$ provide an equilibrium strategy) if:
\begin{equation}\label{e5.3'}
\mathcal{J}(x_0,\mathbb{F}_1(\cdot),\tilde{\mathbb{F}}_2(\cdot))\leq \mathcal{J}(x_0,\tilde{\mathbb{F}}_1(\cdot),\tilde{\mathbb{F}}_2(\cdot))\leq \mathcal{J}(x_0,\tilde{\mathbb{F}}_1(\cdot),\mathbb{F}_2(\cdot))
\end{equation}
for arbitrary $(\mathbb{F}_1(\cdot),\; \mathbb{F}_2(\cdot))$ with the property that $(\mathbb{F}_1(\cdot),\; \tilde{\mathbb{F}}_2(\cdot))$ and $(\tilde{\mathbb{F}}_1(\cdot),\; \mathbb{F}_2(\cdot))$ lie in $\mathcal{F}_{ad}$.\\
\end{rem}

\begin{thm}\label{T6.1}
Assume that:
\begin{itemize}
\item [a)] Assumptions \textbf{H1}), \textbf{H2}) and \textbf{H3}) a) are fulfilled;
\item [b)] The GDTRE (\ref{GRDE}) associated to the system (\ref{e5.1}) and the quadratic functional (\ref{e5.2}) admits a bounded and stabilizing solution $\tilde{\mathbb{X}}(\cdot)$ satisfying the sign conditions:
\end{itemize}
\begin{equation}\label{e5.6}
\mathbb{R}_{11}(t,\tilde{\mathbb{X}}(t+1),i)\triangleq R_{11}(t,i)+\sum_{k=0}^rB_{k1}^T(t,i)\Xi(t)[\tilde{\mathbb X}(t+1)](i)B_{k1}(t,i)\leq-\mu_1I_{m_1}<0
\end{equation}
and
\begin{equation}\label{e5.7}
\mathbb{R}_{22}(t,\tilde{\mathbb{X}}(t+1),i)\triangleq R_{22}(t,i)+\sum_{k=0}^rB_{k2}^T(t,i)\Xi(t)[\tilde{\mathbb X}(t+1)](i)B_{k2}(t,i)\geq\mu_2I_{m_2}>0
\end{equation}
for all $(t,i)\in \mathbb{Z}_+\times \mathfrak N$.
Let $\tilde{F}_k(t,i)$ be defined as:
\begin{equation}\label{e5.5'}
\begin{cases}
\tilde{F}_1(t,i)=\left(\begin{array}{cc}I_{m_1} & 0\end{array}\right)\tilde{F}(t,i),\\
\tilde{F}_2(t,i)=\left(\begin{array}{cc}0 & I_{m_2}\end{array}\right)\tilde{F}(t,i)
\end{cases}
\end{equation}
$\tilde{F}(t,i)$ being computed via (\ref{e2.8}) (where $\mathbb{X}_s(\cdot)$ is replaced by $\tilde{\mathbb{X}}(\cdot)$).
Consider the strategies:
\begin{equation}\label{e5.8}
\tilde{u}_k(t)\triangleq \tilde{F}_k(t,\theta_t)x(t),\; t\geq 0,\; k=1,2.
\end{equation}
Under the considered assumptions $(\tilde{u}_1(\cdot),\;\tilde{u}_2(\cdot))$ is an equilibrium strategy for the zero-sum LQ differential game described by the controlled system (\ref{e5.1}), the quadratic functional (\ref{e5.2}) and the class of admissible strategies in the form of state-feedback as in (\ref{e5.4}).
\end{thm}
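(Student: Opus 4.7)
The plan is to use $\tilde{\mathbb{X}}(\cdot)$ as a quadratic Lyapunov-like functional and derive a single completion-of-squares identity valid for every admissible pair $(\mathbb{F}_1,\mathbb{F}_2)\in\mathcal{F}_{ad}$. Set $V(t,x,i):=x^{T}\tilde{X}(t,i)x$ and let $x(\cdot)$ denote the closed-loop trajectory (\ref{e5.4'}) produced by $(\mathbb{F}_1,\mathbb{F}_2)$ starting from $x(t_0)=x_0$. Writing $F(t,i):=\bigl(F_1^{T}(t,i)\ F_2^{T}(t,i)\bigr)^{T}$ and $B_k=(B_{k1}\ B_{k2})$, the independence assumptions \textbf{H2)} and \textbf{H3)} a) together with the Markov property give
\[
\mathbb{E}\bigl[V(t+1,x(t+1),\theta_{t+1})\,\big|\,\mathcal{H}_{t}\bigr]=x^{T}(t)\sum_{k=0}^{r}(A_k+B_kF)^{T}\Xi(t)[\tilde{\mathbb{X}}(t+1)](\theta_t)(A_k+B_kF)\,x(t).
\]

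Expanding this in terms of $\Pi_1$, $\Pi_2$, $\Pi_3$ applied to $\tilde{\mathbb{X}}(t+1)$, substituting the GDTRE (\ref{GRDE}) for $\tilde{X}(t,\theta_t)$, and using the relation $\Pi_2(t)[\tilde{\mathbb{X}}(t+1)](i)+L(t,i)=-\tilde{F}^{T}(t,i)\tilde{\mathcal{R}}(t,\tilde{\mathbb{X}}(t+1),i)$ that follows from (\ref{e2.8}), the one-step difference $\mathbb{E}\bigl[V(t+1)\,\big|\,\mathcal{H}_{t}\bigr]-V(t)$ plus the time-$t$ cost integrand of (\ref{e5.2}) collapses into the single quadratic form $\delta u^{T}(t)\tilde{\mathcal{R}}(t,\tilde{\mathbb{X}}(t+1),\theta_t)\delta u(t)$, where $\delta u(t):=\bigl(F(t,\theta_t)-\tilde{F}(t,\theta_t)\bigr)x(t)$. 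Telescoping from $t_0$ to $T$, taking expectation and letting $T\to\infty$ is legitimate since admissibility $(\mathbb{F}_1,\mathbb{F}_2)\in\mathcal{F}_{ad}$ together with the boundedness of $\tilde{\mathbb{X}}$ forces $\mathbb{E}\bigl[V(T+1,x(T+1),\theta_{T+1})\bigr]\to 0$ via the ESMS of the closed-loop system. This yields the key identity
\[
\mathcal{J}(x_0,\mathbb{F}_1,\mathbb{F}_2)=\mathbb{E}\bigl[x_0^{T}\tilde{X}(t_0,\theta_{t_0})x_0\bigr]+\mathbb{E}\!\left[\sum_{t=t_0}^{\infty}\delta u^{T}(t)\,\tilde{\mathcal{R}}(t,\tilde{\mathbb{X}}(t+1),\theta_t)\,\delta u(t)\right].
\]

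The two saddle-point inequalities are read off directly from the block structure of $\tilde{\mathcal{R}}$. The pair $(\tilde{\mathbb{F}}_1,\tilde{\mathbb{F}}_2)$ belongs to $\mathcal{F}_{ad}$ because $\tilde{\mathbb{X}}$ is stabilizing (Definition \ref{D2.1} combined with Remark \ref{R2.3}), and makes $\delta u\equiv 0$, so $\mathcal{J}(x_0,\tilde{\mathbb{F}}_1,\tilde{\mathbb{F}}_2)=\mathbb{E}\bigl[x_0^{T}\tilde{X}(t_0,\theta_{t_0})x_0\bigr]$. When $\mathbb{F}_2=\tilde{\mathbb{F}}_2$, $\delta u=\bigl(\delta u_1^{T},0\bigr)^{T}$ and the quadratic form reduces to $\delta u_1^{T}\mathcal{R}_{11}(t,\tilde{\mathbb{X}}(t+1),\theta_t)\delta u_1\le -\mu_1|\delta u_1|^{2}\le 0$ by (\ref{e5.6}), giving $\mathcal{J}(x_0,\mathbb{F}_1,\tilde{\mathbb{F}}_2)\le\mathcal{J}(x_0,\tilde{\mathbb{F}}_1,\tilde{\mathbb{F}}_2)$. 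Symmetrically, when $\mathbb{F}_1=\tilde{\mathbb{F}}_1$ the form becomes $\delta u_2^{T}\mathcal{R}_{22}(t,\tilde{\mathbb{X}}(t+1),\theta_t)\delta u_2\ge\mu_2|\delta u_2|^{2}\ge 0$ by (\ref{e5.7}), yielding $\mathcal{J}(x_0,\tilde{\mathbb{F}}_1,\tilde{\mathbb{F}}_2)\le\mathcal{J}(x_0,\tilde{\mathbb{F}}_1,\mathbb{F}_2)$.

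The principal technical obstacle is the completion-of-squares step itself: after the cross-terms are rewritten as $F^{T}\bigl(\Pi_2[\tilde{\mathbb{X}}(t+1)]+L\bigr)^{T}+\bigl(\Pi_2[\tilde{\mathbb{X}}(t+1)]+L\bigr)F=-F^{T}\tilde{\mathcal{R}}\tilde{F}-\tilde{F}^{T}\tilde{\mathcal{R}}F$, the identity $F^{T}\tilde{\mathcal{R}}F-F^{T}\tilde{\mathcal{R}}\tilde{F}-\tilde{F}^{T}\tilde{\mathcal{R}}F+\tilde{F}^{T}\tilde{\mathcal{R}}\tilde{F}=(F-\tilde{F})^{T}\tilde{\mathcal{R}}(F-\tilde{F})$ delivers the required single quadratic form, but one has to be careful because $\tilde{\mathcal{R}}$ is indefinite (its inertia equals that of $\mathfrak{I}$, which follows from (\ref{e5.6})--(\ref{e5.7}) by a Schur-complement argument). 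Apart from this algebra, the only remaining subtlety is the vanishing of the boundary term as $T\to\infty$, for which the ESMS of the admissible closed-loop system is exactly what is needed.
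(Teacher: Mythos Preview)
Your proof is correct and follows essentially the same route as the paper: both derive the completion-of-squares identity (\ref{equa_App_B_1}), use it to evaluate $\mathcal{J}(x_0,\tilde{\mathbb{F}}_1,\tilde{\mathbb{F}}_2)$, and then specialize to $\mathbb{F}_2=\tilde{\mathbb{F}}_2$ (resp.\ $\mathbb{F}_1=\tilde{\mathbb{F}}_1$) so that only the $\mathbb{R}_{11}$ (resp.\ $\mathbb{R}_{22}$) block survives and the sign conditions (\ref{e5.6})--(\ref{e5.7}) yield the two saddle-point inequalities. The only difference is expository: you spell out the Lyapunov telescoping and the vanishing of the boundary term explicitly, whereas the paper simply invokes ``standard arguments'' for (\ref{equa_App_B_1}); also your identity is written directly for state-feedback inputs $(\mathbb{F}_1,\mathbb{F}_2)\in\mathcal{F}_{ad}$, while the paper states it for general $(u_1,u_2)\in\mathcal{U}_{ad}$ before specializing.
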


\begin{proof}
First, by using some standard arguments one can shows that under the considered assumptions one obtains:
\begin{equation}\label{equa_App_B_1}
\mathcal{J}(x_0,u_1(\cdot),u_2(\cdot))=x_0^T {\mathbb E} \left[\tilde{X}(t_0,\theta_{t_0})\right]x_0+{\mathbb E}\left[\sum_{t=t_0}^\infty\left(\begin{array}{c}u_1(t)-\tilde{F}_1(t,\theta_t)x(t) \\u_2(t)-\tilde{F}_2(t,\theta_t)x(t)\end{array}\right)^T\mathbb{R}(t,\tilde{\mathbb{X}}(t+1),\theta_t)\star \right]
\end{equation}
for all $u(\cdot)=(u_1(\cdot),\;u_2(\cdot))\in \mathcal{U}_{ad}$ and $\forall x_0\in \mathbb{R}^n$.
Note also that $(\tilde{\mathbb{F}}_1(\cdot),\;\tilde{\mathbb{F}}_2(\cdot))$ lies in $\mathcal{F}_{ad}$. Indeed,  $(\tilde{\mathbb{F}}_1(\cdot),\;\tilde{\mathbb{F}}_2(\cdot))$ being associated to the stabilizing solution $\tilde{\mathbb{X}}(\cdot)$, the result follows from Remark 2.3 and Theorem 3.3 in \cite{carte2010}. From (\ref{equa_App_B_1}) one obtains that:
\begin{equation}\label{e5.9}
\mathcal{J}(x_0,\tilde{\mathbb{F}}_1(\cdot),\tilde{\mathbb{F}}_2(\cdot))=x_0^T{\mathbb E}[\tilde{X}(t_0,\theta_{t_0})]x_0.
\end{equation}
Let $\mathbb{F}_k(\cdot)$ be arbitrary with the property that $({\mathbb{F}}_1(\cdot),\;\tilde{\mathbb{F}}_2(\cdot))$ and $(\tilde{\mathbb{F}}_1(\cdot),\;{\mathbb{F}}_2(\cdot))$ belong to $\mathcal{F}_{ad}$.\\
 We take:
\begin{equation}
\begin{cases}
u_1(t)=F_1(t,\theta_t)\hat{x}(t)\\
u_2(t)=\tilde{F}_2(t,\theta_t)\hat{x}(t)
\end{cases}
\end{equation}
where $\hat{x}(t),\; t\geq 0$ is the solution of the system of type (\ref{e5.4'}) written for $F_2(t,\theta_t)$ replaced by $\tilde{F}_2(t,\theta_t)$. We obtain via (\ref{equa_App_B_1}) that:
\begin{equation}\label{e5.10}
\mathcal{J}(x_0,{\mathbb{F}}_1(\cdot),\tilde{\mathbb{F}}_2(\cdot))=x_0^T{\mathbb E}[\tilde{X}(t_0,\theta_{t_0})]x_0+{\mathbb E}\left[\sum_{t=t_0}^\infty \hat{x}(t)\left(F_1(t,\theta_t)-\tilde{F}_1(t,\theta_t)\right)^T\mathbb{R}_{11}(t,\tilde{\mathbb{X}}(t+1),\theta_t)\star\hat{x}(t)\right].
\end{equation}
From (\ref{e5.6}), (\ref{e5.9}) and (\ref{e5.10}) we deduce that $\mathcal{J}(x_0,{\mathbb{F}}_1(\cdot),\tilde{\mathbb{F}}_2(\cdot))\leq \mathcal{J}(x_0,\tilde{\mathbb{F}}_1(\cdot),\tilde{\mathbb{F}}_2(\cdot))$, for all $\mathbb{F}_1(\cdot)$ such that $({\mathbb{F}}_1(\cdot),\;\tilde{\mathbb{F}}_2(\cdot))\in \mathcal{F}_{ad}$. Further, we take:
\begin{equation}
\begin{cases}
u_1(t)=\tilde{F}_1(t,\theta_t)\check{x}(t)\\
u_2(t)={F}_2(t,\theta_t)\check{x}(t)
\end{cases}
\end{equation}
where $\check{x}(t)$, $t \geq 0$ is the solution of the system of type (\ref{e5.4'}) written for $F_1(t,\theta_t)$ replaced by $\tilde{F}_1(t,\theta_t)$. We obtain via (\ref{equa_App_B_1}) that:
\begin{equation}\label{e5.11}
\mathcal{J}(x_0,\tilde{\mathbb{F}}_1(\cdot),{\mathbb{F}}_2(\cdot))=x_0^T{\mathbb E}[\tilde{X}(t_0,\theta_{t_0})]x_0+{\mathbb E}\left[\sum_{t=t_0}^\infty \check{x}(t)\left(F_2(t,\theta_t)-\tilde{F}_2(t,\theta_t)\right)^T\mathbb{R}_{22}(t,\tilde{\mathbb{X}}(t+1),\theta_t)\star\check{x}(t)\right].
\end{equation}
From (\ref{e5.7}), (\ref{e5.9}) and (\ref{e5.11}) we obtain that $\mathcal{J}(x_0,\tilde{\mathbb{F}}_1(\cdot),\tilde{\mathbb{F}}_2(\cdot))\leq \mathcal{J}(x_0,\tilde{\mathbb{F}}_1(\cdot),{\mathbb{F}}_2(\cdot))$, for all $x_0\in \mathbb{R}^n$ and $\mathbb{F}_2(\cdot)$ such that $(\tilde{\mathbb{F}}_1(\cdot),\;{\mathbb{F}}_2(\cdot))\in \mathcal{F}_{ad}$. Hence, $(\tilde{\mathbb{F}}_1(\cdot),\;\tilde{\mathbb{F}}_2(\cdot))$ provides an equilibrium strategy via (\ref{e5.8}). Thus the proof is complete.\hfill$\blacksquare$
\end{proof}

\begin{rem}
Note that conditions for the existence of a bounded and stabilizing solution $\tilde{\mathbb{X}}(\cdot)$ of GDTRE (\ref{GRDE}) satisfying the sign conditions (\ref{e5.6}) and (\ref{e5.7}) are given by Theorem \ref{T5.6}.
\end{rem}

\subsection{The full information feedback case}
We consider here the case when the player 1 has access to the strategies $u_1(\cdot)\in\ell_{\tilde{\mathcal{H}}}^2\{0,\infty;\mathbb{R}^{m_1}\}$, while the player 2 has access to the strategies of the form:
\begin{equation}\label{e5.12}
u_2(t,u_1(\cdot))=\varphi(t,x(t),u_1(t),\theta_t),\; t\geq t_0
\end{equation}
where the functions $(t,\;x,\;u_1)\rightarrow \varphi(t,x,u_1,i)$ have the properties:
\begin{itemize}
\item [$\alpha$)] $\varphi(t,0,0,i)=0,\; \forall t\in \mathbb{Z}_+,\; i\in \mathfrak N$;
\item [$\beta$)] There exists $\lambda>0$ not depending upon $(t,x,u_1)$ such that $|\varphi(t,x',u_1',i)-\varphi(t,x'',u_1'',i)|\leq \lambda \left(|x'-x''|+|u_1'-u_1''|\right)$, $\forall x',x''\in \mathbb{R}^n$, $u_1',u_1''\in \mathbb{R}^{m_1}$, $t\in \mathbb{Z}_+$, $i\in \mathfrak N$.
\item [$\gamma$)] The solutions of the closed-loop system:
\begin{align}\label{e5.12'}
x(t+1)&=\left[A_0(t,\theta_t)x(t)+B_{01}(t,\theta_t)u_1(t)+B_{02}(t,\theta_t)\varphi(t,x(t),u_1(t),\theta_t)\right]dt\notag\\
&+\sum_{k=1}^rw_k(t)\left[A_k(t,\theta_t)x(t)+B_{k1}(t,\theta_t)u_1(t)+B_{k2}(t,\theta_t)\varphi(t,x(t),u_1(t),\theta_t)\right]
\end{align}
lie in $\ell_{\tilde{\mathcal{H}}}^2\{0,\infty;\mathbb{R}^{n}\}$ and additionally $\underset{t\rightarrow \infty}{\lim}{\mathbb E}\left[|x_{u_1}(t)|^2\right]=0$, $\forall u_1(\cdot)\in\ell_{\tilde{\mathcal{H}}}^2\{0,\infty;\mathbb{R}^{m_1}\}$, $x_0\in\mathbb{R}^n$.
\end{itemize}
We denote $\tilde{\tilde{\mathcal{U}}}_{ad}$ the set of the strategies of the form $(u_1(\cdot),u_2(\cdot,u_1(\cdot)))$ where $u_1(\cdot) \in \ell_{\tilde{\mathcal{H}}}^2\{0,\infty;\mathbb{R}^{m_1}\}$ and $u_2(\cdot,u_1(\cdot))$ is of type (\ref{e5.12}).

\begin{definition} We say that $(\tilde{\tilde{u}}_1(\cdot),\tilde{\tilde{u}}_2(\cdot,\tilde{\tilde{u}}_1(\cdot)))\in \tilde{\tilde{\mathcal{U}}}_{ad}$ is an equilibrium strategy with respect to the class of admissible strategies $\tilde{\tilde{\mathcal{U}}}_{ad}$ if:
\begin{equation}\label{e5.3"}
\mathcal{J}(x_0,u_1(\cdot),\tilde{\tilde{u}}_2(\cdot,u_1(\cdot)))\leq\mathcal{J}(x_0,\tilde{\tilde{u}}_1(\cdot),\tilde{\tilde{u}}_2(\cdot,\tilde{\tilde{u}}_1(\cdot)))\leq \mathcal{J}(x_0,\tilde{\tilde{u}}_1(\cdot),u_2(\cdot,\tilde{\tilde{u}}_1(\cdot)))
\end{equation}
$\forall u_1\in\ell_{\tilde{\mathcal{H}}}^2\{0,\infty;\mathbb{R}^{m}_1\}$ and $u_2(\cdot,u_1(\cdot))$ with the property that $(u_1(\cdot),\tilde{\tilde{u}}_2(\cdot,u_1(\cdot)))$ and $(\tilde{\tilde{u}}_1(\cdot),u_2(\cdot,\tilde{\tilde{u}}_1(\cdot)))$ are in ${\tilde{\tilde{\cal U}}}_{ad}$.
\end{definition}

\noindent One can see that the strategies of the form:
\begin{equation}\label{e5.13}
u_2(t,u_1(\cdot))=K(t,\theta_t)x(t)+W(t,\theta_t)u_1(t)
\end{equation}
are of type (\ref{e5.12}) if $t\rightarrow K(t,i):\mathbb{Z}_+\rightarrow \mathbb{R}^{m_2\times n}$, $t\rightarrow W(t,i):\mathbb{Z}_+\rightarrow \mathbb{R}^{m_2\times m_1}$ are bounded matrix valued sequences with the property that the linear system of type (\ref{sys_closed}) is ESMS.\\

\noindent Following similar arguments as in the proof of Lemma 5.2 one can states the following Lemma which will be used in the derivation of the main result of this section.

\begin{lem} Assume that the assumptions in Lemma 5.2 are fulfilled. Let $\tilde{\mathbb{X}}(\cdot):\mathbb{Z}_+\rightarrow \mathcal{S}_n^N$ be the bounded and stabilizing solution of the GDTRE (\ref{GRDE}) and $\tilde{F}(t,i)$ be the corresponding stabilizing feedback gain computed via (\ref{e2.8}). We set:
\begin{equation}\label{e5.14}
\tilde{\tilde{u}}_k(t)=\tilde{F}_k(t,\theta_t)\tilde{x}(t),\; k=1,2
\end{equation}
$\tilde{x}(t)$, $t\geq t_0$ being the solution of the closed-loop system 
\begin{align}\label{e5.closed}
x(t+1)&=\left[A_0(t,\theta_t)+B_{0}(t,\theta_t)\tilde{F}(t,\theta_t)\right]x(t)\notag\\
&+\sum_{k=1}^rw_k(t)\left[A_k(t,\eta_t)+B_{k}(t,\eta_t)\tilde{F}(t,\theta_t)\right]x(t)
\end{align}
satisfying $\tilde{x}(t_0)=x_0\in \mathbb{R}^n$.
Under these conditions the following hold:
\begin{itemize}
\item [i)] $\mathcal{J}(x_0,u_1(\cdot),u_2(\cdot))=x_0^T{\mathbb E} \left[\tilde{X}(t_0,\theta_0)\right]x_0+\mathcal{J}(0,u_1(\cdot)-\tilde{u}_1(\cdot),u_2(\cdot)-\tilde{u}_2(\cdot))$, $\forall u_k\in \ell_{\tilde{\mathcal{H}}}^2\{0,\infty;\mathbb{R}^{m_k}\}$ with the property that $(u_1(\cdot),\; u_2(\cdot))\in \mathcal{U}_{ad}$.
\item [ii)] $x_0^T{\mathbb E}\left[\tilde{X}(t_0,\theta_0)\right]x_0\leq \mathcal{J}(x_0,\tilde{\tilde{u}}_1(\cdot),u_2(\cdot)),\; \forall \;\; u_2\in \ell_{\tilde{\mathcal{H}}}^2\{0,\infty;\mathbb{R}^{m_2}\}$ with the property that $(\tilde{\tilde{u}}_1(\cdot),u_2(\cdot))\in \mathcal{U}_{ad}$.
\end{itemize}
\end{lem}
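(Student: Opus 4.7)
The plan is to exploit the completing-the-square identity already established in the proof of Theorem \ref{T6.1} (equation (\ref{equa_App_B_1})): for every admissible $u=(u_1^T,u_2^T)^T$ and every initial pair $(t_0,x_0)$,
\begin{equation*}
\mathcal{J}(x_0,u_1,u_2)=x_0^T{\mathbb E}[\tilde X(t_0,\theta_{t_0})]x_0+{\mathbb E}\left[\sum_{t=t_0}^\infty \bigl(u(t)-\tilde F(t,\theta_t)x(t)\bigr)^T\mathbb{R}(t,\tilde{\mathbb{X}}(t+1),\theta_t)\bigl(u(t)-\tilde F(t,\theta_t)x(t)\bigr)\right].
\end{equation*}
I would then apply this identity at two distinct base points and subtract, which is the natural analogue in the present setting of the reduction carried out in Theorem \ref{T6.1}.

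For part (i), let $x(\cdot)$ be the state of (\ref{e5.1}) driven by $(u_1,u_2)$ from $x_0$ and $\tilde x(\cdot)$ be the closed-loop trajectory (\ref{e5.closed}) from the same $x_0$. By linearity of (\ref{e5.1}), $\bar x:=x-\tilde x$ is the trajectory produced by the input $(u_1-\tilde{\tilde u}_1,u_2-\tilde{\tilde u}_2)$ starting from the null initial state. Since $\tilde{\tilde u}(t)=\tilde F(t,\theta_t)\tilde x(t)$, one has the pointwise algebraic identity $(u(t)-\tilde{\tilde u}(t))-\tilde F(t,\theta_t)\bar x(t)=u(t)-\tilde F(t,\theta_t)x(t)$. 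Applying the completing-the-square identity above to the pair $(0,\,u-\tilde{\tilde u})$ therefore reproduces exactly the quadratic remainder of the same identity applied to $(x_0,u)$, which gives the claimed decomposition.

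For part (ii), I would specialise (i) to $u_1=\tilde{\tilde u}_1$; the remainder collapses to $\mathcal{J}(0,0,u_2-\tilde{\tilde u}_2)$ and it is enough to show $\mathcal{J}(0,0,v)\geq 0$ for every admissible $v$. Writing $y$ for the trajectory from the null initial state driven by $(0,v)$, I would complete the square in the integrand:
\begin{equation*}
y^T M y+2y^T L_2 v+v^T R_{22} v = y^T\bigl(M-L_2 R_{22}^{-1}L_2^T\bigr)y+\bigl(v+R_{22}^{-1}L_2^T y\bigr)^T R_{22}\bigl(v+R_{22}^{-1}L_2^T y\bigr),
\end{equation*}
and conclude via assumption \textbf{H4)}: the first block is nonnegative by (\ref{sign2}), the second by (\ref{sign1}).

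The main obstacle, as is typical for infinite-horizon stochastic LQ problems, is not the algebra but the legitimacy of manipulating the infinite sums: one must verify that $u-\tilde{\tilde u}$ remains in $\ell^2_{\tilde{\mathcal H}}\{0,\infty;\mathbb{R}^{m}\}$ so that $\mathcal{J}(0,u_1-\tilde{\tilde u}_1,u_2-\tilde{\tilde u}_2)$ is meaningful, and that the tail term ${\mathbb E}[\tilde x(T+1)^T\tilde X(T+1,\theta_{T+1})\tilde x(T+1)]$ arising from the dynamic programming step vanishes as $T\to\infty$. Both facts follow from the ESMS-CI property of the closed-loop system (\ref{e5.closed}) (guaranteed by the stabilizing character of $\tilde{\mathbb X}(\cdot)$) together with the growth control afforded by Remark \ref{R2.2} and the boundedness of $\tilde{\mathbb X}(\cdot)$.
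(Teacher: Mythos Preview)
Your proof is correct. The paper does not give an explicit proof of this lemma; it simply says that one follows ``similar arguments as in the proof of Lemma~5.2'', which in context points to the finite-horizon decomposition proved in Lemma~C.1 of Appendix~C. Your argument is precisely the infinite-horizon analogue of that proof, streamlined by invoking the already-established identity~(\ref{equa_App_B_1}) rather than redoing the completing-the-square computation from scratch.

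The only mild difference is presentational: in Lemma~C.1 the paper splits the trajectory directly as $x_u=\hat x_u+x_{F_\tau}$ and recomputes the cost, whereas you apply (\ref{equa_App_B_1}) twice (at the base points $(x_0,u)$ and $(0,u-\tilde{\tilde u})$) and compare remainders via the identity $(u-\tilde{\tilde u})-\tilde F\bar x=u-\tilde F x$. These are the same computation in different clothing. Your treatment of part~(ii) --- reducing to $\mathcal{J}(0,0,v)\ge 0$ and completing the square against $R_{22}$ to invoke (\ref{sign1})--(\ref{sign2}) --- is exactly what the paper's terse ``via the assumption {\bf H4)}'' in the proof of Lemma~C.1~(ii) is hiding. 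Your closing paragraph on the admissibility of $u-\tilde{\tilde u}$ and the vanishing of the terminal term is the right way to pass from the finite-horizon identity of Lemma~C.1 to the infinite-horizon statement here, and is justified exactly as you say by ESMS-CI of (\ref{e5.closed}) and boundedness of $\tilde{\mathbb X}(\cdot)$.
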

\noindent The next result provides an equilibrium strategy of the zero-sum LQ differential game described by the system (\ref{e5.1}), the quadratic functional (\ref{e5.2}) and the class of admissible strategies $\tilde{\tilde{\mathcal{U}}}_{ad}$.

\begin{thm} Assume that the assumptions of Theorem \ref{T5.6} are fulfilled. 
Let $\tilde{\mathbb{X}}(\cdot)$ be the unique bounded and stabilizing solution of the GDTRE (\ref{GRDE}). Consider $\tilde{F}(t,i)$ being the corresponding stabilizing feedback gain computed as in (\ref{e2.8}). We set:
\begin{equation}\label{e5.15}
\begin{cases}
\tilde{K}(t,i)\triangleq \tilde{\mathbb{V}}_{22}^{-1}(t,i)\left(\begin{array}{cc}\tilde{\mathbb{V}}_{21}(t,i) & \tilde{\mathbb{V}}_{22}(t,i)\end{array}\right)\tilde{F}(t,i)\\
\tilde{W}(t,i)\triangleq-\tilde{\mathbb{V}}_{22}^{-1}(t,i)\tilde{\mathbb{V}}_{21}(t,i)
\end{cases}
\end{equation}
where:
\begin{equation}\label{factor}
\begin{cases}
\tilde{\mathbb{V}}_{21}(t,i)=\tilde{\mathbb{V}}_{22}^{-1}(t,i)\left(R_{12}(t,i)+\Pi_{312}(t)[\tilde{\mathbb{X}}(t)](i)\right)^T\\
\tilde{\mathbb{V}}_{22}(t,i)=\left(R_{22}(t,i)+\Pi_{322}(t)[\tilde{\mathbb{X}}(t)](i)\right)^{\frac{1}{2}}
\end{cases}
\end{equation}
Let $\tilde{\tilde{u}}_2(\cdot,u_1(\cdot))$ be defined by:
\begin{equation}\label{e5.16}
\tilde{\tilde{u}}_2(t,u_1(t))=\tilde{K}(t,\theta_t)x_{u_1}(t)+\tilde{W}(t,\theta_t)u_1(t)
\end{equation}
$x_{u_1}(t)$ being the solution of the initial value problem of type (\ref{equa3}) written for $K(t,\theta_t)$ replaced by $\tilde{K}(t,\theta_t)$ and $W(t,\theta_t)$ replaced by $\tilde{W}(t,\theta_t)$. Under these conditions the following hold:
\begin{itemize}
\item [i)] for each $u_1(\cdot)\in \ell_{\tilde{\mathcal{H}}}^2\{0,\infty;\mathbb{R}^{m_1}\}$, $(u_1(\cdot),\; \tilde{\tilde{u}}_2(\cdot,u_1(\cdot)))$ lies in $\tilde{\tilde{\mathcal{U}}}_{ad}$;
\item[ii)] if $\tilde{\tilde{u}}_1(\cdot)$ is given by (\ref{e5.14}) for $k=1$, then $(\tilde{\tilde{u}}_1(\cdot),\;\tilde{\tilde{u}}_2(\cdot,\tilde{\tilde{u}}_1(\cdot)))$ is an equilibrium strategy of the zero-sum LQ difference game described by (\ref{e5.1}), (\ref{e5.2}) and the class of admissible strategies $\tilde{\tilde{\mathcal{U}}}_{ad}$.
\end{itemize}
\end{thm}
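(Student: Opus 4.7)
The strategy is to combine the algebraic identities implicit in Remark~\ref{R4.2a} with the two completed-squares identities already established in the preceding Lemma, the only genuinely new ingredient being the membership of $(\tilde{\mathbb{K}}(\cdot),\tilde{\mathbb{W}}(\cdot))$ in $\mathcal{A}^{\Sigma}$.

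For part i) I would first show that the gains defined by (\ref{e5.15})--(\ref{factor}) belong to $\mathcal{A}^{\Sigma}$. This is the construction sketched in the first remark after Theorem~\ref{T5.6}: starting from the bounded stabilizing solution $\tilde{\mathbb{X}}(\cdot)$ produced there, a direct computation shows that $\tilde{\mathbb{X}}(\cdot)$ itself is the bounded stabilizing solution of the associated closed-loop GDTRE (\ref{Ric_closed}) for $(\mathbb{K},\mathbb{W})=(\tilde{\mathbb{K}},\tilde{\mathbb{W}})$, and the sign condition (\ref{sign_closed}) is supplied by the Schur-complement inequality (\ref{e2.3}) applied to $\tilde{\mathbb{X}}$. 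Consequently (\ref{sys_closed}) with $\mathbb{K}=\tilde{\mathbb{K}}$ is ESMS-CI and, under assumption c) of Theorem~\ref{T5.6}, ESMS by Theorem~3.6 in \cite{carte2010}; a standard perturbation estimate of the type used in the proof of Corollary~3.9 in \cite{carte2010} then yields $x_{u_1}\in\ell_{\tilde{\mathcal{H}}}^2\{0,\infty;\mathbb{R}^n\}$ and $\lim_{t\to\infty}\mathbb{E}[|x_{u_1}(t)|^2]=0$ for every $u_1\in\ell_{\tilde{\mathcal{H}}}^2\{0,\infty;\mathbb{R}^{m_1}\}$, which is the non-trivial property $\gamma$); properties $\alpha$) and $\beta$) are immediate from linearity of the feedback rule.

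For part ii) I would first compute the saddle value. When $u_1(t)=\tilde{F}_1(t,\theta_t)\tilde{x}(t)$, the closed-loop trajectory of (\ref{equa3}) under $(\tilde{\mathbb{K}},\tilde{\mathbb{W}})$ coincides with $\tilde{x}(t)$ from (\ref{e5.closed}), and the defining identities $\tilde{K}=\tilde{\mathbb{V}}_{22}^{-1}\tilde{\mathbb{V}}_{21}\tilde{F}_1+\tilde{F}_2$, $\tilde{W}=-\tilde{\mathbb{V}}_{22}^{-1}\tilde{\mathbb{V}}_{21}$ give $\tilde{\tilde{u}}_2(t,\tilde{\tilde{u}}_1(t))=\tilde{F}_2(t,\theta_t)\tilde{x}(t)$. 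Both deviations $u_k-\tilde{F}_k\tilde{x}$ vanish, so assertion i) of the preceding Lemma yields $\mathcal{J}(x_0,\tilde{\tilde{u}}_1,\tilde{\tilde{u}}_2(\cdot,\tilde{\tilde{u}}_1))=x_0^T\mathbb{E}[\tilde{X}(t_0,\theta_0)]x_0$. The right-hand inequality in (\ref{e5.3"}) is then immediate from assertion ii) of the preceding Lemma (property $\gamma$) guarantees admissibility of the relevant pair). For the left-hand inequality, I would combine the completed-squares identity (\ref{equa_App_B_1}) with the indefinite factorization of Remark~\ref{R4.2a}: writing $v_1=u_1-\tilde{F}_1 x_{u_1}$ and $v_2=\tilde{\tilde{u}}_2(\cdot,u_1)-\tilde{F}_2 x_{u_1}$, a one-line computation using (\ref{e5.15}) yields the key cancellation
\[
\tilde{\mathbb{V}}_{22}(t,\theta_t)\,v_2(t)+\tilde{\mathbb{V}}_{21}(t,\theta_t)\,v_1(t)=0,\qquad t\ge t_0,
\]
so the correction term in (\ref{equa_App_B_1}) collapses to $-\,\mathbb{E}\bigl[\sum_{t\ge t_0}|\tilde{\mathbb{V}}_{11}(t,\theta_t)v_1(t)|^2\bigr]\le 0$, which combined with the saddle-value computation above gives the left inequality.

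The main obstacle is the first step of part i), namely verifying that the candidate pair $(\tilde{\mathbb{K}},\tilde{\mathbb{W}})$ of (\ref{e5.15})--(\ref{factor}) actually belongs to $\mathcal{A}^{\Sigma}$ in the full sense of the definition (ESMS-CI of the unforced closed loop together with solvability of the associated Riccati equation with the strict sign on $R_{\mathbb{W}}+\Pi_{\mathbb{W}}[\tilde{\mathbb{X}}]$); once this is available, the equilibrium analysis in part ii) is essentially the algebraic identity that makes the indefinite quadratic integrand negative semidefinite under the choice of $(\tilde{K},\tilde{W})$.
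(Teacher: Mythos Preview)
Your treatment of part ii) matches the paper's almost line for line: the saddle value via (\ref{equa_App_B_1}), the identity $\tilde{\tilde{u}}_2(t,\tilde{\tilde{u}}_1(t))=\tilde{F}_2(t,\theta_t)\tilde{x}(t)$, the cancellation $\tilde{\mathbb V}_{22}v_2+\tilde{\mathbb V}_{21}v_1=0$ which leaves only the $-|\tilde{\mathbb V}_{11}v_1|^2$ term for the left inequality, and Lemma~6.2~ii) for the right inequality.

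For part i) you take a detour that the paper avoids. You aim first to prove $(\tilde{\mathbb K},\tilde{\mathbb W})\in\mathcal{A}^\Sigma$, and you correctly flag the ESMS-CI of the unforced closed loop (\ref{sys_closed}) with $\mathbb K=\tilde{\mathbb K}$ as the ``main obstacle''. But notice that this system is literally system (\ref{L3.4_equa6}) of Lemma~\ref{L5.7}: since $\tilde K=\tilde{\mathbb V}_{22}^{-1}(\tilde{\mathbb V}_{21}\ \tilde{\mathbb V}_{22})\tilde F$, one has $A_k+B_{k2}\tilde K=A_k+B_{k2}\tilde{\mathbb V}_{22}^{-1}\tilde{\mathbb V}_2\tilde F$. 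Lemma~\ref{L5.7} (already proven under the hypotheses of Theorem~\ref{T5.6}) gives ESMS of that system outright, so your obstacle dissolves without ever mentioning $\mathcal{A}^\Sigma$. The paper proceeds exactly this way: it plugs (\ref{e5.16}) into (\ref{e5.1}), invokes Lemma~\ref{L5.7} for ESMS of the autonomous part of the resulting system, and then cites a perturbation result (Theorem~3.2 of \cite{Dragan:14}, essentially the input-to-state argument behind Corollary~3.9 in \cite{carte2010}) to conclude $x_{u_1}\in\ell^2_{\tilde{\mathcal H}}$ and ${\mathbb E}[|x_{u_1}(t)|^2]\to 0$ for every $u_1\in\ell^2_{\tilde{\mathcal H}}$. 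Your route via $\mathcal{A}^\Sigma$ is not wrong---indeed the remark after Theorem~\ref{T5.6} confirms the membership---but it packages the same ESMS fact inside a stronger statement that is not needed here.
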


\begin{proof}
\begin{itemize}
\item [i)] By plugging (\ref{e5.16}) in (\ref{e5.1}) one obtains:
\begin{align}\label{e5.17}
x(t+1)&=\left[(A_0(t,\theta_t)+B_{02}(t,\theta_t)\tilde{K}(t,\theta_t))x(t)+(B_{01}(t,\theta_t)+B_{02}(t,\theta_t)\tilde{W}(t,\theta_t))u_1(t)\right]\notag\\
&+\sum_{k=1}^rw_k(t)\left[(A_k(t,\theta_t)+B_{k2}(t,\theta_t)\tilde{K}(t,\theta_t))x(t)+(B_{k1}(t,\theta_t)+B_{k2}(t,\theta_t)\tilde{W}(t,\theta_t))u_1(t)\right]
\end{align}
Hence, it follows from Lemma 5.7 that the autonomous system associated to (\ref{e5.17}) is ESMS.
Following similar steps as in the proof of Theorem 3.2 from \cite{Dragan:14} in the case of the system (\ref{e5.17}), we deduce that for each $u_1(\cdot) \in  \ell_{\tilde{\mathcal{H}}}^2\{0,\infty;\mathbb{R}^{m_1}\}$, the solution $x_{u_1}(\cdot)$ of the system (\ref{e5.17}) are in $ \ell_{\tilde{\mathcal{H}}}^2\{0,\infty;\mathbb{R}^{n}\}$ and $\underset{t\rightarrow\infty}{\lim}{\mathbb E}\left[|x_{u_1}(t)|^2\right]=0$, $\forall i\in \mathfrak N$. Therefore $(u_1(\cdot),\tilde{\tilde{u}}_2(\cdot,u_1(\cdot)))\in \tilde{\tilde{\mathcal{U}}}_{ad}$ for all $u_1(\cdot) \in \ell_{\tilde{\mathcal{H}}}^2\{0,\infty;\mathbb{R}^{m_1}\}$, that means that assertion i) is proved.
\item [ii)] Under the considered assumptions, it follows from the definition of the stabilizing feedback gains $\tilde{F}(\cdot,\cdot)$ that the system (\ref{e5.closed}) is ESMS. Hence, if $\tilde{\tilde{u}}_1(t)=\tilde{F}_1(t,\theta_t)\tilde{x}(t)$, then $\tilde{\tilde{u}}_1(\cdot) \in \ell_{\tilde{\mathcal{H}}}^2\{0,\infty;\mathbb{R}^{m_1}\}$. 
    From (\ref{e5.15}) and (\ref{e5.16}) one obtains that $\tilde{\tilde{u}}_2(t,\tilde{\tilde{u}}_1(t))=\tilde{F}_2(t,\theta_t)\tilde{x}(t),\;\forall t\in \mathbb{Z}_+$. 
    Also, from (\ref{equa_App_B_1}) together with the uniqueness of the solution of an initial value problem, we obtain that:
\begin{equation}\label{e5.18}
\mathcal{J}(x_0,\tilde{\tilde{u}}_1(\cdot),\tilde{\tilde{u}}_2(\cdot))=x_0^T{\mathbb E}[\tilde{X}(t_0,\theta_{t_0})]x_0
\end{equation}
$\forall x_0\in \mathbb{R}^n$. On the other hand, employing the factorization (\ref{factor}) and:
\begin{align}\label{e.72a}
\begin{cases}
&\tilde{\mathbb V}_{11}(t, i)=\left(-{\mathbb R}_{22}^{\sharp}(t,\tilde{X}(t,i)\right)^{\frac{1}{2}}\\
&{\mathbb R}_{22}^{\sharp}(t,\tilde{X}(t,i))=\left(R_{11}(t,i)+\Pi_{311}(t)[\tilde{\mathbb{X}}(t+1)](i)\right)-\tilde{\mathbb{V}}_{21}^T(t,i)\times \tilde{\mathbb{V}}_{21}(t,i)
\end{cases}
\end{align}
one can rewrite (\ref{equa_App_B_1}) in the form:
\begin{align}\label{e5.19}
&\mathcal{J}(x_0,u_1(\cdot),u_2(\cdot))=x_0^T{\mathbb E}[\tilde{X}(t_0,\theta_{t_0})]x_0\notag\\
&+{\mathbb E}\big[\sum_{t=t_0}^\infty\big(|\tilde{\mathbb{V}}_{21}(t,\theta_t)(u_1(t)-F_1(t,\theta_t)x_u(t))+\tilde{\mathbb{V}}_{22}(t,\theta_t)(u_2(t)-F_2(t,\theta_t)x_u(t))|^2\notag\\
&-|\tilde{\mathbb{V}}_{11}(t,\theta_t)(u_1(t)-F_1(t,\theta_t)x_u(t))|^2\big)\big].
\end{align}
Since in the case when $\tilde{\tilde{u}}_2(t)=u_2(t,u_1(t))$, $t\in \mathbb{Z}_+$ the solution $x_u(t)$ of (\ref{e5.1}) coincides with the solution of $x_{u_1}(t)$ of the system (\ref{e5.17}), we obtain from (\ref{e5.19}) that:
\begin{equation}\label{e5.20}
\mathcal{J}(x_0,u_1(\cdot),\tilde{\tilde{u}}_2(\cdot,u_1(\cdot)))=x_0^T{\mathbb E}[\tilde{X}(t_0,\theta_{t_0})]x_0- {\mathbb E}\big[\sum_{t=t_0}^\infty|\tilde{\mathbb{V}}_{11}(t,\theta_t)(u_1(t)-F_1(t,\theta_t)x_u(t))|^2\big]
\end{equation}
for all $u_1(\cdot) \in \ell_{\tilde{\mathcal{H}}}^2\{0,\infty;\mathbb{R}^{m_1}\}$. From (\ref{e5.18}) and (\ref{e5.20}) together with $\tilde{\tilde{u}}_2(t)=\tilde{\tilde{u}}_2(t,\tilde{\tilde{u}}_1(t))$ we obtain that: $\mathcal{J}(x_0,u_1(\cdot),\tilde{\tilde{u}}_2(\cdot,u_1(\cdot)))\leq \mathcal{J}(x_0,\tilde{\tilde{u}}_1(\cdot),\tilde{\tilde{u}}_2(\cdot,\tilde{\tilde{u}}_1(\cdot)))$, for all $u_1(\cdot) \in \ell_{\tilde{\mathcal{H}}}^2\{0,\infty;\mathbb{R}^{m_1}\}$ which confirms the validity of the first inequality from (\ref{e5.3"}).\\
\noindent On the other hand, from Lemma 6.2 ii) we obtain that\\
$\mathcal{J}(x_0,\tilde{\tilde{u}}_1(\cdot),\tilde{\tilde{u}}_2(\cdot,\tilde{\tilde{u}}_1(\cdot)))\leq \mathcal{J}(x_0,\tilde{\tilde{u}}_1(\cdot),u_2(\cdot,\tilde{\tilde{u}}_1(\cdot)))$ $\forall u_2(\cdot,\tilde{\tilde{u}}_1(\cdot))=\varphi(t,x(t),\tilde{\tilde{u}}_1(t),\theta_t)$ which confirms the validity of the second inequality from (\ref{e5.3"}). Thus the proof is complete. \hfill$\blacksquare$
\end{itemize}

\end{proof}
\section*{Appendix A.}
\noindent \textbf{Lemma A.1.}
If $\Gamma(t,i) = \left(
\begin{array}{c}
\Gamma_1(t,i) \\ \Gamma_2(t,i)
\end{array} \right) , \ t\in \mathbb{Z}_+ $ are gain matrices such that $\Gamma_j(t,i) \in \mathbb{R}^{m_j\times n}, j=1,2$, then the Riccati difference equation (\ref{GRDE}) solved by $\mathbb{X}(\cdot)$, may be rewritten in the form:
\begin{equation*}
X(t,i)=\mathcal{L}_\Gamma^*(t)[\mathbb{X}(t+1)](i)+\mathcal{M}_\Gamma(t,i)-\left(\Gamma(t,i)-F(t,i)\right)^T\left(R(t,i)+\Pi_3(t)[\mathbb{X}(t)](i)\right)\star
\end{equation*}
where:
\begin{equation}\label{eq.4.4bis}
\mathcal{M}_\Gamma(t,i)=\left(\begin{array}{cc}{\mathbb I}_n & \Gamma^T(t,i)\end{array}\right)\mathcal{M}(t,i)\star
\end{equation}
\begin{equation*}
F(t,i)=-\left(R(t,i)+\Pi_3(t)[\mathbb{X}(t)](i)\right)^{-1}\left(\Pi_2^T(t)[\mathbb{X}(t)](i)+L^T(t,i)\right)
\end{equation*}
and $\mathcal{L}_\Gamma^*(t)$ is the adjoint of the Lyapunov type operator defined by:
\begin{align}\label{eq.A.1.}
\mathcal{L}_\Gamma(t)[\mathbb{X}](i)&=\sum_{k=0}^r\sum_{j=1}^Np_t(j,i)\big(A_k(t,i)+B_k(t,i)\Gamma(t,i)\big)X(i)\big(A_k(t,i)+B_k(t,i)\Gamma(t,i)\big)^T,\quad1\leq i\leq N
\end{align}
\noindent The proof can be done by direct calculation. The details are omitted.

\section*{Appendix B.}
\noindent In this appendix, we will state and prove several preliminary results that relate the set $\mathcal{A}^{\Sigma}$ to the GDTRE (\ref{Ric_closed}) and the functional (\ref{cost_closed}) respectively. \\

\noindent \textbf{Proposition B.1.} Under the assumptions {\bf H1)} a) and {\bf H4)}, if $\mathcal{A}^{\Sigma}$ is not empty then for any $\left(\mathbb{K}(\cdot), \mathbb{W}(\cdot)\right)\in \mathcal{A}_{\Sigma}$ the unique bounded and stabilizing solution of the corresponding GDTRE (\ref{Ric_closed}) $\tilde{\mathbb{X}}_{\mathbb{KW}}(\cdot)$ verifies the sign conditions (\ref{e2.3}) and (\ref{e2.4}).\\

\noindent \textbf{Proof.} Let
$$ \left(
\begin{array}{cc}
\mathcal{G}_{11}(t,i) &\mathcal{ G}_{12}(t,i) \\
\mathcal{G}_{12}^T(t,i) & \mathcal{G}_{22}(t,i)
\end{array} \right) $$
be the partition of the matrix $R(t,i)+\Pi_3(t)[\tilde{\mathbb{X}}_{\mathbb{K}\mathbb{W}}(t+1)](i)$ such that $\mathcal{G}_{jj}(t,i)\in {\mathcal S}_{m_j}\,, \ j=1,2.$ With these notations, we have that
$\tilde{\mathbb{X}}_{\mathbb{KW}}(\cdot)$ verifies the sign conditions (\ref{e2.3}) and (\ref{e2.4})  if and only if $\mathcal{G}_{22}(t,i) \geq \delta_2 I_{m_2}$ and:
\begin{equation}\label{Eq_3.44}
\mathcal{G}_{11}(t) - \mathcal{G}_{12}(t,i)\,\mathcal{G}_{22}^{-1}(t,i)\,\mathcal{G}_{12}^T(t,i) \leq -\delta_1 I_{m_1} \,.
\end{equation}
First, it follows from Remark \ref{R4.1} ii) that for any $\left(\mathbb{K}(\cdot), \mathbb{W}(\cdot)\right)\in \mathcal{A}^{\Sigma}$ the unique bounded and stabilizing solution of the corresponding GDTRE (\ref{Ric_closed}) satisfies the sign condition $\mathcal{G}_{22}(t,i) \geq \delta_2 I_{m_2}$.\\
\noindent Furthermore, the property (\ref{sign_closed}) leads to:
\begin{align}\label{Eq_3.45}
&R_{\mathbb{W}}(t,i)+\Pi_{\mathbb{W}}(t)[\tilde{\mathbb{X}}_{\mathbb{K}\mathbb{W}}(t+1)](i)  \notag\\
& = \left[R_{12}(t,i) + \Pi_{312}(t)[\tilde{\mathbb{X}}_{\mathbb{K}\mathbb{W}}(t+1)](i)+W^T(t,i)\left(R_{22}(t,i)+\Pi_{322}(t)[\tilde{\mathbb{X}}_{\mathbb{K}\mathbb{W}}(t+1)](i)\right) \right] \notag\\
& \times \left[R_{22}(t,i) + \Pi_{322}(t)[\tilde{\mathbb{X}}_{\mathbb{K}\mathbb{W}}(t+1)](i) \right]^{-1}\star \leq -\varepsilon^2I.
\end{align}
We set:
\begin{equation}\label{equa1_Prop1}
\left(
\begin{array}{cc}
\mathcal{E}_{11}(t,i) & \mathcal{E}_{12}(t,i) \\
\mathcal{E}_{12}^T(t,i) & \mathcal{E}_{22}(t,i)
\end{array} \right)
=
\left(
\begin{array}{cc}
I_{m_1} & W^T(t,i) \\
O & I_{m_2}
\end{array} \right) \,
\left(
\begin{array}{cc}
\mathcal{G}_{11}(t,i) & \mathcal{G}_{12}(t,i) \\
\mathcal{G}_{12}^T(t,i) & \mathcal{G}_{22}(t,i)
\end{array} \right) \,
\star
\end{equation}
One sees that (\ref{Eq_3.45}) may be rewritten as:
\begin{equation}\label{Eq_3.46}
\mathcal{E}_{11}(t,i) - \mathcal{E}_{12}(t,i)\, \mathcal{E}_{22}^{-1}(t,i)\,\mathcal{E}_{12}(t,i) \leq -\varepsilon^2I
\end{equation}
for all $t\in \mathbb{Z}_+$.
Using (\ref{equa1_Prop1}), one can show by direct calculations that:
\begin{equation}\label{equa2_Prop1}
\mathcal{E}_{11}(t,i) - \mathcal{E}_{12}(t,i)\, \mathcal{E}_{22}^{-1}(t,i)\,\mathcal{E}_{12}(t,i)=\mathcal{G}_{11}(t,i) - \mathcal{G}_{12}(t,i)\,\mathcal{G}_{22}^{-1}(t,i)\,\mathcal{G}_{12}^T(t,i)
\end{equation}
\noindent Hence the proof is complete.\hfill$\blacksquare$\\

\noindent \textbf{Proposition B.2.} Under the assumptions {\bf H1)} a), {\bf H2)} and {\bf H3)} a), for each $\left(\mathbb{K}(\cdot), \mathbb{W}(\cdot)\right)\in \mathcal{A}^{\Sigma}$ there exists $\gamma=\gamma(\mathbb{K},\mathbb{W})>0$ such that
$\mathcal{J}_{\mathbb{KW}}(0,0,u_1)\leq -\gamma {\mathbb E}\left[\sum_{t=0}^\infty |u_1(t)|^2\right]$ for all $u_1\in \ell_{\tilde{\mathcal{H}}}^2\{0,\infty;\mathbb{R}^{m_1}\}$.\\

\noindent \textbf{Proof.}  Let $\left(\mathbb{K}(\cdot), \mathbb{W}(\cdot)\right)$ be arbitrary but fixed and let $\tilde{\mathbb{X}}_{\mathbb{KW}}(\cdot)$ be the unique bounded and stabilizing solution of (\ref{Ric_closed}).\\
\noindent Using conditional expectation properties and equation (\ref{Ric_closed}) one gets:
\begin{align}\label{Proposition3.2.equa1}
&\mathcal{J}_{\mathbb{KW}}(t_0,x_0,u_1)=x_0^T\tilde{X}_{\mathbb{K}\mathbb{W}}(t_0,i)x_0\notag\\
&+\sum_{t=t_0}^{\tau} {\mathbb E}\left[\left(\begin{array}{cc}x_{u_1}^T(t) & u_1^T(t)\end{array}\right)\left(\begin{array}{cc}M_{\mathbb{K}\mathbb{W}}^T(t,\theta_t)S_{\mathbb{K}\mathbb{W}}^{-1}(t,\theta_t)\star & M_{\mathbb{K}\mathbb{W}}^T(t,\theta_t) \\\star & S_{\mathbb{K}\mathbb{W}}(t,\theta_t)\end{array}\right)\star\right]
\end{align}
where:
\begin{equation}
\begin{cases}
S_{\mathbb{K}\mathbb{W}}(t,\theta_t)=R_\mathbb{W}(t,\theta_t)+\Pi_\mathbb{W}(t)[{\tilde{\mathbb{X}}}_{\mathbb{K}\mathbb{W}}(t+1)](\theta_t)\\
M_{\mathbb{K}\mathbb{W}}^T(t,\theta_t)=\Pi_{\mathbb{K}\mathbb{W}}(t)[\tilde{{\mathbb{X}}}_{\mathbb{K}\mathbb{W}}(t+1)](\theta_t)+L_\mathbb{KW}(t,\theta_t).
\end{cases}
\end{equation}
By direct calculation, one can rewrite (\ref{Proposition3.2.equa1}) as:
\begin{align}
\label{Proposition3.2.equa2}
&\mathcal{J}_{\mathbb{KW}}(t_0,x_0,u_1)=x_0^T\tilde{X}_{\mathbb{K}\mathbb{W}}(t_0,i)x_0+\notag\\
&+\sum_{t=t_0}^\tau {\mathbb E}
\left[\left(u_1^T(t)+x_{u_1}^T(t)M_{\mathbb{K}\mathbb{W}}^T(t,\theta_t)S_{\mathbb{K}\mathbb{W}}^{-1}(t,\theta_t)\right)S_{\mathbb{K}\mathbb{W}}(t,\theta_t)\star\right]
\end{align}
It follows from the equality above and (\ref{sign_closed}) that:
\begin{equation}
\mathcal{J}_{\mathbb{KW}}(0,0,u_1)\leq 0
\end{equation}
for every $u_1\in \ell_{\tilde{\mathcal{H}}}^2\{0,\infty;\mathbb{R}^{m_1}\}$.
We will now show that:
\begin{eqnarray}\label{e3.10c}
\sup\{\mathcal{J}_{\mathbb{KW}}(0,0,u_1); u_1\in \ell_{\tilde{\mathcal{H}}}^2\{0,\infty;\mathbb{R}^{m_1}\}, \|u_1\|=1\}<0.
\end{eqnarray}
To this end, we will use a proof by contradiction. Assume that (\ref{e3.10c}) is not true. This means that there exists a sequence $\{u_1^j\}_{j\geq 0}\subset \ell_{\tilde{\mathcal{H}}}^2\{0,\infty;\mathbb{R}^{m_1}\}$ with $\|u_1^j\| =1$ and $\underset{j\to \infty}{\lim}\mathcal{J}_{\mathbb{KW}}(0,0, u_1^j)=0$.
It follows from (\ref{Proposition3.2.equa2}) and (\ref{sign_closed}) that:
\begin{eqnarray}\label{e3.10d}
\lim_{j\to \infty} {\mathbb E}\left[\sum_{t=0}^{\infty} |f_j(t)|^2\right]=0
\end{eqnarray}
where $f_j(t)=u_1^j(t)-\tilde F_{\mathbb{KW}}(t,\theta_t) x_{u_1^j}(t)$ with $x_{u_1^j}(t)$ being the solution of (\ref{equa3}) determined by the input $u_1^j$ and having the initial value $x_j(0)=0$ and:
\begin{equation}
\label{Stab_gain}
F_{\mathbb{KW}}(t,i)=-S_{\mathbb{K}\mathbb{W}}^{-1}(t,i)M_{\mathbb{K}\mathbb{W}}(t,i).
\end{equation}
\noindent Now, we rewrite the equation (\ref{equa3}) satisfied by $x_{u_1^j}(\cdot)$ as:
\begin{align}\label{e3.10e}
&x_{u_1^j}(t+1)=\left(A_{0K}(t,\theta_t)+B_{0W}(t, \theta_t)\tilde F_{\mathbb{KW}}(t,\eta_t)\right)x_{u_1^j}(t)+B_{0W}(t, \theta_t)f_j(t)\notag\\
&+\sum_{k=1}^r w_k(t)\left[\left(A_{kK}(t, \theta_t)+B_{kW}(t, \theta_t)\tilde F_{\mathbb{KW}}(t,\eta_t)\right)x_{u_1^j}(t)+B_{kW}(t, \theta_t)f_j(t)\right].
\end{align}
By taking into account that $\tilde {\mathbb{F}}_{\mathbb{KW}}(\cdot)$ is the stabilizing feedback gain, we obtain from Corollary 3.9 in \cite{carte2010} and (\ref{e3.10d}) that
\begin{eqnarray}\label{e3.10f}
\lim_{j\to \infty} {\mathbb E}\left[\sum_{t=0}^{\infty} |x_{u_1^j}(t)|^2\right]=0
\end{eqnarray}
Combining (\ref{e3.10d}) and (\ref{e3.10f}) we deduce that
$$\lim_{j\to \infty} {\mathbb E}\left[\sum_{t=0}^{\infty} |u_1^j(t)|^2\right]=0$$
This  contradicts the fact that $\|u_1^j\|=1$. Hence, (\ref{e3.10c}) is true. Finally, the conclusion is obtained from (\ref{e3.10c}) using the equality
$$\mathcal{J}_{\mathbb{KW}}\left(0,0,\frac{u_1}{\|u_1\|}\right)= \frac{1}{\|u_1\|^2}\mathcal{J}_{\mathbb{KW}}(0,0,u_1)$$
for all $0\neq u_1\in \ell_{\tilde{\mathcal{H}}}^2\{0,\infty;\mathbb{R}^{m_1}\}$.\hfill$\blacksquare$\\

\noindent For each $\left(\mathbb{K}(\cdot), \mathbb{W}(\cdot)\right) \in \mathcal{A}^{\Sigma}$ and $\tau\in \mathbb{Z}_+$ we consider $\mathbb{X}_{\mathbb{K}\mathbb{W}}^\tau(t)=\left(X_{\mathbb{K}\mathbb{W}}^\tau(t,1),  \cdots , X_{\mathbb{K}\mathbb{W}}^\tau(t,N)\right)$ the solution of (\ref{Ric_closed}) satisfying the condition $X_{\mathbb{K}\mathbb{W}}^\tau(\tau+1,i)=0$, $1\leq i\leq N$ and $\mathcal{I}_{\mathbb{K}\mathbb{W}}(\tau)\subset\left[0,  \tau+1\right]$ is the maximal interval on which $\mathbb{X}_{\mathbb{K}\mathbb{W}}^\tau(\cdot)$ is defined.\\

\noindent \textbf{Proposition B.3.}
For all $(\mathbb{K}(\cdot),\mathbb{W}(\cdot))\in \mathcal{A}^{\Sigma}$, the following hold:
\begin{itemize}
\item [i)] $\mathcal{I}_{\mathbb{K}\mathbb{W}}(\tau)=\left[0,  \tau+1\right]$, for all $\tau \in \mathbb{Z}_+$.
\item [ii)] $0\leq X_{\mathbb{K}\mathbb{W}}^{\tau}(t,i)\leq \tilde{X}_{\mathbb{K}\mathbb{W}}(t,i)$,
for all $0\leq t\leq \tau+1$, $i\in \mathfrak{N}$.
\item [iii)] $R_\mathbb{W}(t,i)+\Pi_\mathbb{W}(t)[{\mathbb{X}}_{\mathbb{K}\mathbb{W}}^\tau(t+1)](i)\leq -\xi I_{m_1}$, for some positive scalar $\xi$, $(t,i)\in \left[0, \tau+1\right]\times \mathfrak{N},\;\tau\in \mathbb{Z}_+$.
\end{itemize}

\noindent \textbf{Proof.} First, using Lemma A.1 (in Appendix A) applied to the Riccati equation (\ref{Ric_closed}) with ${\Gamma}(t,i)={F}_{\mathbb{KW}}^\tau(t,i)$, $t\in \mathcal{I}_{\mathbb{K}\mathbb{W}}(\tau),\; i\in \mathfrak{N}$, where ${F}_{\mathbb{KW}}^\tau(t,i)$ is computed as in (\ref{Stab_gain}) with $\tilde{X}_{\mathbb{KW}}(t,i)$ replaced by ${X}_{\mathbb{KW}}^\tau(t,i)$, one shows that $t\longrightarrow \tilde{\mathbb{X}}_{\mathbb{KW}}(t)-\mathbb{X}_{\mathbb{KW}}^\tau(t):\mathcal{I}_{\mathbb{K}\mathbb{W}}(\tau)\longrightarrow \mathcal{S}_n^N$ is a solution of the Lyapunov type differential equation:
\begin{equation}\label{Prop3.3_eq.1}
\tilde{\mathbb{X}}_{\mathbb{KW}}(t)-\mathbb{X}_{\mathbb{KW}}^\tau(t)=
\left(\mathcal{L}_\Gamma^{\mathbb{KW}}(t)\right)^*\left[\tilde{\mathbb{X}}_{\mathbb{KW}}(t+1)-\mathbb{X}_{\mathbb{KW}}^\tau(t+1)\right]+\mathbb{H}^\tau(t),\; t\in \mathcal{I}_{\mathbb{K}\mathbb{W}}(\tau)
\end{equation}
where $\left(\mathcal{L}_\Gamma^{\mathbb{KW}}(t)\right)^*$ is the adjoint operator of the Lyapunov type operator defined by:
\begin{align}\label{eq.A.1.}
\mathcal{L}_\Gamma^{\mathbb{KW}}(t)[\mathbb{X}](i)&=\sum_{k=0}^r\sum_{j=1}^Np_t(j,i)\big(A_{kK}(t,i)+B_{kW}(t,i)\Gamma(t,i)\big)X(i)\big(A_{kK}(t,i)+B_{kW}(t,i)\Gamma(t,i)\big)^T,
\end{align}
$1\leq i\leq N$, and:
\begin{equation}\label{Prop3.3_eq.2}
\mathbb{H}^\tau(t,i)=-\left({F}_{\mathbb{KW}}^\tau(t,i)-\tilde{{F}}_{\mathbb{KW}}(t,i)\right)\left(R_\mathbb{W}(t,i)+\Pi_\mathbb{W}(t)[{\tilde{\mathbb{X}}}_{\mathbb{K}\mathbb{W}}(t)](i)\right)\star,\; t\in \mathcal{I}_{\mathbb{K}\mathbb{W}}(\tau)
\end{equation}
It follows from (\ref{sign_closed}) that $\mathbb{H}^\tau(t,i)\geq 0$ for all $t\in \mathcal{I}_{\mathbb{K}\mathbb{W}}(\tau)$, $i\in \mathfrak{N}$.

\noindent Hence using the fact that $\tilde{X}_{\mathbb{KW}}(\tau+1,i)-X_{\mathbb{KW}}^\tau(\tau+1,i)=\tilde{X}_{\mathbb{KW}}(\tau+1,i)\geq0$
one can shows by induction that:
\begin{equation}\label{Prop3.3_eq.3}
X_{\mathbb{KW}}^\tau(\tau,i)\leq \tilde{X}_{\mathbb{KW}}(\tau,i)
\end{equation}
for all $t\in \mathcal{I}_{\mathbb{K}\mathbb{W}}(\tau)$, $i\in \mathfrak{N}$.
Now, if one interprets equation (\ref{Ric_closed}) satisfied by $\mathbb{X}_{\mathbb{KW}}^\tau(\cdot)$ as a generalized Lyapunov equation with nonnegative free term, one obtains:
\begin{equation}\label{Prop3.3_eq.4}
X_{\mathbb{KW}}^\tau(\tau,i)\geq 0
\end{equation}
for all $t\in \mathcal{I}_{\mathbb{K}\mathbb{W}}(\tau)$, $i\in \mathfrak{N}$ (see Chapter 3 in \cite{carte2010} for further details).
Hence i) is proved from (\ref{Prop3.3_eq.3}) and (\ref{Prop3.3_eq.4}). Assertion ii) is a consequence of property i) and (\ref{Prop3.3_eq.3}), (\ref{Prop3.3_eq.4}). Assertion iii) follows from (\ref{sign_closed}), (\ref{Prop3.3_eq.3}) and the property i). The proof is completed.\hfill$\blacksquare$\\

\noindent \textbf{Remark B.1.}  If $\mathbb{K}(\cdot)\in \mathcal{A}_{0}^{\Sigma}$ and $\tau\in \mathbb{Z}_+$ we consider $\mathbb{X}_{\mathbb{K}}^\tau(t)=\left(X_{\mathbb{K}}^\tau(t,1), \cdots, X_{\mathbb{K}}^\tau(t,N)\right)$ the solution of (\ref{Ric_closed}) satisfying the condition $X_{\mathbb{K}}^\tau(\tau+1,i)=0$, $1\leq i\leq N$ and $\mathcal{I}_{\mathbb{K}}(\tau)\subset\left[0,  \tau+1\right]$ the maximal interval on which $\mathbb{X}_{\mathbb{K}}^\tau(\cdot)$ is defined. One can specializes in a straightforward way the results obtained in Proposition B.3 to $\mathbb{X}_{\mathbb{K}}^\tau(t)$.\\

\noindent Let us now consider the following quadratic functional on finite horizon time:
\begin{equation}\label{functional_tau}
\mathcal{J}_{\mathbb{KW}}(t_0,\tau,x_0,i;u_1)={\mathbb E}\left[\sum_{t=t_0}^\tau\left(\begin{array}{c}x_{u_1}(t) \\u_1(t)\end{array}\right)^T\left(\begin{array}{cc}M_\mathbb{K}(t,\theta_t) & L_{\mathbb{KW}}(t,\theta_t) \\\star & R_\mathbb{W}(t,\theta_t)\end{array}\right)\star\Big |\theta_{t_0}=i\right]
\end{equation}
$0\leq t_0<\tau$, $i\in \mathfrak{N}$, $x_0\in \mathbb{R}^n$, $u_1\in \ell_{\tilde{\mathcal{H}}}^2\{t_0,\tau;\mathbb{R}^{m_1}\}$.\\

\noindent \textbf{Proposition B.4.}
Under assumptions {\bf H1)}, {\bf H2)} and {\bf H3)} a), if $\left(\mathbb{K}(\cdot),  \mathbb{W}(\cdot)\right)\in \mathcal{A}^{\Sigma}$, then there exist positive constants $\alpha_0$, $\alpha_1$ such that:
\begin{equation}\label{Prop3.4_eq.1}
\tilde{\mathcal{J}}_{\mathbb{KW}}(t_0,\tau,x_0;u_1)\leq \alpha_0|x_0|^2-\alpha_1{\mathbb E}\left[\sum_{t=t_0}^\tau|u_1(t)|^2\right]
\end{equation}
for all $\tau>0$, $u_1\in \ell_{\tilde{\mathcal{H}}}^2\{t_0,\tau;\mathbb{R}^{m_1}\}$, $x_0\in \mathbb{R}^n$, where:
\begin{equation}\label{Prop3.4_eq.2}
\tilde{\mathcal{J}}_{\mathbb{KW}}(t_0,\tau,x_0;u_1)=\sum_{i=1}^N\pi_{t_0}(i){\mathcal{J}}_{\mathbb{KW}}(t_0,\tau,x_0,i;u_1).
\end{equation}

\noindent \textbf{Proof.} Let ${\mathbb Z}^\tau(t)=(Z^\tau(t,1),  \cdots, Z^\tau(t,N))$  be the solution of the linear difference equation:
\begin{equation}\label{Lyap_closed}
{Z}(t,i)=\Pi_\mathbb{K}(t)[\mathbb{Z}(t+1)](i)+M_\mathbb{K}(t,i)
\end{equation}
satisfying the condition $Z^\tau(\tau+1,i)=0$, $1\leq i\leq N$. Since (\ref{Lyap_closed}) is a linear difference equation it follows that $\mathbb{Z}^\tau(\cdot)$ is well defined for every $t\leq \tau+1$. On the other hand, the system (\ref{sys_closed}) being ESMS-CI, we deduce (under {\bf H1)} b)) that there exists $\gamma>0$ (not depending upon $\tau,\; t,\; i$) such that:
\begin{equation}\label{Lyap_closed'}
0\leq{Z}^\tau(t,i)\leq \gamma I_n
\end{equation}
for all $\tau>0$, $t\leq \tau+1$, $i\in \mathfrak{N}$. Using conditional expectation properties and invoking (\ref{Lyap_closed}) and (\ref{Prop3.4_eq.2}) one gets:
\begin{align}\label{Prop3.4_eq.3}
\tilde{\mathcal{J}}_{\mathbb{KW}}(t_0,\tau,x_0;u_1)&=\sum_{i=1}^N\pi_{t_0}(i)x_0^T{Z}^\tau(t_0,i)x_0+\notag\\
&+\sum_{t=t_0}^\tau {\mathbb E}\left[\left(\begin{array}{cc}x^T(t;x_0,u_1) & u_1^T(t)\end{array}\right)\left(\begin{array}{cc}0 & \mathcal{L}_{\mathbb{K}\mathbb{W}}(t,\theta_t) \\\star & \mathcal{R}_{\mathbb{K}\mathbb{W}}(t,\theta_t)\end{array}\right)\star\right]
\end{align}
where:
\begin{equation}
\begin{cases}
\mathcal{R}_{\mathbb{K}\mathbb{W}}(t,\theta_t)=R_\mathbb{W}(t,\theta_t)+\Pi_\mathbb{W}(t)[{\mathbb{Z}}^\tau(t+1)](\theta_t)\\
\mathcal{L}_{\mathbb{K}\mathbb{W}}(t,\theta_t)=\Pi_{\mathbb{K}\mathbb{W}}(t)[{\mathbb{Z}}^\tau(t+1)](\theta_t)+L_\mathbb{KW}(t,\theta_t)
\end{cases}
\end{equation}
for all $\tau>0$, $u_1\in \ell_{\tilde{\mathcal{H}}}^2\{t_0,\tau;\mathbb{R}^{m_1}\}$, $x_0\in \mathbb{R}^n$ where $x(t;x_0,u_1)$ is the solution of (\ref{equa3}) satisfying the initial condition $x(t_0;x_0,u_1)=x_0$. It follows from the splitting $x(t;x_0,u_1)=x(t;x_0,0)+x(t;0,u_1)$ that:
\begin{align}\label{Prop3.4_eq.4}
\tilde{\mathcal{J}}_{\mathbb{KW}}(t_0,\tau,x_0;u_1)=\sum_{i=1}^N\pi_{t_0}(i)x_0^T{Z}^\tau(t_0,i)x_0+
\tilde{\mathcal{J}}_{\mathbb{KW}}(t_0,\tau,0;u_1)+2{\mathbb E}\left[\sum_{t=t_0}^\tau x(t;x_0,0)\mathcal{L}_{\mathbb{K}\mathbb{W}}(t,\theta_t)u_1(t)\right]
\end{align}
for all $\tau>0$, $u_1\in \ell_{\tilde{\mathcal{H}}}^2\{t_0,\tau;\mathbb{R}^{m_1}\}$, $x_0\in \mathbb{R}^n$.
Using Proposition B.2 one can show that there exists $\rho>0$ such that:
\begin{equation}
\label{Prop3.4_eq.5}
\tilde{\mathcal{J}}_{\mathbb{KW}}(t_0,\tau,0;u_1)\leq-\rho {\mathbb E}\left[\sum_{t=t_0}^\tau|u_1(t)|^2\right]
\end{equation}
Hence we get:
\begin{align}\label{Prop3.4_eq.6}
\tilde{\mathcal{J}}_{\mathbb{KW}}(t_0,\tau,x_0;u_1)\leq \gamma|x_0|^2-\rho {\mathbb E}\left[\sum_{t=t_0}^\tau|u_1(t)|^2\right] +2{\mathbb E}\left[\sum_{t=t_0}^\tau x(t;x_0,0)\mathcal{L}_{\mathbb{K}\mathbb{W}}(t,\theta_t)u_1(t)\right]
\end{align}
for all $\tau>0$, $u_1\in \ell_{\tilde{\mathcal{H}}}^2\{t_0,\tau;\mathbb{R}^{m_1}\}$, $x_0\in \mathbb{R}^n$.
Note also that for every $\lambda>0$ we have:
\begin{align}\label{Prop3.4_eq.7}
2x(t;x_0,0)\mathcal{L}_{\mathbb{K}\mathbb{W}}(t,\theta_t)u_1(t)\leq \lambda|u_1(t)|^2+\lambda^{-1}x^T(t;x_0,0)\mathcal{L}_{\mathbb{K}\mathbb{W}}(t,\theta_t)\mathcal{L}_{\mathbb{K}\mathbb{W}}^T(t,\theta_t)x(t;x_0,0).
\end{align}
Since $t\rightarrow \mathcal{L}_{\mathbb{K}\mathbb{W}}(t,\theta_t)$ is bounded, there exists $\gamma_1>0$ such that $0\leq \mathcal{L}_{\mathbb{K}\mathbb{W}}(t,\theta_t)\mathcal{L}_{\mathbb{K}\mathbb{W}}^T(t,\theta_t)\leq \gamma_1I_n$ for all $t\in \mathbb{Z}_+$. It follows that:
\begin{align}\label{Prop3.4_eq.8}
\tilde{\mathcal{J}}_{\mathbb{KW}}(t_0,\tau,x_0;u_1)\leq \gamma|x_0|^2-(\rho-\lambda) {\mathbb E}\left[\sum_{t=t_0}^\tau|u_1(t)|^2\right] +\gamma_1\lambda^{-1}{\mathbb E}\left[\sum_{t=t_0}^\tau |x(t;x_0,0)|^2\right].
\end{align}
Because system (\ref{sys_closed}) is ESMS-CI, we have that: ${\mathbb E}\left[\sum_{t=t_0}^\tau |x(t;x_0,0)|^2\right]\leq \gamma_2|x_0|^2$ for all $x_0\in \mathbb{R}^n$ and for some $\gamma_2>0$ which does not depend upon $x_0$. Hence it follows from (\ref{Prop3.4_eq.8}) that:
\begin{align}\label{Prop3.4_eq.9}
\tilde{\mathcal{J}}_{\mathbb{KW}}(t_0,\tau,x_0;u_1)\leq (\gamma+\gamma_1\gamma_2\lambda^{-1})|x_0|^2-(\rho-\lambda) {\mathbb E}\left[\sum_{t=t_0}^\tau|u_1(t)|^2\right]
\end{align}
for all $\tau>0$, $u_1\in \ell_{\tilde{\mathcal{H}}}^2\{t_0,\tau;\mathbb{R}^{m_1}\}$, $x_0\in \mathbb{R}^n$.
The proof ends by taking $0<\lambda<\rho$. \hfill$\blacksquare$

\section*{Appendix C.}
In this Appendix, we will state and prove several technical results that will be used in the proof process for the existence conditions of the bounded and stabilizing solution of (\ref{GRDE}). These results rely partly on the material given in Appendix B.\\

\noindent Let:
\begin{equation}\label{equa_F}
\begin{cases}
F_\tau(t,i)=-\left(R(t,i)+\Pi_3(t)[\mathbb{X}_\tau(t+1)](i)\right)^{-1}\left(\Pi_2^T(t)[\mathbb{X}_\tau(t+1)](i)+L^T(t,i)\right)\\
F_1^\tau(t,i)=\left(\begin{array}{cc}{\mathbb I}_{m_1} & \mathbf{0}\end{array}\right)F_\tau(t,i)\\
F_2^\tau(t,i)=\left(\begin{array}{cc}\mathbf{0} & {\mathbb I}_{m_2}\end{array}\right)F_\tau(t,i)
\end{cases}
,t\in \mathcal{I}(\tau),\; 1\leq i\leq N
\end{equation}
Let $t_0\in{\cal I}(\tau), t_0\leq \tau$ be fixed and  $x_{F_\tau}(t,t_0,x_0)$ be the solution of the initial value problem (IVP):
\begin{equation}
\begin{cases}
x(t+1)=\left[A_0(t,\theta_t)+B_0(t,\theta_t)F_\tau(t,\theta_t)+\sum_{k=1}^rw_k(t)\left(A_k(t,\theta_t)+B_k(t,\theta_t)F_\tau(t,\theta_t)\right)\right]x(t)\\
x(t_0)=x_0,\; x_0\in \mathbb{R}^n
\end{cases}
\end{equation}
and:
\begin{equation}
\begin{cases}
{\tilde{u}}_1^\tau(t)=F_1^\tau(t,\theta_t)x_{F_\tau}(t,t_0,x_0)\\
{\tilde{u}}_2^\tau(t)=F_2^\tau(t,\theta_t)x_{F_\tau}(t,t_0,x_0).
\end{cases}
\end{equation}
Obviously ${\tilde{u}}_1^\tau\in\ell_{\tilde{\mathcal{H}}}^2\{t_0,\tau;\mathbb{R}^{m_1}\}$ and ${\tilde{u}}_2^\tau\in\ell_{\tilde{\mathcal{H}}}^2\{t_0,\tau;\mathbb{R}^{m_2}\}$, $t_0\in \mathcal{I}(\tau),\;t_0<\tau$.\\

\noindent If $u_2\in\ell_{\tilde{\mathcal{H}}}^2\{t_0,\tau;\mathbb{R}^{m_2}\}$, $t_0\in \mathcal{I}(\tau),\;t_0<\tau$, $x_{F_1^\tau}(t,t_0,x_0;u_2)$ is the solution of the IVP:
\begin{equation}
\begin{cases}
x(t+1)=\left[A_0(t,\theta_t)+B_{01}(t,\theta_t)F_1^\tau(t,\theta_t)+\sum_{k=1}^rw_k(t)\left(A_k(t,\theta_t)+
B_{k1}(t,\theta_t)F_1^\tau(t,\theta_t)\right)\right]x(t)+\\
\quad\quad\quad+\left[B_{02}(t,\theta_t)+\sum_{k=1}^rw_k(t)B_{k2}(t,\theta_t)\right]u_2(t)\\
x(t_0)=x_0
\end{cases}
\end{equation}
with $u_1^\tau(t,u_2)=F_1^\tau(t,\theta_t)x_{F_1^\tau}(t,t_0,x_0;u_2)$. Obviously $u_1^\tau(\cdot,u_2)\in\ell_{\tilde{\mathcal{H}}}^2\{t_0,\tau;\mathbb{R}^{m_1}\}$.\\

\noindent Consider the cost:
\begin{equation}
\mathcal{J}(t_0,\tau,x_0,i,u_1,u_2)=\sum_{t=t_0}^\tau {\mathbb E}\left[\left(\begin{array}{c}x_{\bar{u}}(t) \\\bar{u}(t)\end{array}\right)^T\left(\begin{array}{cc}M(t,\theta_t) & L(t,\theta_t) \\\star & R(t,\theta_t)\end{array}\right)\star\Big|\theta_{t_0}=i\right]
\end{equation}
$t_0\in \mathcal{I}(\tau)$, $t_0\leq \tau$, $u_1\in\ell_{\tilde{\mathcal{H}}}^2\{t_0,\tau;\mathbb{R}^{m_1}\}$ and $u_2\in\ell_{\tilde{\mathcal{H}}}^2\{t_0,\tau;\mathbb{R}^{m_2}\}$, where $\bar{u}(t)=\left(\begin{array}{cc}u_1^T(t) & u_2^T(t)\end{array}\right)^T$ and $x_{\bar{u}}(t)$ is the solution of (\ref{equa1}) corresponding to $u_1$ and $u_2$ with $x(t_0)=x_0$.\\

\noindent \textbf{Lemma C.1.} Under the assumptions {\bf H1)} a), {\bf H2)}, {\bf H3)} a) and {\bf H4)}, for all $\tau>0$, $t_0\in \mathcal{I}(\tau)$, $t_0\leq \tau$, $x_0\in \mathbb{R}^n$, $i\in \mathfrak{N}$, $u_1\in \ell_{\tilde{\mathcal{H}}}^2\{t_0,\tau;\mathbb{R}^{m_1}\}$ and $u_2\in\ell_{\tilde{\mathcal{H}}}^2\{t_0,\tau;\mathbb{R}^{m_2}\}$ we have:
\begin{itemize}
\item [i)] $\mathcal{J}(t_0,\tau,x_0,i,u_1,u_2)=x_0^TX_\tau(t_0,i)x_0+\mathcal{J}(t_0,\tau,0,i,u_1-\tilde{u}_1^\tau,u_2-\tilde{u}_2^\tau)$
\end{itemize}
\begin{itemize}
\item [ii)]  for all $\tau>0$, $t_0\in \mathcal{I}(\tau)$, $x_0\in \mathbb{R}^n$, $i\in \mathfrak{N}$ and $u_2\in\ell_{\tilde{\mathcal{H}}}^2\{t_0,\tau;\mathbb{R}^{m_2}\}$ we have:
$$x_0^TX_\tau(t_0,i)x_0\leq \mathcal{J}(t_0,\tau,x_0,i,\tilde{u}_1^\tau,u_2).$$
\end{itemize}

\noindent \textbf{Proof.} Let $\hat{x}_{\bar{u}}(t,t_0,0;\bar{u}-\tilde{\bar{u}}^\tau)$ be the solution of the IVP:
\begin{equation}
\begin{cases}
x(t+1)=\left[A_0(t,\theta_t)+\sum_{k=1}^rw_k(t)A_k(t,\theta_t)\right]x(t)+\left[B_0(t,\theta_t)+\sum_{k=1}^rw_k(t)B_k(t,\theta_t)\right]\left(\bar{u}(t)-\tilde{\bar{u}}^\tau(t)\right)\\
x(t_0)=0
\end{cases}
\end{equation}
with ${\tilde{\bar{u}}}^\tau(t)=\left(\begin{array}{cc}(\tilde{u}_1^\tau)^T(t) & (\tilde{u}_2^\tau)^T(t)\end{array}\right)^T$. Since $x_{\bar{u}}(t)=\hat{x}_{\bar{u}}(t)+x_{F_\tau}(t)$ and using conditional expectation properties together with relation (\ref{GRDE}), one obtains:
\begin{align}\label{Lemma3.1.equa1}
&\mathcal{J}(t_0,\tau,x_0,i,u_1,u_2)=x_0^TX_\tau(t_0,i)x_0\notag\\
&+\sum_{t=t_0}^\tau {\mathbb E}\left[[(\hat{x}_{\bar{u}}(t)+x_{F_\tau}(t))^T \quad  (u_1^T(t)\quad u_2^T(t))]\left(\begin{array}{cc}U_{\tau}^T(t,\theta_t)S_{\tau}^{-1}(t,\theta_t)\star & U_{\tau}^T(t,\theta_t) \\\star & S_{\tau}(t,\theta_t)\end{array}\right)\star\Big|\theta_{t_0}=i\right]
\end{align}
where:
\begin{equation}
\begin{cases}
S_{\tau}(t,\theta_t)=R(t,\theta_t)+\Pi_3(t)\left[\mathbb{X}_{\tau}(t+1)\right](\theta_t)\\
U_{\tau}^T(t,\theta_t)=\Pi_{2}(t)[\mathbb{X}_{\tau}(t+1))](\theta_t)+L(t,\theta_t).
\end{cases}
\end{equation}
By direct calculation, one can shows that (\ref{Lemma3.1.equa1}) can be rewritten as follows:
\begin{align}
&\mathcal{J}(t_0,\tau,x_0,i,u_1,u_2)=x_0^TX_{\tau}(t_0,i)x_0\notag\\
&+\sum_{t=t_0}^\tau {\mathbb E}\left[[\hat{x}_{\bar{u}}^T(t)\quad (\bar{u}(t)-\tilde{\bar{u}}^\tau(t))^T]\left(\begin{array}{cc}U_{\tau}^T(t,\theta_t)S_{\tau}^{-1}(t,\theta_t)\star & U_{\tau}^T(t,\theta_t) \\\star & S_{\tau}(t,\theta_t)\end{array}\right)\star\Big|\theta(t_0)=i\right]\notag\\
&=x_0^TX_{\tau}(t_0,i)x_0+\mathcal{J}(t_0,\tau,0,i,u_1-\tilde{u}_1^\tau,u_2-\tilde{u}_2^\tau)
\end{align}
hence i) is proved.\\
\noindent From i) it follows that $\mathcal{J}(t_0,\tau,x_0,\tilde{u}_1^\tau,u_2)=x_0^TX_\tau(t_0,i)x_0+\mathcal{J}(t_0,\tau,0,i,0,u_2-\tilde{u}_2^\tau)$. Assertion ii) follows directly from the fact that $\mathcal{J}(t_0,\tau,0,i,0,u_2-\tilde{u}_2^\tau)\geq0$ (via the assumption  ${\bf H4)}$).
This completes the proof. \hfill$\blacksquare$\\

\noindent \textbf{Lemma C.2.} Assume:
\begin{itemize}
\item[a)] The assumptions in Lemma C.1. are fulfilled.
\item[b)] ${\cal A}^{\Sigma}$ is not empty.
\end{itemize}
Under these  assumptions, we have:
\begin{equation}
\mathcal{J}(t_0,\tau,x_0,i,u_1,u_2^{\mathbb{K}\mathbb{W}})\leq x_0^TX^\tau_{\mathbb{K}\mathbb{W}}(t_0,i)x_0
\end{equation}
for all $\tau>0$, $t_0\in \mathcal{I}(\tau)$, $x_0\in \mathbb{R}^n$, $i\in \mathfrak{N}$, $u_1\in \ell_{\tilde{\mathcal{H}}}^2\{t_0,\tau;\mathbb{R}^{m_1}\}$ and $(\mathbb{K}(\cdot),\mathbb{W}(\cdot))\in \mathcal{A}^{\Sigma}$.\\

\noindent \textbf{Proof.} Let $x_{\mathbb{K}\mathbb{W}}(t,t_0,x_0;u_1)$ be the solution of the IVP:
\begin{equation}
\begin{cases}
x(t+1)=\left[A_{0{\mathbb K}}(t,\theta_t)+\sum_{k=1}^rw_k(t)A_{k{\mathbb K}}(t,\theta_t)\right]x(t)+\left[B_{0{\mathbb W}}(t,\theta_t)+\sum_{k=1}^rw_k(t)B_{k{\mathbb W}}(t,\theta_t)\right]u_1(t)\\
x(t_0)=x_0.
\end{cases}
\end{equation}
Using conditional expectation properties and equation (\ref{Ric_closed}) one gets:
\begin{align}\label{Lemma3.2.equa1}
&\mathcal{J}(t_0,\tau,x_0,i,u_1,u_2^{\mathbb{K}\mathbb{W}})=x_0^TX_{\mathbb{K}\mathbb{W}}^\tau(t_0,i)x_0\notag\\
&+\sum_{t=t_0}^\tau {\mathbb E}\left[\left[\begin{array}{cc}x_{\mathbb{K}\mathbb{W}}^T(t) & u_1^T(t)\end{array}\right]\left(\begin{array}{cc}(M_{\mathbb{K}\mathbb{W}}^\tau)^T(t,\theta_t)(S_{\mathbb{K}\mathbb{W}}^\tau)^{-1}(t,\theta_t)\star & (M_{\mathbb{K}\mathbb{W}}^\tau)^T(t,\theta_t) \\\star & S_{\mathbb{K}\mathbb{W}}^\tau(t,\theta_t)\end{array}\right)\star\Big|\theta(t_0)=i\right]
\end{align}
where:
\begin{equation}
\begin{cases}
S_{\mathbb{K}\mathbb{W}}^\tau(t,\theta_t)=R_\mathbb{W}(t,\theta_t)+\Pi_\mathbb{W}(t)[{\mathbb{X}}_{\mathbb{K}\mathbb{W}}^\tau(t+1)](\theta_t)\\
(M_{\mathbb{K}\mathbb{W}}^\tau)^T(t,\theta_t)=\Pi_{\mathbb{K}\mathbb{W}}(t)
[{\mathbb{X}}_{\mathbb{K}\mathbb{W}}^\tau(t+1)](\theta_t)+L_\mathbb{KW}(t,\theta_t).
\end{cases}
\end{equation}
By direct calculation, one can rewrite (\ref{Lemma3.2.equa1}) as:
\begin{align}
&\mathcal{J}(t_0,\tau,x_0,i,u_1,u_2^{\mathbb{K}\mathbb{W}})=x_0^TX_{\mathbb{K}\mathbb{W}}^\tau(t_0,i)x_0+\notag\\
&+\sum_{t=t_0}^\tau {\mathbb E}\left[\left(u_1^T(t)+x_{\mathbb{K}\mathbb{W}}^T(t)(M_{\mathbb{K}\mathbb{W}}^\tau)^T(t,\theta_t)
(S_{\mathbb{K}\mathbb{W}}^\tau)^{-1}(t,\theta_t)\right)S_{\mathbb{K}\mathbb{W}}^\tau(t,\theta_t)\star\Big|\theta(t_0)=i\right].
\end{align}
The conclusion follows from Proposition B.3.  iii).\hfill$\blacksquare$\\

\noindent \textbf{Lemma C.3.} Assume that the  assumptions of Lemma C.2. are fulfilled.
Then for any $\tau>0$, $t_0\in{\cal I}(\tau), i\in{\mathfrak N}$ the following hold:
\begin{itemize}
\item [i)]
\begin{equation}\label{e99a}
0\leq X_\tau(t_0,i)\leq \tilde{X}_{\mathbb KW}(t_0,i)
\end{equation}
where $\tilde{\mathbb{X}}_{\mathbb{KW}}(\cdot)$ is the unique bounded and stabilizing solution of the Riccati difference equation (\ref{Ric_closed}).\\
\item [ii)]
\begin{align}\label{e80a}
&R_{11}(t_0-1,i)+\Pi_{311}(t_0-1)[{\mathbb X}_{\tau}(t_0)](i)-(R_{12}(t_0-1,i)+\Pi_{312}(t_0-1)[{\mathbb X}_{\tau}(t_0)](i))\times\notag\\
&\times(R_{22}(t_0-1,i)+\Pi_{322}(t_0-1)[{\mathbb X}_{\tau}(t_0)](i))^{-1}\star \leq -\tilde \nu_1 I_{m_1}
\end{align}
\begin{eqnarray}\label{e80b}
R_{22}(t_0-1,i)+\Pi_{322}(t_0-1)[{\mathbb X}_{\tau}(t_0)](i)\geq \tilde \nu_2 I_{m_2}
\end{eqnarray}
where $\tilde\nu_j>0, j=1,2$ are constant not depending upon $t_0, \tau, i$.
\end{itemize}

\noindent \textbf{Proof.} (i) First, from LemmaC.1.  ii), Lemma C.2. and Proposition B.3. ii), it follows that $X_\tau(t_0,i)\leq \tilde{X}_{\mathbb KW}(t_0,i)$ for every $t_0\in \mathcal{I}(\tau)$, $i\in \mathfrak{N}$.
We have now to prove that $X_\tau(t_0,i)\geq 0$ for every $t_0\in \mathcal{I}(\tau)$, $i\in \mathfrak{N}$. It follows from $X_\tau(t_0,i)\leq \tilde{X}_{\mathbb KW}(t_0,i)$, $(t_0,i)\in \mathcal{I}(\tau)\times \mathfrak{N}$, and (\ref{sign_closed}) that:
\begin{equation}\label{L3.3_equa1}
R_\mathbb{W}(t_0,i)+\Pi_\mathbb{W}(t_0)[{\mathbb{X}}_\tau(t_0)](i)\leq -\xi \mathbb{I}
\end{equation}
for some positive scalar $\xi$, $(t_0,i)\in \mathcal{I}(\tau)\times \mathfrak{N}$. Note also that (\ref{L3.3_equa1}) can be rewritten as:\\
\begin{equation}\label{L3.3_equa2}
\left(\begin{array}{cc}I & W^T(t_0,i)\end{array}\right)\left(R(t_0,i)+\Pi_3(t_0)[\mathbb{X}_\tau(t_0+1)](i)\right)\star\leq -\xi \mathbb{I}.
\end{equation}
Now using Lemma A.1 in Appendix A with:
\begin{equation}\label{L3.3_equa3}
\begin{cases}
\Gamma_1(t_0,i)=0\\
\Gamma_2(t_0,i)=F_2(t_0,i)-W(t_0,i)F_1(t_0,i)
\end{cases}
;\;(t_0,i)\in \mathcal{I}(\tau)\times \mathfrak{N}
\end{equation}
one shows that the Riccati difference equation (\ref{GRDE}) can be rewritten as:
\begin{align}\label{L3.3_equa4}
X(t_0,i)&=\sum_{k=0}^r(A_k(t_0,i)+B_k(t_0,i)\Gamma(t_0,i))^T\Xi(t_0)[\mathbb{X}(t_0+1)](i)\star+\mathcal{M}_{\Gamma}(t_0,i)-\notag\\
&-F_1^T(t_0,i)\left[\begin{array}{cc}I & W^T(t_0,i)\end{array}\right]\left(R(t_0,i)+\Pi_3(t_0)[\mathbb{X}_\tau(t_0+1)](i)\right)\left[\begin{array}{c}I \\W(t_0,i)\end{array}\right]F_1(t_0,i)
\end{align}
for $(t_0,i)\in \mathcal{I}(\tau)\times \mathfrak{N}$. It follows from the parametrization in (\ref{L3.3_equa3}) that:
\begin{align}\label{L3.3_equa5}
\mathcal{M}_{\Gamma}(t_0,i)=M(t_0,i)-L_2(t_0,i)R_{22}^{-1}(t_0,i)\star+\left(\Gamma_2(t_0,i)+R_{22}^{-1}(t_0,i)L_2^T(t_0,i)\right)R_{22}(t_0,i)\star
\end{align}
The conclusion follows by induction using assumption \textbf{H4}), (\ref{L3.3_equa2}) and (\ref{L3.3_equa4}).

(ii) From (\ref{sign1}) and (\ref{e99a}) it follows that
$$R_{22}(t_0-1,i)+\Pi_{322}(t_0-1)[{\mathbb X}_{\tau}(t_0)](i)\geq R_{22}(t_0-1,i)\geq \rho_2 I_{m_2}$$
for all $\tau>0, t_0\in{\cal I}(\tau), t_0\geq 1, i\in{\mathfrak N}$. Thus we have obtained that (\ref{e80b}) is satisfied for $\tilde \nu_2=\rho_2>0$.
On the other hand, from (i) we deduce that
\begin{eqnarray}\label{e80c}
R(t_0-1,i)+\Pi_3(t_0-1)[{\mathbb X}_{\tau}(t_0)](i)\leq R(t_0-1,i)+\Pi_3(t_0-1)[\tilde {\mathbb X}_{\mathbb KW}(t_0)](i).
\end{eqnarray}
Employing Corollary 2.6 from \cite{Freiling:03} in the case of the matrices from (\ref{e80c}) together with Proposition B.1. we obtain that (\ref{e80a}) is satisfied with $\nu_1=\varepsilon^2$ given in (\ref{Eq_3.45}).
\hfill$\blacksquare$


\begin{thebibliography}{99}

\bibitem{abu} H. Abou-Kandil, G. Freiling, V. Ionescu and G. Jank, Matrix Riccati equations in
Control Systems Theory, \textit{Birhauser, Basel}, 2003.

\bibitem{SV:Automatica} 
S. Aberkane and V. Dragan, $H_\infty$ filtering of periodic Markovian jump systems: Application to filtering with communication constraints, {\it Automatica}, 48 (12), 3151--3156, 2012.

\bibitem{samirieee}
S. Aberkane, Bounded real lemma for nonhomogeneous Markovian jump linear systems, {\em IEEE Transactions on Automatic Control}, 58 (3), 797--801, 2013.

\bibitem{SV} S. Aberkane and V. Dragan, Robust stability and robust stabilization of a class of discrete-time time-varying linear stochastic systems, {\em SIAM Journal on Control and Optimization}, 34 (8), 831--847, 2015.

\bibitem{Damm:02} T. Damm, Rational matrix equations in stochastic control, \textit{Lecture notes in control and information sciences, Springer}, 2002.

\bibitem{carte2010} V. Dragan, T. Morozan and A. M. Stoica, Mathematical Methods in Robust Control of Discrete-Time Linear Stochastic Systems, \textit{Springer, New York}, 2010.

\bibitem{aosr2010}
V. Dragan, T. Morozan, Robust stability and robust stabilization of discrete-
time linear stochastic systems, \textit{Annals of the Academy of Romanian Scientists, Series on
Mathematics and its Applications}, 2 (2), 141--170, 2010.

\bibitem{Dragan:14} V. Dragan, Robust stabilization of discrete-time time-varying
linear systems with Markovian switching and nonlinear parametric uncertainties, {\it International Journal of Systems Science}, 45 (7), 1508--1517, 2014.

\bibitem{Dragan:18}  V.~Dragan, S. Aberkane and T. Morozan, On the bounded and stabilizing solution of a generalized Riccati differential equation arising in connection with a zero-sum linear quadratic stochastic differential game, {\em Submitted}, 2018.

\bibitem{Freiling:03} G. Freiling and A. Hochhaus, Properties of the solutions of rational matrix difference equations, {\it Advances in difference equations, IV, Comput. Math. Appl.}, 45, 1137-1154, 2003.

\bibitem{Halanay:93}
A. Halanay, and V. Ionescu, Time-varying discrete linear systems, \textit{Birkhäuser}, 1994.

\bibitem{ma}
H. J. Maa, W. Zhang, and T. Hou, Infinite horizon $H_2$/$H_\infty$ control for discrete-time time-varying Markov jump systems with multiplicative noise, \textit{Automatica}, 48 (7), pp. 1447--1454, 2012.

\bibitem{mou} M.McAsey and L. Mou, Generalized Riccati equations arising in stochastic games, {\it Linear Algebra and its Appl.}, 416 (2006) 710-–723.

\bibitem{morozan1998}
T. Morozan, Parametrized Riccati equations for controlled linear discrete-time
systems  with Markov perturbations, \textit{Revue Roumaine de Math\'ematique Pures et Appliqu\'ees}, 43 (7-8), 761--777, 1998.


\bibitem{ungdrmo}
V. M. Ungureanu, V. Dragan and T. Morozan, Global solutions of a class of discrete-
time backward nonlinear equations on ordered Banach spaces with applications to Riccati
equations of stochastic control, \textit{Optimal Control, Applications and Methods}, 34 (2), 164--
190, 2013.

\bibitem{Yu:2015} Z. Yu, An Optimal Feedback Control-Strategy Pair for Zero-sum Linear-Quadratic Stochastic Differential Game: The Riccati Equation Approach, {\it SIAM journal on control and optimization}, 53 (4), 2015, pp. 2141--2167.

\end{thebibliography}
\end{document}